\newtheorem{thm}{Theorem}
\newtheorem{defi}{Definition}
\newtheorem{coro}{Corollary}
\newtheorem{prop}{Proposition}
\newtheorem{lemma}{Lemma}
\theoremstyle{remark}
\newtheorem{ex}{Example}
\newtheorem{rem}{Remark}
\def\povm{\mathsf M}
\def\Le{\mathcal L}
\def\Ae{\mathcal A}
\def\Oe{\mathcal O}
\def\Ve{\mathcal V}
\def\Ce{\mathcal C}
\def\Ee{\mathcal E}\def\Fe{\mathcal F}
\def\Te{\mathcal T}
\def\states{\mathfrak S}
\def\Se{\mathcal S}
\def\Pe{\mathcal P}
\def\Be{\mathcal B}
\def\De{\mathcal D}
\def\Tr{\mathrm{Tr}\,}
\def\ptr{\mathrm{Tr}}
\def\<{\langle}
\def\>{\rangle}
\def\diag{\mathbb{D}}
\title{Randomization theorems for quantum channels}
\author{Anna Jen\v cov\'a\footnote{jenca@mat.savba.sk}\\ Mathematical Institute, Slovak Academy of Sciences\\
\v Stef\'anikova 49, Bratislava, Slovakia}
\date{}
\begin{document}
\maketitle
\abstract{The classical randomization criterion is an important result of statistical decision theory. Recently, a quantum analogue has been proposed, giving equivalent conditions for two sets of quantum states,  ensuring existence of a quantum channel mapping one set close to the other, in $L_1$-distance.   In the present paper, we extend these concepts in several ways. First, sets of states are replaced by channels and randomization is performed by either post- or pre-composition by another channel.  The $L_1$-distance is replaced by the diamond norm. Secondly, the maps are not required to be completely positive, but positivity is given by an admissible  family of convex cones. It is shown that the randomization theorems, generalizing both quantum and classical randomization criteria, can be proved in the framework of base section norms, including the diamond norm and its dual. The theory of such norms is developed in the Appendix.  }
\section{Introduction}

The classical randomization criterion \cite{strasser1985statistics,torgersen1991comparison} for statistical experiments is an important result of statistical decision theory. 
 It makes a link between \emph{comparison of experiments} in terms of payoffs of decision rules for 
decision problems and the $L_1$-distance of their  \emph{randomizations}. In particular, the Blackwell-Sherman-Stein (BSS) \cite{blackwell1951comparison, sherman1951theorem, stein1951comparison} theorem gives 
equivalent conditions for one experiment to be a randomization of the other, that is, for existence of a stochastic mapping (Markov kernel) transforming one experiment into the other.

Quantum versions of these results were recently studied by several authors, \cite{buscemi2012comparison, matsumoto2010randomization,shmaya2005comparison,jencova2012comparison}. 
A quantum statistical experiment is a parametrized (usually finite) family  of density operators on some (usually finite dimensional) Hilbert space. Stochastic mappings are  replaced  by completely positive trace preserving maps, 
or \emph{channels}, but other choices are possible, e.g. positive trace preserving maps.  A quantum  version of the BSS theorem was obtained by Buscemi  \cite{buscemi2012comparison}, and also by Shmaya \cite{shmaya2005comparison} for the so-called quantum information structures, consisting of a pair of Hilbert spaces and a bipartite state. This framework contains also quantum experiments, which are naturally identified with certain separable bipartite states.   The results of Shmaya  were
 reformulated for comparison of quantum channels by Chefles in \cite{chefles2009quantum}.

On the other hand, Matsumoto \cite{matsumoto2010randomization} introduces a natural generalization of classical decision problems to quantum ones, where sets of decisions with 
payoff (or loss) funtions are replaced by  Hilbert spaces with a family of positive payoff operators and decision rules 
 are given by channels. In this setting, 
 a quantum randomization criterion was proved, using the minimax theorem similarly as in the classical case (see e.g. \cite{strasser1985statistics}).

Existence of channels mapping a set of quantum states to another given set is an important problem also 
outside the theory of comparison of statistical experiments. Maybe the first result of this kind is by Alberti and Uhlmann \cite{aluh1980problem}, 
where pairs of qubit states were 
considered. More general situations were studied e.g. in \cite{cjw2004ontheexistence}, recently in \cite{hlps2012physical,hjrw2012extending}.

In the present work, we introduce the theory of comparison and randomization theorems  for quantum channels. First, we identify quantum statistical experiments with classical-to-quantum (cq-) channels and note that 
 randomizations correspond to compositions with channels (post-processings). We then show that the quantum randomization criterion
 proved in \cite{matsumoto2010randomization} can be formulated in terms of the diamond norm  and its dual 
 norm introduced and studied in \cite{gutoski2012measure, jencova2013base}. In fact, it turns out that it is a consequence of  duality of 
these norms.  

These concepts are
extended to the case when experiments, randomizations and decision rules are  replaced by general 
positive trace preserving maps (for simplicity also called channels), where positivity is given by some 
(family of) proper cones satisfying certain 
invariance properties.   We define post-processing deficiency of one channel with respect to the other and
prove a variant of the randomization theorem for post-processings.  
 We show the relation to pointwise deficiency, which is deficiency for pairs of experiments obtained by applying either channels to statistical experiments in the input algebra. 
If the maps are completely positive, we prove that the purely quantum deficiency is equivalent to classical deficiency
when the two channels are tensored by a suitable fixed channel, thus reformulating and extending the results of 
\cite{shmaya2005comparison} and \cite{buscemi2012comparison}. Combining this with pointwise deficiency, we get the formulation of Chefles.
A similar theory is obtained for pre-processings.

Quantum statistical  experiments (cq-channels) and quantum-to-classical (qc-) channels are discussed separately.
For experiments, we obtain the randomization criterion of \cite{matsumoto2010randomization}, but extended to the case when more general 
positive maps are considered. Perhaps the most important result here is putting the theory into the convenient 
 framework of base section norms and their duality. The qc-channels can be identified with positive operator valued measures (POVMs), that describe quantum mesurements. It is shown that pre- and post-processing deficiency is closely related to pre- and post-processing cleanness of the POVMs, defined in \cite{bkdpw2005clean}. 

Admissible families of cones in spaces of Hermitian linear maps are defined in the preliminary section \ref{sec:preliminary}, together with the corresponding pairs of norms $\|\cdot\|_\diamond$ and $\|\cdot\|^\diamond$. A general theory of base sections in partially ordered vector spaces and the corresponding norms is developed in the  Appendix.

\section{Preliminaries}\label{sec:preliminary}
Throughout the paper, all Hilbert spaces and C*-algebras are finite dimensional.
If $H$ is a Hilbert space, we denote  $d_H:=\dim(H)$ and fix an orthonormal basis $\{|i_H\>, i_H=1\dots d_H\}$  in $H$. 
We will denote the algebra of linear operators on $H$ by  $B(H)$, the set of positive operators in $B(H)$ by $B(H)^+$ and 
 the real vector space of Hermitian elements in $B(H)$ by $B_h(H)$. 
 
Any C*-algebra $\Ae$ will be represented as $\Ae=\bigoplus_k B(H_k)\subseteq B(H)$  for some Hilbert spaces $H_k$ and $H=\oplus_k H_k$.  In particular, an $n$-dimensional commutative C*-algebra will be identified with the algebra $\diag_n$ generated by  the projections $\{|i_H\>\<i_H|, i=1,\dots,n\}$ for some $n$-dimensional Hilbert space $H$. We denote $\Ae^+=\Ae\cap B(H)^+$ and  $\Ae_h=\Ae\cap B_h(H)$. Moreover, we denote by $E_\Ae$ the trace preserving conditional expectation of 
$B(H)$ onto $\Ae$, determined by
\[
\Tr ab=\Tr E_\Ae(a)b,\qquad a\in B(H),\ b\in \Ae.
\]

\subsection{Spaces of Hermitian maps}

Let $\Le(H,K)$ denote the real linear space of linear maps $B_h(H)\to B_h(K)$, then $\Le(H,K)$ can be identified with the
space  of 
Hermitian linear maps $B(H)\to B(K)$.
For $\phi\in \Le(H,H)$ we  define 
\[
s(\phi)=\sum_{i,j} \<i_H,\phi(|i_H\>\<j_H|)j_H\>.
\]
It is easy to see that $s$ defines a  linear functional $s: \Le(H,H)\to \mathbb R$. The next lemma shows that $s$ has tracelike properties with respect to composition of maps.
 
\begin{lemma}\label{lemma:s}
For all $\phi\in \Le(H,K)$, $\psi\in \Le(K,H)$, $s(\psi\circ\phi)=s(\phi\circ\psi)$.
\end{lemma}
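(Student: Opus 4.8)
The plan is to reduce the identity to the elementary fact that, for linear maps between finite-dimensional vector spaces, the trace of a composition is cyclic. As a preliminary, I would use the identification of $\Le(H,K)$ with Hermitian-preserving complex-linear maps $B(H)\to B(K)$ noted above, so that it is legitimate to evaluate $\phi$ and $\psi$ on the (non-Hermitian) matrix units $|i_H\>\<j_H|$ and $|k_K\>\<l_K|$. Set the ``matrix entries''
\[
\phi^{ij}_{kl}:=\<k_K,\phi(|i_H\>\<j_H|)l_K\>,\qquad \psi^{kl}_{ij}:=\<i_H,\psi(|k_K\>\<l_K|)j_H\>,
\]
and recall that every $X\in B(K)$ expands as $X=\sum_{k,l}\<k_K,Xl_K\>\,|k_K\>\<l_K|$, with the analogous expansion in $B(H)$.

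The computation itself is then immediate. Expanding $\phi(|i_H\>\<j_H|)\in B(K)$ in the matrix units of $B(K)$ and using linearity of $\psi$ gives
\[
s(\psi\circ\phi)=\sum_{i,j}\<i_H,\psi(\phi(|i_H\>\<j_H|))j_H\>=\sum_{i,j,k,l}\phi^{ij}_{kl}\,\psi^{kl}_{ij},
\]
and the symmetric computation, expanding $\psi(|k_K\>\<l_K|)\in B(H)$ and using linearity of $\phi$, gives
\[
s(\phi\circ\psi)=\sum_{k,l}\<k_K,\phi(\psi(|k_K\>\<l_K|))l_K\>=\sum_{i,j,k,l}\psi^{kl}_{ij}\,\phi^{ij}_{kl}.
\]
These two quadruple sums are equal, which is the claim. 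Equivalently, and more conceptually, one observes that $s(\chi)=\Tr\chi$ for $\chi\in\Le(H,H)$, where $\chi$ is regarded as a linear operator on $B(H)$ equipped with the Hilbert--Schmidt inner product $\<A,B\>=\Tr A^*B$, for which $\{|i_H\>\<j_H|\}_{i,j}$ is an orthonormal basis; the lemma is then just $\Tr(\psi\circ\phi)=\Tr(\phi\circ\psi)$.

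I do not expect a genuine obstacle. The only points that merit a sentence are: (i) the legitimacy of applying the a priori real-linear maps to the matrix units, which is exactly the $B(H)\to B(K)$ identification already recorded in the paper; and (ii) in the ``$s=\Tr$'' formulation, the remark that passing to the complexification does not change the trace, so that $s(\chi)$ computed via the matrix-unit expansion equals the trace of $\chi$ as a real-linear operator on $B_h(H)$. I would present the direct double expansion as the main argument since it avoids even these minor remarks.
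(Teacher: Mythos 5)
Your proof is correct. It takes a mildly different route from the paper's: the paper checks the identity on the spanning set of rank-one maps $\Phi_{b,a}:x\mapsto(\Tr bx)a$ of $\Le(K,H)$, showing that $s(\Phi_{b,a}\circ\phi)$ and $s(\phi\circ\Phi_{b,a})$ both equal $\Tr\phi(a)b$, and concludes by linearity; you instead expand both compositions directly in matrix units and identify the two quadruple sums $\sum_{i,j,k,l}\phi^{ij}_{kl}\psi^{kl}_{ij}$. Both are elementary basis computations of comparable length, and your handling of the two points you flag (evaluating Hermitian-preserving maps on the non-Hermitian matrix units via the complex-linear identification, which the paper also uses implicitly in the very definition of $s$) is fine. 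Your ``conceptual'' reformulation --- that $\{|i_H\>\<j_H|\}$ is a Hilbert--Schmidt orthonormal basis of $B(H)$, so $s(\chi)=\Tr\chi$ as an operator on $B(H)$ and the lemma is just cyclicity of the trace --- is something the paper does not state explicitly (its Remark~1 records the closely related identity $s(\phi)=\Tr C(\phi)X_H$), and it is arguably the cleanest way to see why the tracelike property of $s$ is automatic. The paper's version has the minor side benefit of producing the explicit formula $\<\phi,\Phi_{b,a}\>=\Tr\phi(a)b$, which is reused later (e.g.\ in the proof of Lemma~2 via $\<\phi,\Phi_{I,\sigma}\>=\Tr\phi(\sigma)$), but nothing in your argument is missing or wrong.
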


\begin{proof} 
Let $\Phi_{b,a}$ denote the map $B(K)\ni x\mapsto (\Tr bx) a$, with $a\in B_h(H)$, $b\in B_h(K)$. We have
 \begin{align*}
s(\Phi_{b,a}\circ\phi)&=\sum_{i,j} \<i_H,aj_H\>\Tr b\phi(|i_H\>\<j_H|)=\sum_{i,j}\<i_H,aj_H\> \<j_H,\phi^*(b)i_H\>\\&=\Tr \phi(a)b,
 \end{align*}
 where $\phi^*$ is the adjoint with respect to Hilbert-Schmidt inner product:
\[
\Tr a\phi^*(b)=\Tr\phi(a)b,\qquad a\in B_h(H), b\in B_h(K).
\]
Similarly
\begin{align*}
s(\phi\circ\Phi_{b,a})&=\sum_{i,j}\<i_K,\phi(a)j_K\> \Tr b|i_K\>\<j_K|=\sum_{i,j}\<i_K,\phi(a)j_K\> \<j_K, b i_K\>\\
&=\Tr\phi(a)b=s(\Phi_{b,a}\circ\phi).
\end{align*}
Since the maps $\Phi_{b,a}$ generate $\Le(K,H)$, the statement follows.

\end{proof}

We now identify the dual space of $\Le(H,K)$ with $\Le(K,H)$, where duality is given by
\[
\<\phi,\psi\>=s(\phi\circ\psi),\qquad \phi\in \Le(H,K),\psi\in\Le(K,H).
\]
This duality is closely related to the inner product $\<\cdot,\cdot\>''$ in $\Le(H,K)$, introduced in \cite{skowronek2011cones}.
Note that the properties of $s$ imply that we have $\<\phi,\psi\>=\<\psi,\phi\>$ and 
\begin{equation}\label{eq:trian}
\<\alpha\circ\beta,\psi\>=s(\alpha\circ\beta\circ\psi)=\<\beta,\psi\circ\alpha\>=\<\alpha,\beta\circ\psi\>
\end{equation}
whenever $\alpha,\beta$ and $\psi$ are Hermitian maps with appropriate input and output spaces:
 \[
\xymatrix@C=2pc@R=2pc{B(H) \ar[r]^\beta & B(K) \ar[d]^\alpha \\
 & B(L)\ar[ul]_\psi}
\]
This simple observation is one of the most important tools used below. 

\begin{rem}\label{rem:choi}
Let us denote $X_H:=\sum_{i,j} |i_H\>\<j_H|\otimes |i_H\>\<j_H|$.  The \emph{Choi representation} \cite{choi1975completely}
$C: \phi\mapsto (\phi\otimes id_H)(X_H)$ provides an isomorphism of $\Le(H,K)$ onto $B_h(K\otimes H)$.
Note  that for any $\phi\in \Le(H,H)$, $s(\phi)=\Tr C(\phi)X_H$, so that for $\phi\in  \Le(H,K)$, 
$\psi\in \Le(K,H)$ we obtain
\begin{align}
\<\phi,\psi\>&=s(\psi\circ\phi)=\Tr C(\psi\circ\phi)X_H=\Tr (\psi\otimes id_H)(C(\phi))X_H\notag \\
&=\Tr C(\phi)(\psi^*\otimes id_H)(X_H)=\Tr C(\phi)C(\psi^*).\label{eq:chois}
\end{align}
It is of course possible to use this representation and we will do it in some places, but for our purposes it is mostly much more convenient to work with the spaces of mappings.

\end{rem}

Let now $\Ae$ and $\Be$ be C*-algebras and let $\Le(\Ae,\Be)$ denote the space of linear maps $\Ae_h\to\Be_h$. 
If $\Ae\subseteq B(H)$ and $\Be\subseteq B(K)$, then we may identify any $\phi\in \Le(\Ae,\Be)$ with 
$E_\Be\circ\phi\circ E_\Ae\in \Le(H,K)$. 
In this way, we may suppose that $\Le(\Ae,\Be)$ is a linear subspace in $\Le(H,K)$. In particular, we may define the 
functional $s$ on $\Le(\Ae,\Ae)$ by restriction. If $\phi\in \Le(\Ae,\Be)$ and $\psi\in \Le(K,H)$, then by (\ref{eq:trian})
\[
\<\phi,\psi\>=\<\phi, E_\Ae\circ\psi\circ E_\Be\>=\<\phi,\psi'\>
\]
 for $\psi'=E_\Ae\circ (\psi|_\Be)\in \Le(\Be,\Ae)$, so that we may identify the dual space of $\Le(\Ae,\Be)$ with $\Le(\Be,\Ae)\subseteq\Le(K,H)$. 
 
 \begin{ex}\label{ex:cqqc}
Let $d_H=n$, $d_K=m$. Any map $\phi\in \Le(\diag_n,B(K))$ has the form 
\[
\phi: |i_H\>\<i_H|\mapsto B_i,\qquad B=\{B_1,\dots,B_{n}\}\subseteq B_h(K).
\]
This can be extended to the map $\Phi^{cq}_{B}\in \Le(H,K)$, $a\mapsto \sum_i\<i_H,ai_H\> B_i$. If $B\subset B(K)^+$, 
then $\Phi^{cq}_B$ is called a \emph{classical-to-quantum (cq-) map}. Moreover, if  $B\subset \states(K)$, then $\Phi^{cq}_B$ is trace preserving, in this case it is called a \emph{cq-channel}.

Similarly, any map in $\Le(B(H),\diag_m)$ has the form 
\[
\Phi^{qc}_{A}: b\mapsto \sum_{i=1}^m (\Tr A_ib)|i_K\>\<i_K|,\qquad A=\{A_1,\dots,A_m\}\subseteq B_h(H).
\]
If $A\subset B(H)^+$, $\Phi^{qc}$ is called a \emph{quantum-to-classical (qc-) map}. If moreover $A$
is a POVM, that is, $A\subset B(H)^+$ and $\sum_i A_i=I$, $\Phi^{qc}_A$ is trace preserving and is called a \emph{qc-channel}.

If $\psi\in \Le(K,H)$, then 
$\Phi^{cq}_B\circ \psi=\Phi^{cq}_B\circ \Phi^{qc}_F=:\Phi_{F,B}$, where $ F=\{F_i:=\psi^*(|i_H\>\<i_H|), i=1,\dots,n\}$,
so that 
\begin{equation}\label{eq:s_cq}
\<\Phi^{cq}_B,\psi\>=\<\Phi^{cq}_B,\Phi^{qc}_F\>=s(\Phi_{F,B})=\sum_i \Tr B_iF_i
\end{equation}
Similarly, $\psi \circ \Phi^{qc}_A=\Phi_{A,G}$, $G=\{G_i:=\psi(|i_K\>\<i_K|), i=1,\dots,m\}$ and
\[
\<\Phi^{qc}_A,\psi\>=\<\Phi^{qc}_A,\Phi^{cq}_G\>=s(\Phi_{A,G})=\sum_i\Tr G_iA_i.
\]
 \end{ex}

\subsection{Admissible families of positive cones}

We now choose a convex cone in $\Le(H,K)$, that will serve as a positive cone. In fact, since we will work with mappings  with different input and 
output spaces, we have  to fix a family of cones, for all pairs of Hilbert spaces, and since we will consider compositions of maps, this family must satisfy certain invariance under composition. In the sequel, we will use definitions and facts that can be found in the  Appendix.

There is an obvious candidate for a positive cone in $\Le(H,K)$, namely the cone
$Pos(H,K)$ of all \emph{positive} maps in $\Le(H,K)$, that is, maps satisfying 
$\phi(B(H)^+)\subseteq B(K)^+$. One can see that this is a proper  cone in $\Le(H,K)$. Its dual with respect to our duality is the cone  $Pos(H,K)^*=EB(K,H)$ of \emph{superpositive} maps \cite{ando2004cones}, or  \emph{entanglement breaking} maps \cite{hsr2003entanglement}, that is, maps of the form
\[
\Phi_{F,\Ee}: a\mapsto \sum_i \rho_i\Tr F_ia,
\]
where  $\Ee=\{\rho_1,\dots,\rho_k\}\subset B(H)^+$ and $F=\{F_1,\dots,F_k\}\subset B(K)^+$.
 In particular, if $F_i=|i_K\>\<i_K|$ for $i=1,\dots,d_K=k$,
then $\Phi_{\Ee,F}=\Phi^{cq}_\Ee$  is a cq-map, see Example \ref{ex:cqqc}.   Similarly, if
$\rho_i=|i_H\>\<i_H|$ for $i=1,\dots,d_H=k$, then $\Phi_{\Ee,F}=\Phi^{qc}_F$  is a qc-map.
Note that any EB map is a composition of a qc- and a cq-map, $\Phi_{\Ee,F}=\Phi^{cq}_\Ee\circ\Phi^{qc}_F$.

\begin{rem}\label{rem:EB} Let $\Phi_{F,\Ee}$ be an EB map. Note that we may always rearrange the operators in $\Ee$ and $F$ such that 
$\Ee\subset \states(H)$ or such that $F$ is a POVM. To see the latter,  let $t=\|\sum_i F_i\|$ and put
$F'_i=t^{-1}F_i$, $i=1,\dots,k$, $F'_{k+1}=I-\sum_{i=1}^k F_i'$,  $\rho'_i=t\rho_i$, $i=1,\dots,k$, $\rho'_{k+1}=0$. Then $F'$ is a POVM and  
 $\Phi_{F,\Ee}=\Phi_{F',\Ee'}$. Note also that we may suppose both, so that $\Phi_{F,\Ee}$ is a composition of a cq- and a 
qc-channel, if and only if $\Phi_{F,\Ee}$ is trace preserving.

\end{rem}

Another natural choice is the cone  $CP(H,K)$ of \emph{completely positive} maps in $\Le(H,K)$, that is, maps 
satisfying $\phi\otimes id_F\in Pos(H\otimes F,K\otimes F)$ for all finite dimensional Hilbert spaces $F$.
In this case, $CP(H,K)^*=CP(K,H)$. Note also that we have $EB(H,K)\subseteq CP(H,K)\subseteq Pos(H,K)$ and the inclusions are proper if both spaces have dimension at least 2.

\begin{defi} Let $\Pe=\{\Pe(H,K)\}$ be a family of proper cones, where $\Pe(H,K)\subset \Le(H,K)$ and $H,K$ runs over all pairs of finite dimensional Hilbert spaces. We will say that $\Pe$ is an \emph{admissible family of positive cones} if the following properties are satisfied:
\begin{enumerate}
\item[(i)] $EB\subseteq \Pe\subseteq Pos$. 
\item [(ii)] $\Pe$ is closed  under composition:  $\Pe\circ\Pe\subseteq \Pe$.
\item[(iii)] $\Pe$ is closed under composition  with $CP$ maps: $CP\circ\Pe\circ CP\subseteq \Pe$.
\end{enumerate}

\end{defi}

It is easy to see that $EB$, $Pos$ and $CP$ are admissible families of positive cones.
Closed convex cones satisfying (iii) are sometimes  called \emph{mapping cones} or \emph{cones with a mapping cone symmetry}, see 
\cite{stormer1986extension, skowronek2011cones,ssz2009cones} for more results and 
examples. We give some further examples of admissible families.

\begin{ex}
Let $CP_k(H,K)$  be the cone of \emph{$k$-positive} maps, satisfying 
$\phi\otimes id_F\in Pos(H\otimes F,K\otimes F)$ if $d_F=k$. Obviously $CP\subseteq CP_k\subseteq CP_1=Pos$ and 
$CP_k(H,K)=CP(HK)$ if $k=\min\{d_H,d_K\}$. The dual space $CP_k(H,K)^*=EB(K,H)$ is the cone of \emph{$k$-entanglement breaking} maps \cite{ssz2009cones}, that is, maps such that $\phi\otimes id(\rho)$ has Schmidt rank less than $k$ for all bipartite states $\rho$.  It is clear that $EB_k$ is an admissible family as well.

\end{ex}

 By taking the dual cones, we obtain another family $\{\Pe^*(H,K)\}$  of proper  cones, such that
 $\Pe(H,K)^*=\Pe^*(K,H)$.  In all examples considered so far, both $\Pe^*$ and $\Pe$ were admissible families. In general,  it does not seem to be true that $\Pe^*$ is admissible if $\Pe$ is. However, $\Pe^*$ satisfies the properties (i) and 
(iii) and $\Pe^*\circ\Pe,\Pe\circ\Pe^*\subset \Pe^*\cap CP$  whenever $\Pe$ is admissible.

\begin{rem}\label{rem:norms}
The Choi isomorphism maps $\Pe(H,K)$ onto a proper cone in $B_h(K\otimes H)$. For example, $C(CP(H,K))=B(K\otimes H)^+$
 and  $C(EB(H,K))=Sep(K\otimes H)$, where
\[ 
Sep(K\otimes H):=\{\sum_i A_i\otimes B_i, A_i\in B(K)^+, B_i\in B(K)^+\}
\]
is the cone of positive separable elements. If duality in $B_h(K\otimes H)$ is given by 
$\<A,B\>=\Tr AB$, $A,B\in B_h(K\otimes H)$, the dual cone of $C(\Pe)$ is $C(\Pe)^*=C(\tilde \Pe)$,
 where $\tilde \Pe(H,K)=\{\phi^*, \phi\in \Pe^*(K,H)\}$, this follows from Remark \ref{rem:choi}. Note that we have $\Pe^*=\tilde \Pe$ in all examples. By property (iii), both $C(\Pe)$ and $C(\tilde \Pe)$ are invariant under maps of the form 
$\phi\otimes \psi$ where $\phi,\psi\in CP$, in particular, the cones are invariant under conjugation by local unitaries.   Let us denote the  order  in $B_h(K\otimes H)$ given by $C(\tilde \Pe)$ by $\le_{\tilde \Pe}$. 

Let $\tau=\tau_{H,K}:=\Phi_{I,I}:  a\mapsto (\Tr a)I_K$. Then $\tau$ is an interior point in the cone $EB(H,K)$ and by the property (i) it is easy to see that $\tau$ is an order unit for all cones $\Pe(H,K)$, $\Pe^*(H,K)$ and $\tilde \Pe(H,K)$.
Consequently, $C(\tau)=I$ is an order unit with respect to the order $\le_{\tilde \Pe}$ in $B_h(K\otimes H)$.
In the sequel, we will need the corresponding order unit norm, given by
\[
\|X\|_{\tilde \Pe}:=\inf\{\lambda>0, -\lambda I\le_{\tilde\Pe} X\le_{\tilde \Pe} \lambda I\}
\]
and its dual, the base norm
\[
\|X\|_{1,\tilde \Pe}:=\sup_{-I\le_{\tilde \Pe}Y\le_{\tilde \Pe}I} \Tr XY=2\sup_{0\le_{\tilde \Pe} Y\le_{\tilde \Pe}I}\Tr XY-\Tr X.
\]
Note that for $\Pe=CP$, we have $\tilde \Pe=CP$, $\|\cdot\|_{CP}$ is the operator norm and  $\|\cdot\|_{1,CP}=\|\cdot\|_1$ is the 
$L_1$-norm in $B(K\otimes H)$. If $\Pe=Pos$, then $\tilde \Pe=Sep$ and $\|\cdot\|_{1,Sep}$ is the norm $\|\cdot\|_{\mathbf{SEP}}$, 
defined  e.g. in \cite{mww2009distinguishability}.

\end{rem}

If $\Ae\subseteq B(H)$ and $\Be\subseteq B(K)$ are C*-algebras, then $\Pe(\Ae,\Be):=\Pe(H,K)\cap \Le(\Ae,\Be)$ is a proper cone and its dual is $\Pe^*(\Be,\Ae)=\Pe^*(K,H)\cap \Le(\Be,\Ae)$. In this way, we obtain a family of cones $\{\Pe(\Ae,\Be)\}$ for all pairs of C*-algebras, satisfying the admissibility properties (i)-(iii).

\subsection{The diamond norm and its dual}

From now on, we fix an admissible family of positive cones $\Pe$. It is clear that many of the following notions depend on this choice and we will sometimes use the subscript $\Pe$ to emphasize this dependence.
We will denote the corresponding order in $\Le(H,K)$ 
 by $\le$ and the dual order in $\Le(K,H)$ by $\le^*$. Let us also denote  the set of trace preserving elements in 
$\Pe(H,K)$ by
 $\Ce=\Ce(H,K)$  The elements of
$\Ce$ will be called \emph{channels} in the sequel, although usually in the literature channels are required to be 
completely positive. We will use the expression \emph{$\Pe$-channel} if necessary. We denote 
$\Ce(\Ae,\Be)=\Ce(H,K)\cap \Le(\Ae,\Be)$ for C*-subalgebras $\Ae\subseteq B(H)$, $\Be\subseteq B(K)$. Again, we refer to the Appendix for the definition of base sections and the corresponding norms.

\begin{lemma}\label{lemma:base} $\Ce(\Ae,\Be)$ is a base section in $(\Le(\Ae,\Be),\Pe(\Ae,\Be))$. The dual section is 
$\Se(\Be,\Ae):=\{\Phi_{I_K,\sigma}: a\mapsto (\Tr a)\sigma,\ \sigma\in \states(\Ae) \}$.
\end{lemma}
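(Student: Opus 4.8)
The plan is to verify directly that $\Ce(\Ae,\Be)$ satisfies the defining properties of a base section in the ordered vector space $(\Le(\Ae,\Be),\Pe(\Ae,\Be))$, and then to identify the dual section by a pairing computation. Recall that a base section of a proper cone is the intersection of the cone with an affine hyperplane $\{\phi : \ell(\phi)=1\}$ for some strictly positive functional $\ell$; here the natural candidate is $\ell(\phi)=\<\phi,\tau\>$ where $\tau=\tau_{K,H}\in\Le(K,H)$, or equivalently $\ell(\phi) = \Tr_H\phi^*(I_K)$, the functional measuring the trace of $\phi$ on the maximally mixed input (up to normalization). I would first observe that for $\phi\in\Le(\Ae,\Be)$, trace preservation is exactly the condition that the Hermitian functional $a\mapsto \Tr\phi(a)-\Tr a$ vanishes on $\Ae_h$, which is one linear equation after the order-unit normalization; so $\Ce(\Ae,\Be)$ is the slice of $\Pe(\Ae,\Be)$ by an affine hyperplane.

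The key steps, in order, are: (1) exhibit the functional $\ell$ and show $\ell$ is strictly positive on $\Pe(\Ae,\Be)\setminus\{0\}$ — this uses property (i), $EB\subseteq\Pe$, so any nonzero $\phi\in\Pe$ dominates (after testing against states) a nonzero EB map and hence has strictly positive trace on some input state, giving $\ell(\phi)>0$; (2) identify $\Ce(\Ae,\Be)=\{\phi\in\Pe(\Ae,\Be): \ell(\phi)=1\}$, i.e. a $\Pe$-channel is precisely a positive trace-preserving map, which is immediate from the definition of $\Ce$; (3) conclude from the Appendix's characterization that this is a base section, which additionally requires checking that $\Ce(\Ae,\Be)$ is nonempty (e.g. the map $\Phi_{I,\sigma}\in EB\subseteq\Pe$ for $\sigma\in\states(\Be)$ is trace preserving) and compact (closedness is clear; boundedness follows from strict positivity of $\ell$ together with the fact that $\Pe(\Ae,\Be)$ is a proper cone). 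For the dual section: by the general duality developed in the Appendix, the dual base section sits in $\Le(\Be,\Ae)$ and consists of those $\psi\ge^*0$ with $\<\phi,\psi\>=1$ for all... no — more precisely it is the set of $\psi$ in the dual cone normalized so that the pairing with $\tau$ (the order unit of the primal cone) equals $1$. So I would compute $\<\tau_{\Ae,\Be},\psi\>$ for $\psi$ of the form $\Phi_{I_K,\sigma}$: using \eqref{eq:trian} and Lemma~\ref{lemma:s}, $\<\tau,\Phi_{I_K,\sigma}\> = s(\tau\circ\Phi_{I_K,\sigma})$, and since $\tau\circ\Phi_{I_K,\sigma}: a\mapsto (\Tr a)(\Tr\sigma)I = (\Tr a)I$ when $\Tr\sigma=1$, this evaluates to $s(\tau_{\Ae,\Ae})=d_\Ae$ or a constant, which after normalization gives exactly $\Se(\Be,\Ae)$. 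Conversely I must show every element of the dual section has this form: the dual cone of $\Pe$ is contained in $CP^*=CP$... rather, $\Pe\subseteq Pos$ implies $\Pe^*\supseteq Pos^*=EB$, and the dual section being a base section of $\Pe^*(\Be,\Ae)$ whose extreme points I would argue are exactly the rank-one replacement channels $a\mapsto(\Tr a)\sigma$ with $\sigma$ a pure state, whose convex hull is $\Se(\Be,\Ae)$.

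The main obstacle I anticipate is step (3) together with the converse direction of the dual identification: showing that the dual base section is *exactly* $\Se(\Be,\Ae)$ and not something larger. This requires knowing that the "normalizing functional" dual to the trace-preservation constraint, when restricted to the dual cone $\Pe^*(\Be,\Ae)$, picks out precisely the maps $\psi$ with $\psi(I_K)$ a state and $\psi$ "constant" (independent of the input in a suitable sense). Concretely, the dual of the affine slice $\{\text{trace-preserving positive maps}\}$ is governed by the adjoint constraint, and one must show that $\psi\ge^*0$ lies in the dual section iff $\psi=\Phi_{I_K,\sigma}$; the inclusion $\Se(\Be,\Ae)\subseteq$ (dual section) is the pairing computation above (every $\Phi_{I_K,\sigma}$ pairs to $1$ with every channel, since $\<\phi,\Phi_{I_K,\sigma}\>=\Tr\phi(\text{something})\cdot\ldots = \Tr\sigma = 1$ for trace-preserving $\phi$), but the reverse inclusion needs the general base-section duality theorem from the Appendix — namely that the dual section is the set of positive functionals taking value $1$ on the order unit $\tau$ — combined with the observation that $\tau=\Phi_{I,I}$ forces any $\psi$ in the dual cone with $\<\tau,\psi\>=1$ to have $\psi^*(I)$ of trace one, and positivity (in fact membership in $\Pe^*\subseteq EB^* $... here $Pos^*=EB\ni\psi$ is not automatic) — so I would lean on $\Pe^*\supseteq EB$ only for the easy inclusion and invoke the Appendix duality for the hard one. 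I expect the cleanest route is: $\Ce(\Ae,\Be)$ is the base section for the order-unit $\tau_{\Ae,\Be}$ by Remark~\ref{rem:norms} (which already names $\tau$ as an order unit for all the cones), hence by the Appendix its dual section is $\{\psi\ge^*0:\<\tau,\psi\>=1\}$, and a short direct argument shows this set equals $\Se(\Be,\Ae)$.
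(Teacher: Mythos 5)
There is a genuine gap, and it originates in conflating two notions from the Appendix. A \emph{base section} is not the slice of the cone by a single affine hyperplane $\{\ell=1\}$ --- that would be a \emph{base}; a base section is the intersection of a base with a linear subspace, and $\Ce(\Ae,\Be)$ is genuinely of this smaller kind. Trace preservation reads $\Tr\phi(a)=\Tr a$ for every $a\in\Ae_h$, which is $\dim\Ae_h$ independent linear constraints, not ``one linear equation after the order-unit normalization'' (e.g.\ for $\Ae=B(\mathbb C^2)$, $\Be=\diag_1$ the only trace-preserving positive map is the trace itself, while the hyperplane slice $\{\phi\in Pos:\Tr\phi(I_H)=2\}$ is three-dimensional). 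The same confusion undermines your identification of the dual section: by definition the dual of a base section $B$ is $\{\psi\in Q^*:\<b,\psi\>=1\ \text{for all } b\in B\}$, the positive functionals equal to $1$ on \emph{all} of $B$, not merely on the order unit. The set $\{\psi\ge^*0:\<\tau,\psi\>=1\}$ that you propose to match with $\Se(\Be,\Ae)$ is a full base of $\Pe^*(\Be,\Ae)$ and strictly contains $\Se(\Be,\Ae)$, so the ``short direct argument'' would be proving a false equality. The genuinely hard direction, which you correctly flag as the obstacle --- that every $\psi\in\Pe^*(\Be,\Ae)$ pairing to $1$ with \emph{every} channel must have the form $\Phi_{I_K,\sigma}$ --- is left unproved, and the extreme-point argument you sketch for it presupposes the conclusion.

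The paper avoids both difficulties by running the duality in the opposite direction, and I would restructure your argument accordingly. One first checks that $\Se(\Be,\Ae)$ is a base section of $(\Le(\Be,\Ae),\Pe^*(\Be,\Ae))$: it lies in $EB(\Be,\Ae)\subseteq\Pe^*(\Be,\Ae)$, contains the interior point $d_H^{-1}\tau_{K,H}$, and is the affine image of $\states(\Ae)$, so the criterion of Lemma~\ref{lemma:app_base} is immediate (here one uses $EB\subseteq\Pe\subseteq Pos$ on both sides to see that $\Phi_{I_K,\rho}\in\Pe^*$ forces $\rho\ge 0$). Its dual section is then computed from the single identity $\<\phi,\Phi_{I_K,\sigma}\>=\Tr\phi(\sigma)$ --- the same pairing you use for your easy inclusion: a map $\phi\in\Pe(\Ae,\Be)$ pairs to $1$ with every $\Phi_{I_K,\sigma}$ iff $\Tr\phi(\sigma)=1$ for every state $\sigma$, iff $\phi$ is trace preserving. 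The biduality $\tilde{\tilde B}=B$ of Lemma~\ref{lemma:app_dual} then yields both assertions of the lemma at once, and the question of which elements of $\Pe^*(\Be,\Ae)$ lie in the dual section of $\Ce(\Ae,\Be)$ never has to be answered head-on.
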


\begin{proof} We first note that  $\Se(\Be,\Ae)\subseteq EB(\Be,\Ae)\subseteq\Pe^*(\Be,\Ae)$ and $\Se(\Be,\Ae)$ contains
 the interior point $d_H^{-1}\tau_{K,H}$ of $\Pe^*(\Be,\Ae)$, see Remark \ref{rem:norms}.
   It is not difficult to see that $\Se(\Be,\Ae)$ is a base section in $(\Le(\Be,\Ae),\Pe^*(\Be,\Ae))$.  Since $\<\phi,\Phi_{I,\sigma}\>=\Tr \phi(\sigma)$ for any $\phi\in \Le(\Ae,\Be)$, it is easy to see that the dual section is $\Ce(\Ae,\Be)$. The rest of the proof  follows from Lemma \ref{lemma:app_dual} in the Appendix.

\end{proof}

We will denote the norm corresponding to $\Ce(\Ae,\Be)$ by $\|\cdot\|_\diamond$ and its dual  by $\|\cdot\|^\diamond$.
It was already shown in \cite{jencova2013base} that if $\Pe=CP$, $\|\cdot\|_{\diamond}$ is the  diamond norm considered in \cite{kitaev1997quantum, 
watrous2005notes}.
By Corollary \ref{coro:app_norm} (i) (in the Appendix), the two norms have the following form:
\begin{align}
\|\phi\|_{\diamond} &= \sup_{\sigma\in \states(\Ae)} \sup_{-\Phi_{I,\sigma}\le^*\xi\le^*\Phi_{I,\sigma}}
\<\phi,\xi\>\label{eq:diamond_sup}\\
&=
\inf_{\alpha\in \Ce(\Ae,\Be)}\inf \{\lambda>0, -\lambda\alpha\le\phi\le\lambda\alpha\}\label{eq:diamond_inf} \\
\|\phi\|^{\diamond}&=\sup_{\alpha\in \Ce(\Be,\Ae)} \sup_{-\alpha\le\xi\le\alpha}\<\phi,\xi\>
\label{eq:diamondu_sup}\\
&=
\inf_{\sigma\in \states(\Be)}\inf\{\lambda>0, -\lambda\Phi_{I,\sigma}\le^* \phi\le^* \lambda\Phi_{I,\sigma}\}
\label{eq:diamondu_inf}
\end{align}
for all $\phi\in \Le(\Ae,\Be)$. We will now find a more explicit form of $\|\cdot\|_\diamond$.

\begin{thm}\label{thm:diamond} For any $\phi\in \Le(H,K)$,
\[ 
\|\phi\|_{\diamond,\Pe}=\sup_{\rho\in \states({H\otimes H})}\|(\phi\otimes id_H)(\rho)\|_{1,\tilde \Pe}.
\]

\end{thm}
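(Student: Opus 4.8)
The plan is to express both sides of the claimed identity as a supremum of the same order-interval functional and then compare. I start from the supremum representation (\ref{eq:diamond_sup}), specialized to $\Ae=B(H)$, $\Be=B(K)$:
\[
\|\phi\|_\diamond=\sup_{\sigma\in\states(H)}\ \sup_{-\Phi_{I,\sigma}\le^*\xi\le^*\Phi_{I,\sigma}}\<\phi,\xi\>,
\]
and transport it into the Choi picture. By (\ref{eq:chois}), $\<\phi,\xi\>=\Tr C(\phi)C(\xi^*)$; a short computation shows $(\Phi_{I,\sigma})^*\colon b\mapsto(\Tr\sigma b)I_K$, whose Choi operator is $I_K\otimes\sigma^{T}$, the transpose being taken in the fixed basis of $H$. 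Since $\psi\mapsto\psi^*$ maps $\Pe^*(K,H)$ onto $\tilde\Pe(H,K)$ (Remark~\ref{rem:norms}) and $C$ is a linear isomorphism onto $B_h(K\otimes H)$ carrying $\tilde\Pe(H,K)$ to the cone defining $\le_{\tilde\Pe}$, the substitution $Z:=C(\xi^*)$ turns the constraint $-\Phi_{I,\sigma}\le^*\xi\le^*\Phi_{I,\sigma}$ into $-I_K\otimes\sigma^{T}\le_{\tilde\Pe}Z\le_{\tilde\Pe}I_K\otimes\sigma^{T}$, with $Z$ ranging freely over $B_h(K\otimes H)$. As $\sigma\mapsto\sigma^{T}$ is a bijection of $\states(H)$, this gives
\[
\|\phi\|_\diamond=\sup_{\mu\in\states(H)}g(\mu),\qquad
g(\mu):=\sup_{-I_K\otimes\mu\le_{\tilde\Pe}Z\le_{\tilde\Pe}I_K\otimes\mu}\Tr C(\phi)Z .
\]

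Next I treat the right-hand side. The function $\rho\mapsto\|(\phi\otimes id_H)(\rho)\|_{1,\tilde\Pe}$ is convex, so its supremum over $\states(H\otimes H)$ may be restricted to pure states; and every unit vector of $H\otimes H$ equals $(I_H\otimes B)|\Omega\>$, where $X_H=|\Omega\>\<\Omega|$, $|\Omega\>=\sum_i|i_H\>\otimes|i_H\>$, and $\|B\|_2=1$. Since $\phi$ acts on the first tensor factor only, $(\phi\otimes id_H)((I_H\otimes B)X_H(I_H\otimes B^*))=(I_K\otimes B)C(\phi)(I_K\otimes B^*)$. Expanding $\|\cdot\|_{1,\tilde\Pe}$ by its dual (base-norm) description from Remark~\ref{rem:norms} and substituting $Z=(I_K\otimes B^*)W(I_K\otimes B)$, the crucial observation is that $W\mapsto(I_K\otimes B^*)W(I_K\otimes B)$ has the form $id_K\otimes\psi$ with $\psi(a)=B^*aB$ completely positive; by the invariance of $C(\tilde\Pe)$ under local $CP$ maps (Remark~\ref{rem:norms}) it preserves $\le_{\tilde\Pe}$, hence maps the order interval $\{W:-I\le_{\tilde\Pe}W\le_{\tilde\Pe}I\}$ into $\{Z:-I_K\otimes B^*B\le_{\tilde\Pe}Z\le_{\tilde\Pe}I_K\otimes B^*B\}$, and when $B$ is invertible the inverse map is also local $CP$, making it an order isomorphism onto that interval. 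With $\mu:=B^*B\in\states(H)$ this yields $\|(\phi\otimes id_H)((I_H\otimes B)X_H(I_H\otimes B^*))\|_{1,\tilde\Pe}\le g(\mu)$ always, with equality when $B$ is invertible. Letting $B$ run over all operators with $\|B\|_2=1$ (so that $B^*B$ exhausts $\states(H)$) gives the right-hand side $\le\sup_{\mu}g(\mu)=\|\phi\|_\diamond$; taking $B=\mu^{1/2}$ for invertible $\mu$ gives the right-hand side $\ge\sup_{\mu>0}g(\mu)$.

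It remains to bridge arbitrary and invertible states, i.e. to see $\sup_{\mu\in\states(H)}g(\mu)=\sup_{\mu>0}g(\mu)$. Here I use that $g$ is positively homogeneous and monotone for the usual positive semidefinite order: if $P\ge0$ then $I_K\otimes P$ is separable, hence $\ge_{\tilde\Pe}0$, so $\mu\le\mu'$ forces $I_K\otimes\mu\le_{\tilde\Pe}I_K\otimes\mu'$ and thereby enlarges the feasible set in $g$. Consequently $g((1-\varepsilon)\mu)\le g((1-\varepsilon)\mu+\tfrac{\varepsilon}{d_H}I_H)$, whose right argument is strictly positive, and letting $\varepsilon\to0$ completes the reduction. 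I expect the Choi bookkeeping to be the only genuinely delicate point — pinning down $C((\Phi_{I,\sigma})^*)=I_K\otimes\sigma^{T}$ and, above all, verifying that conjugation by $I_K\otimes B$ is an order (iso)morphism for $\le_{\tilde\Pe}$; this is precisely where one leaves the completely positive setting, and it works exactly because admissibility property (iii) makes $C(\tilde\Pe)$ invariant under local $CP$ maps.
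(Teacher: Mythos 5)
Your proof is correct and follows essentially the same route as the paper's: both pass to the Choi picture via (\ref{eq:diamond_sup}), turn the constraint $-\Phi_{I,\sigma}\le^*\xi\le^*\Phi_{I,\sigma}$ into an order interval $[-I\otimes\sigma,I\otimes\sigma]$ for $\le_{\tilde\Pe}$, relate it to $\|(I\otimes\sigma^{1/2})C(\phi)(I\otimes\sigma^{1/2})\|_{1,\tilde\Pe}$, and parametrize pure states of $H\otimes H$ as $(I\otimes B)|\Omega\>$. Your explicit monotonicity-plus-approximation argument for singular $\sigma$ (and the use of local CP invariance in place of the paper's polar decomposition and local unitary invariance) is a careful filling-in of a step the paper's substitution $X=(I\otimes\sigma^{1/2})Y(I\otimes\sigma^{1/2})$ leaves implicit, not a different method.
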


\begin{proof} Let $\phi\in \Le(H,K)$. Using Remarks \ref{rem:choi}, \ref{rem:norms} and (\ref{eq:diamond_sup}), we obtain 
\begin{align*}
\|\phi\|_\diamond&=\sup_{\sigma\in \states(H)}\, \sup_{-\Phi_{I,\sigma}\le^*\xi\le^*\Phi_{I,\sigma}}\Tr C(\phi)C(\xi^*)\\
&=\sup_{\sigma\in \states(H)}\, \sup_{-(I\otimes \sigma)\le_{\tilde \Pe} X\le_{\tilde \Pe}(I\otimes \sigma)}\Tr C(\phi)X=\\
&=\sup_{\sigma\in \states(H)} \sup_{-I\le_{\tilde \Pe} Y\le_{\tilde \Pe}I} \Tr C(\phi)(I\otimes\sigma^{1/2})Y(I\otimes\sigma^{1/2})\\
&=\sup_{\sigma\in \states(H)}\|(I\otimes \sigma^{1/2})C(\phi)(I\otimes\sigma^{1/2})\|_{1,\tilde \Pe}\\
&=\sup_{\sigma\in \states(H)} \|(\phi\otimes id_H)(|x_\sigma\>\<x_\sigma|)\|_{1,\tilde \Pe},
\end{align*}
where $|x_\sigma\>=\sum |i_H\>\otimes \sigma^{1/2}|i_H\>$ is a unit vector in $H\otimes H$.
On the other hand, let $|x\>\in H$ be a unit vector, then $|x\>=\sum |i_H\>\otimes|x_i\>$ for
 some $|x_i\>\in H$. Let $R: H\to H$
 be a linear map given by $R|i_H\>=|x_i\>$, then $R^*R=:\sigma\in \states(H)$ and $|x_i\>=U\sigma^{1/2}|i_H\>$ 
 for some unitary $U\in B(H)$. It is easy to see  that $\|\cdot\|_{1,\tilde \Pe}$ is invariant under local unitaries, so that
\begin{align*}
\|(\phi\otimes id_H)(|x\>\<x|)\|_{1,\tilde \Pe}&=\|(I\otimes U)(\phi\otimes id_H)(|x_\sigma\>\<x_\sigma|)(I\otimes U^*)\|_{1,\tilde \Pe}\\
&=\|(\phi\otimes id_H)(|x_\sigma\>\<x_\sigma|)\|_{1,\tilde \Pe}
\end{align*}
The proof now follows by convexity of the norm $\|\cdot\|_{1,\tilde \Pe}$.

\end{proof}

Using the Choi representation, we also obtain an expression for the dual norm. For  $a\in B(H)^+$ and $x\in B_h(H)$, let
$\|a^{-1/2}xa^{-1/2}\|:=\lim_{\varepsilon\to 0 } \|(a+\varepsilon I)^{-1/2}x(a+\varepsilon I)^{-1/2}\|$. Note that the limit is finite if $\mathrm{supp} (x)\le \mathrm{supp}(a)$, where $\mathrm{supp}(x)$ is the projection onto the range of $x$, otherwise $\|a^{-1/2}xa^{-1/2}\|=\infty$.

\begin{lemma}\label{lemma:cp_udiam} Let $\phi\in \Le(H,K)$, then 
\[
\|\phi\|^\diamond_{\Pe}=\inf_{\sigma\in\states(K)}\|(I_H\otimes \sigma^{-1/2})C(\phi^*)(I_H\otimes \sigma^{-1/2})\|_{\tilde \Pe}.
\]

\end{lemma}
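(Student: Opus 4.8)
The plan is to dualize the formula for $\|\cdot\|_\diamond$ from Theorem~\ref{thm:diamond}, passing through the Choi representation and the duality between the order unit norm $\|\cdot\|_{\tilde\Pe}$ and the base norm $\|\cdot\|_{1,\tilde\Pe}$ established in Remark~\ref{rem:norms}. Concretely, I would start from (\ref{eq:diamondu_inf}), which already expresses $\|\phi\|^\diamond$ as an infimum over $\sigma\in\states(K)$ of $\inf\{\lambda>0:\ -\lambda\Phi_{I,\sigma}\le^*\phi\le^*\lambda\Phi_{I,\sigma}\}$. So the task reduces to showing that, for a fixed $\sigma\in\states(K)$, the inner infimum equals $\|(I_H\otimes\sigma^{-1/2})C(\phi^*)(I_H\otimes\sigma^{-1/2})\|_{\tilde\Pe}$.

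The key step is to translate the order relation $-\lambda\Phi_{I,\sigma}\le^*\phi\le^*\lambda\Phi_{I,\sigma}$ in $\Le(H,K)$, where $\le^*$ is the order given by $\Pe^*(H,K)$, into a statement in $B_h(K\otimes H)$ under the Choi isomorphism. By Remark~\ref{rem:choi} and Remark~\ref{rem:norms}, $C$ carries $\Pe^*(H,K)$ onto the cone $C(\widetilde{\Pe^*})$; but one must be careful with the direction of the tilde. The cleaner route, following the proof of Theorem~\ref{thm:diamond}, is to pass to adjoints: $\phi\le^*\lambda\Phi_{I,\sigma}$ in $\Le(H,K)$ is equivalent to $\phi^*\le\lambda\Phi_{\sigma,I_H}$ in $\Le(K,H)$, i.e.\ $C(\phi^*)\le_{\tilde\Pe}\lambda\,C(\Phi_{\sigma,I})$ in $B_h(H\otimes K)$ (using that $C$ intertwines the $\Pe$-order on $\Le(K,H)$ with $\le_{\tilde\Pe}$). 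A direct computation gives $C(\Phi_{\sigma,I})=I_H\otimes\sigma^{\mathsf T}$ (or $I_H\otimes\sigma$ up to the transpose induced by the fixed basis; I would absorb this into the choice of basis as in Theorem~\ref{thm:diamond}). Thus the condition becomes $-\lambda(I_H\otimes\sigma)\le_{\tilde\Pe}C(\phi^*)\le_{\tilde\Pe}\lambda(I_H\otimes\sigma)$, and conjugating by $I_H\otimes\sigma^{-1/2}$ — which is legitimate because, by property (iii) and Remark~\ref{rem:norms}, the cone $C(\tilde\Pe)$ is invariant under conjugation by maps of the form $\phi\otimes\psi$ with $\phi,\psi\in CP$, and $X\mapsto(I_H\otimes\sigma^{-1/2})X(I_H\otimes\sigma^{-1/2})$ is of this form when $\sigma>0$ — turns it into $-\lambda I\le_{\tilde\Pe}(I_H\otimes\sigma^{-1/2})C(\phi^*)(I_H\otimes\sigma^{-1/2})\le_{\tilde\Pe}\lambda I$. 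By definition of $\|\cdot\|_{\tilde\Pe}$, the infimum over such $\lambda$ is exactly $\|(I_H\otimes\sigma^{-1/2})C(\phi^*)(I_H\otimes\sigma^{-1/2})\|_{\tilde\Pe}$, and taking the infimum over $\sigma$ finishes the proof.

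The main obstacle I anticipate is the boundary behaviour when $\sigma$ is not invertible: the conjugation $I_H\otimes\sigma^{-1/2}$ is then only defined through the $\varepsilon\to 0$ limit introduced before the lemma, and one must check that the infimum over $\states(K)$ is unchanged whether taken over all states or only over invertible ones, and that the order-unit-norm expression is lower semicontinuous in $\sigma$ so the limit matches. This is handled by noting that if $\mathrm{supp}(C(\phi^*))\not\le I_H\otimes\mathrm{supp}(\sigma)$ then no finite $\lambda$ works in (\ref{eq:diamondu_inf}) either (the value is $+\infty$ on both sides), and otherwise one can restrict to the support and reduce to the invertible case, or simply approximate $\sigma$ by $\sigma+\varepsilon I$ and use continuity of the base section norm; a secondary subtlety is keeping track of the transpose in $C(\Phi_{\sigma,I})$, which I would neutralize exactly as in Theorem~\ref{thm:diamond} by invoking invariance of $\|\cdot\|_{\tilde\Pe}$ under the relevant local maps.
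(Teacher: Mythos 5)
Your proposal is correct and follows essentially the same route as the paper's proof: start from (\ref{eq:diamondu_inf}), pass to adjoints and the Choi representation so that $\lambda\Phi_{I,\sigma}\pm\phi\in\Pe^*(H,K)$ becomes $C(\lambda\Phi_{I,\sigma}^*\pm\phi^*)\in C(\tilde\Pe)$, compute $C(\Phi_{I,\sigma}^*)=I_H\otimes\sigma^t$, and read off the order unit norm after conjugating by $I_H\otimes\sigma^{-1/2}$. The only slip is the parenthetical claim that $C$ intertwines the $\Pe$-order on $\Le(K,H)$ with $\le_{\tilde\Pe}$ --- taking adjoints carries $\Pe^*(H,K)$ onto $\tilde\Pe(K,H)$, whose Choi image is by definition the cone giving $\le_{\tilde\Pe}$ --- but the displayed condition you arrive at is the correct one, and your handling of non-invertible $\sigma$ goes slightly beyond what the paper records.
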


\begin{proof} By (\ref{eq:diamondu_inf}) and Remark \ref{rem:norms}, we obtain
\begin{align*}
\|\phi\|^\diamond_{\Pe}&=\inf_{\sigma\in \states(K)}\inf\{\lambda>0, C(\lambda \Phi_{I,\sigma}\pm \phi)\in C(\Pe^*)\}\\
&=\inf_{\sigma\in \states(K)}\inf\{\lambda>0, C(\lambda \Phi_{I,\sigma}^*\pm\phi^*)\in C(\tilde \Pe)\}
\end{align*}
Since $C(\Phi^*_{I,\sigma})=I_H\otimes \sigma^t$, where $\sigma^t$ is the transpose of $\sigma$, this proves the lemma.

\end{proof}

An important property of the norms is monotonicity under composition with channels.

\begin{lemma}\label{lemma:monot}
Let $\alpha$ and $\beta$ be $\Pe$-channels or CP-channels, $\phi\in \Le(\Ae,\Be)$. Then
\[
\|\beta\circ\phi\circ\alpha\|_\diamond\le \|\phi\|_\diamond,\qquad \|\beta\circ\phi\circ\alpha\|^\diamond\le \|\phi\|^\diamond.
\]

\end{lemma}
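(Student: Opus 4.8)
The plan is to use the infimum characterizations \eqref{eq:diamond_inf} and \eqref{eq:diamondu_inf} of the two norms together with the composition properties of admissible families. I first treat $\|\cdot\|_\diamond$. Fix $\phi\in\Le(\Ae,\Be)$ and let $\gamma\in\Ce(\Be,\Ce)$ be a channel (into some third algebra $\Ce$) and $\alpha\in\Ce(\De,\Ae)$ a channel; I want $\|\gamma\circ\phi\circ\alpha\|_\diamond\le\|\phi\|_\diamond$. By \eqref{eq:diamond_inf}, for any $\lambda>\|\phi\|_\diamond$ there is $\delta\in\Ce(\Ae,\Be)$ with $-\lambda\delta\le\phi\le\lambda\delta$, i.e.\ $\lambda\delta-\phi\ge 0$ and $\lambda\delta+\phi\ge 0$ in the order $\le$ on $\Le(\Ae,\Be)$. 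The key point is that composing on the left by $\gamma$ and on the right by $\alpha$ preserves this positivity: if $\psi\ge 0$ in $\Le(\Ae,\Be)$, meaning $\psi\in\Pe(\Ae,\Be)$, then $\gamma\circ\psi\circ\alpha\in\Pe(\Ce,\De)$ by property (ii) of admissible families when $\gamma,\alpha$ are $\Pe$-channels, and by property (iii) when they are CP-channels (using $CP\subseteq\Pe$ in the latter case is not even needed — (iii) directly gives $CP\circ\Pe\circ CP\subseteq\Pe$). Hence $\lambda(\gamma\circ\delta\circ\alpha)\pm(\gamma\circ\phi\circ\alpha)\ge 0$. Moreover $\gamma\circ\delta\circ\alpha$ is again trace preserving, hence a $\Pe$-channel, so $\gamma\circ\delta\circ\alpha\in\Ce(\Ce,\De)$. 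Applying \eqref{eq:diamond_inf} to $\gamma\circ\phi\circ\alpha$ with this channel gives $\|\gamma\circ\phi\circ\alpha\|_\diamond\le\lambda$, and letting $\lambda\to\|\phi\|_\diamond$ finishes the first inequality.

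For the dual norm the argument is dual, using \eqref{eq:diamondu_inf}. Fix $\phi\in\Le(\Ae,\Be)$ and channels $\alpha,\gamma$ as above; I want $\|\gamma\circ\phi\circ\alpha\|^\diamond\le\|\phi\|^\diamond$. For $\lambda>\|\phi\|^\diamond$ there is $\sigma\in\states(\Be)$ with $-\lambda\Phi_{I,\sigma}\le^*\phi\le^*\lambda\Phi_{I,\sigma}$, i.e.\ $\lambda\Phi_{I,\sigma}\pm\phi\in\Pe^*(\Be,\Ae)$. Now I compose: $\gamma\circ(\lambda\Phi_{I,\sigma}\pm\phi)\circ\alpha=\lambda\,\gamma\circ\Phi_{I,\sigma}\circ\alpha\pm\gamma\circ\phi\circ\alpha$. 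Here I need two facts. First, by the remark in the excerpt that $\Pe^*$ satisfies (i) and (iii) and that $\Pe^*\circ\Pe,\ \Pe\circ\Pe^*\subseteq\Pe^*\cap CP$ whenever $\Pe$ is admissible: since $\gamma,\alpha$ are $\Pe$-channels (so in $\Pe$) or CP-channels (so in $CP\subseteq\Pe$), composing an element of $\Pe^*$ on either side by an element of $\Pe$ stays in $\Pe^*$; iterating, $\gamma\circ\Pe^*\circ\alpha\subseteq\Pe^*$. Second, $\gamma\circ\Phi_{I,\sigma}\circ\alpha=\Phi_{I,\sigma'}$ where $\sigma'=\gamma(\sigma)\in\states(\Ce)$, because $\Phi_{I,\sigma}$ factors through the one–dimensional algebra: $\Phi_{I,\sigma}\circ\alpha$ maps $a\mapsto(\Tr\alpha(a))\sigma=(\Tr a)\sigma$ since $\alpha$ is trace preserving, and then $\gamma$ sends this to $(\Tr a)\gamma(\sigma)$. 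Thus $\lambda\Phi_{I,\sigma'}\pm\gamma\circ\phi\circ\alpha\in\Pe^*(\Ce,\De)$, and \eqref{eq:diamondu_inf} applied to $\gamma\circ\phi\circ\alpha$ with state $\sigma'$ gives $\|\gamma\circ\phi\circ\alpha\|^\diamond\le\lambda$; let $\lambda\to\|\phi\|^\diamond$.

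The statement as written has $\beta\circ\phi\circ\alpha$ with a single pair of channels, which is the special case $\gamma=\beta$ above; the general two–sided bound is exactly what the two invariance properties of the cone families are designed to deliver, so no separate argument is needed. The only genuinely delicate point is the bookkeeping for the mixed case where $\alpha$ is a $\Pe$-channel and $\beta$ a CP-channel (or vice versa): one must check that all the compositions still land in the right cone. This is handled uniformly by noting $CP\subseteq\Pe$ and $CP\subseteq CP$, so a CP-channel is in particular a $\Pe$-channel, and then properties (ii) and (iii) together with the displayed inclusion $\Pe\circ\Pe^*,\Pe^*\circ\Pe\subseteq\Pe^*$ cover every combination. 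I expect this case-checking — rather than any substantive inequality — to be the main thing to get right, and it is entirely routine given the structure set up in the preliminary section.
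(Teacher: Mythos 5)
Your argument is correct and is in substance the paper's own proof written out in full: the paper packages both estimates into the single observation that $L_\beta:\phi\mapsto\beta\circ\phi$ and $R_\alpha:\phi\mapsto\phi\circ\alpha$ are positive linear maps between the relevant spaces $\Le(\cdot,\cdot)$ sending the base section $\Ce$ into $\Ce$ and the dual section $\Se$ into $\Se$, and then invokes Proposition \ref{prop:app_maps}; unfolding that proposition gives exactly your manipulation of the infimum formulas (\ref{eq:diamond_inf}) and (\ref{eq:diamondu_inf}), including the computation $\beta\circ\Phi_{I,\sigma}\circ\alpha=\Phi_{I,\beta(\sigma)}$. One point needs correcting: in your closing paragraph you justify the mixed case by asserting $CP\subseteq\Pe$, which is false in general --- property (i) only gives $EB\subseteq\Pe\subseteq Pos$, and $\Pe=EB$ is an admissible family not containing $CP$. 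This does not break the proof, because, as you yourself note earlier, invariance under composition with CP-channels follows directly from property (iii) for $\Pe$ (and from the stated fact that $\Pe^*$ also satisfies (iii) and that $\Pe\circ\Pe^*,\Pe^*\circ\Pe\subseteq\Pe^*$); the mixed case should be handled by combining (ii) and (iii), not by an inclusion of $CP$ in $\Pe$. There is also a harmless index slip: $\lambda\Phi_{I,\sigma}\pm\phi$ with $\sigma\in\states(\Be)$ lies in $\Pe^*(\Ae,\Be)$, the dual cone ordering $\Le(\Ae,\Be)$, not in $\Pe^*(\Be,\Ae)$.
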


\begin{proof} Consider the maps $R_\alpha: \phi\mapsto\phi\circ\alpha$, $L_\beta: \phi\mapsto \beta\circ\phi$. Then $L_\beta$ is a 
linear map $\Le(\Ae,\Be)\to \Le(\Ae,\Be')$, preserving the cone $\Pe$ and  such that  $L_\beta(\Ce)\subseteq \Ce$, $L_\beta(\Se)\subseteq \Se$, similarly for  $R_\alpha$. The proof now follows by  Proposition \ref{prop:app_maps} (i), in the Appendix.
 \end{proof}

We now compute the norms  in the special case of the cq- and qc- maps. Note that all such maps belong to 
$\Le(\diag_n, B(H))$ resp. $\Le(B(H),\diag_n)$ and in this case $Pos=EB$, so that the norms do not depend from the choice of $\Pe$.

\begin{lemma}\label{lemma:qccq}  Let  $A=\{A_1,\dots,A_n\}\subset \Ae_h$. Then we have 
\begin{enumerate}
\item[(i)] $\|\Phi^{cq}_A\|^\diamond=\inf_{\sigma\in \states(\Ae)} \max_i\|\sigma^{-1/2}A_i\sigma^{-1/2}\|$.
\item[(ii)] $\|\Phi^{qc}_A\|^\diamond=\sum_i \|A_i\|$.
\item[(iii)] $\|\Phi^{cq}_A\|_{\diamond}=\max_i \|A_i\|_1$.
\item[(iv)] $\|\Phi^{qc}_A\|_{\diamond}=\sup_{\sigma\in \states(\Ae)}\sum_i\|\sigma^{1/2}A_i\sigma^{1/2}\|_1$
\end{enumerate}

\end{lemma}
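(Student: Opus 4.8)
My plan is to treat the four formulas as four specializations of the general expressions already derived, using the Choi representation together with the fact that for cq- and qc-maps the relevant cone collapses to $EB$ (equivalently, $\tilde\Pe = Sep$ becomes irrelevant since everything lives in a commutative-on-one-side tensor product). First I would write down the Choi matrices explicitly: for $\Phi^{cq}_A$ with input $\diag_n$, one has $C(\Phi^{cq}_A) = \sum_i A_i \otimes |i_H\>\<i_H|$, a block-diagonal element of $B(\Ae)\otimes\diag_n$, and similarly $C(\Phi^{qc}_A) = \sum_i |i_K\>\<i_K|\otimes A_i^t$ (up to the transpose bookkeeping from Remark \ref{rem:choi}). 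The point is that on such block-diagonal operators the order $\le_{\tilde\Pe}$ agrees with the usual positive-semidefinite order blockwise, because each block lives in a single $B(H_k)$ factor where $CP$, $Pos$, $EB$ all coincide; hence $\|\cdot\|_{\tilde\Pe}$ restricted to these operators is $\max_i$ of the operator norms of the blocks, and $\|\cdot\|_{1,\tilde\Pe}$ is $\sum_i$ of the trace norms of the blocks.

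For (iii), apply Theorem \ref{thm:diamond}: $\|\Phi^{cq}_A\|_\diamond = \sup_{\rho\in\states(H\otimes H)}\|(\Phi^{cq}_A\otimes id_H)(\rho)\|_{1,\tilde\Pe}$, where now $H = H_n$ carries $\diag_n$. Since $\Phi^{cq}_A\otimes id_H$ first dephases the first tensor factor onto $\diag_n$, it suffices to take $\rho$ block-diagonal, $\rho = \sum_i p_i |i_H\>\<i_H|\otimes \rho_i$, giving image $\sum_i p_i\, A_i\otimes\rho_i$ whose $\|\cdot\|_{1,\tilde\Pe}$-norm is $\sum_i p_i \|A_i\otimes\rho_i\|_1 = \sum_i p_i\|A_i\|_1$ (here $\rho_i$ is a state, and the norm on a product block is multiplicative); the supremum over probability vectors $p$ is $\max_i\|A_i\|_1$. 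For (ii), use Lemma \ref{lemma:cp_udiam} (or the dual characterization \eqref{eq:diamondu_inf}): $\|\Phi^{qc}_A\|^\diamond = \inf_{\sigma\in\states(K)}\|(I\otimes\sigma^{-1/2})C((\Phi^{qc}_A)^*)(I\otimes\sigma^{-1/2})\|_{\tilde\Pe}$ with $K = \diag_m$, and since $\sigma\in\states(\diag_m)$ is forced to be diagonal, $\sigma = \sum_i s_i|i_K\>\<i_K|$ with $\sum_i s_i = 1$, the conjugation just rescales the $i$-th block by $s_i^{-1}$, so the $\|\cdot\|_{\tilde\Pe}$-norm becomes $\max_i s_i^{-1}\|A_i\|$; minimizing over the simplex (optimal choice $s_i \propto \|A_i\|$) yields $\sum_i\|A_i\|$.

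Parts (i) and (iv) are the dual pair and follow the same pattern but with a \emph{general} (noncommutative) algebra $\Ae$ on the quantum side, so the state $\sigma\in\states(\Ae)$ ranges over a genuinely noncommutative set. For (i), Lemma \ref{lemma:cp_udiam} gives $\|\Phi^{cq}_A\|^\diamond = \inf_{\sigma\in\states(\Ae)}\|(I_H\otimes\sigma^{-1/2})C((\Phi^{cq}_A)^*)(I_H\otimes\sigma^{-1/2})\|_{\tilde\Pe}$; since $C((\Phi^{cq}_A)^*)$ is still block-diagonal over the classical index (each block being $A_i^t$ on the $\Ae$-factor, tensored with $|i\>\<i|$), the $\tilde\Pe$-order-unit norm is the max over $i$ of the operator norm of the conjugated block, i.e. $\max_i\|\sigma^{-1/2}A_i\sigma^{-1/2}\|$ after reabsorbing the transpose by a unitary. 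For (iv), Theorem \ref{thm:diamond} gives $\sup_{\rho\in\states(H\otimes H)}\|(\Phi^{qc}_A\otimes id)(\rho)\|_{1,\tilde\Pe}$; the $qc$-map dephases the first factor onto $\diag_m$, producing $\sum_i |i\>\<i|\otimes \mathrm{Tr}_1[(A_i\otimes I)\rho]$, a block-diagonal operator whose $\|\cdot\|_{1,\tilde\Pe}$-norm is $\sum_i \|\mathrm{Tr}_1[(A_i\otimes I)\rho]\|_1$; optimizing over $\rho$, one reduces to pure $\rho = |x_\sigma\>\<x_\sigma|$ (as in the proof of Theorem \ref{thm:diamond}), for which $\mathrm{Tr}_1[(A_i\otimes I)|x_\sigma\>\<x_\sigma|] = \sigma^{1/2}A_i^t\sigma^{1/2}$ up to transpose, giving $\sup_\sigma\sum_i\|\sigma^{1/2}A_i\sigma^{1/2}\|_1$. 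The main obstacle I anticipate is purely bookkeeping: keeping the transposes from the Choi convention straight and verifying carefully that restricting the sup/inf to block-diagonal $\rho$ resp. diagonal $\sigma$ (when one side is classical) is without loss of generality — this uses that $\|\cdot\|_{1,\tilde\Pe}$ and $\|\cdot\|_{\tilde\Pe}$ are monotone under the pinching (trace-preserving conditional expectation) onto the classical subalgebra, which in turn is an $EB$ and hence $CP$ channel, so Lemma \ref{lemma:monot} or the data-processing inequality applies.
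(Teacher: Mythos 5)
Your proposal is correct in substance but takes a genuinely different route from the paper. The paper never touches the Choi representation here: it works directly with the abstract formulas (\ref{eq:diamond_sup})--(\ref{eq:diamondu_inf}). For (i) it unwinds the condition $-\lambda\Phi_{I,\sigma}\le^*\Phi^{cq}_A\le^*\lambda\Phi_{I,\sigma}$ into $-\lambda\sigma\le A_i\le\lambda\sigma$ for each $i$ and reads off the infimum; for (ii) it parametrizes the channels $\alpha\in\Ce(\diag_n,\Ae)$ in (\ref{eq:diamondu_sup}) as cq-channels and evaluates the pairing via (\ref{eq:s_cq}); and then it gets (iii) and (iv) \emph{for free} from (ii) and (i) by duality of $\|\cdot\|_\diamond$ and $\|\cdot\|^\diamond$, rather than proving all four independently as you do. Your route through Theorem \ref{thm:diamond}, Lemma \ref{lemma:cp_udiam} and the explicit Choi matrices is valid, and it buys a concrete picture (the key observation that block-diagonal elements over a classical index lie in $Sep$ iff they lie in the block-positive cone iff each block is positive, so that $\|\cdot\|_{\tilde\Pe}$ and $\|\cdot\|_{1,\tilde\Pe}$ collapse to $\max_i$ of operator norms and $\sum_i$ of trace norms, and the pinching monotonicity justifying the restriction to block-diagonal arguments); but it costs more bookkeeping (the transposes, and the fact that Lemma \ref{lemma:cp_udiam} is stated for full matrix algebras so you must fall back on (\ref{eq:diamondu_inf}) with $\sigma\in\states(\Ae)$ when $\Ae\subsetneq B(K)$). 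One imprecision to fix: in your argument for (iii), the image $\sum_i p_i\,A_i\otimes\rho_i$ does \emph{not} in general have $\|\cdot\|_{1,\tilde\Pe}$-norm equal to $\sum_i p_i\|A_i\|_1$ --- if the $\rho_i$ coincide there can be cancellations among the $A_i$ --- but the triangle inequality gives ``$\le$'', which combined with the single-term lower bound $\|A_{i_0}\otimes\rho_0\|_{1,\tilde\Pe}=\|A_{i_0}\|_1$ (test against $\mathrm{sgn}(A_{i_0})\otimes I\in\Oe_{\tilde B}$) still yields $\max_i\|A_i\|_1$, so the conclusion stands.
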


\begin{proof} 
\begin{enumerate}
\item[(i)] For $\sigma\in \states(\Ae)$ and $\lambda>0$, $-\lambda\Phi_{I,\sigma}\le^*\Phi^{cq}_A\le^*\lambda\Phi_{I,\sigma}$ if and only if $-\lambda\sigma\le A_i\le \lambda\sigma$ for all $i$.
If $\mathrm{supp} (A_i)\subseteq \mathrm{supp}(\sigma)$, then the infimum over  such $\lambda$ is equal to 
$\max_i\|\sigma^{-1/2}A_i\sigma^{-1/2}\|$, otherwise the infimum is $\infty$. By  (\ref{eq:diamondu_inf}), this implies
 (i).
\item[(ii)] Any $\alpha\in \Ce(\diag_n,\Ae)$ has the form $\alpha=\Phi^{cq}_\Ee$, $\Ee=\{\rho_1,\dots,\rho_n\}\subset \states(\Ae)$ and by (\ref{eq:diamondu_sup}),
\[
\|\Phi_A^{qc}\|^\diamond=\sup_{\substack{\rho_i\in \states(\Ae)\\-\rho_i\le F_i\le \rho_i, \forall i}}\sum_i\Tr A_iF_i=
\sum_i \sup_{\substack{\rho\in\states(\Ae)\\ -\rho\le F\le \rho}} \Tr A_iF=\sum_i\|A_i\|.
\]
\item[(iii)] Using (ii) and duality of the norms, we obtain
\[
\|\Phi^{cq}_A\|_\diamond=\sup_{\substack{B_i\in \Ae_h\\,\sum_i\|B_i\|\le 1}} \Tr A_iB_i=\max_i\|A_i\|_1.
\]
\item[(iv)]  Follows similarly  from (i) by duality. 
\end{enumerate}
 
\end{proof}

\section{Randomization theorems}
In this section we generalize the theory of deficiency and randomization  of  quantum statistical experiments to quantum channels.

\subsection{Comparison of statistical experiments}\label{sec:comparison}

In classical statistics, the theory of statistical experiments and their comparison was introduced by Blackwell in \cite{blackwell1951comparison} and further developed by many authors, for more information see \cite{torgersen1991comparison} or \cite{strasser1985statistics}. This theory has been extended to the quantum case in \cite{buscemi2012comparison,matsumoto2010randomization,jencova2012comparison}.

  Although the theory e.g. in \cite{matsumoto2010randomization} allows a more general setting, we will define a \emph{(quantum) statistical experiment} (or just \emph{experiment}) as a finite subset $\Ee=\{\rho_1,\dots,\rho_n\}\subset \states(\Ae)$ for some finite dimensional C*-algebra $\Ae$.
Note that this setting contains also classical statistical experiments supported on a finite set $X$, $|X|=N$, by putting  $\Ae=\diag_N$. 

Any experiment can be 
viewed as the set of possible states of some physical system, determined by some prior information on the true state.
 Based on the outcome of a measurement on the system, a decision $j$  is chosen from a (finite) set $D$ of decisions. This procedure, or a \emph{decision rule}, is represented by a POVM with outcomes in $D$, that is, a
collection $M=\{M_j, j\in D\}$ of positive operators in $\Ae$ such that $\sum_j M_j=I$. We will denote the set of all 
POVMs with $k$ outcomes  by $\povm(\Ae,k)$ and the set of all POVMs in $\Ae$ by $\povm(\Ae)$. 

The performance of a decision rule is assessed by a \emph{payoff function}, which in our case is an $n\times k$ matrix 
$g=(g_{ij})$ with nonnegative entries representing the {payoff} 
obtained if $j\in D$ is chosen while the true state is $\rho_i$.  The average payoff of the decision rule $M$ at $\rho_i$ is computed as 
\[
P_\Ee(i,M,g)=\sum_{j\in D} g_{ij}\Tr\rho_iM_j.
\]
We call the pair $(D,g)$ a \emph{(classical) decision space}.  The triple $(\Ee,D,g)$ where $\Ee$ is an experiment and $(D,g)$ is a decision space is called a 
\emph{(classical) decision problem}. 

As a natural generalization, a \emph{quantum  decision space} was defined in \cite{matsumoto2010randomization}
 as a Hilbert space $D$ with a collection of positive payoff operators $G=\{G_1,\dots,G_n\}\subset B(D)^+$, or more generally a C*-algebra $\De$ with
$G\subset \De^+$.  In this case, decision rules  are represented by CP-channels 
$\alpha: \Ae\to \De$ and the payoff is computed as 
\[
P_\Ee(i,\alpha,G)=\Tr \alpha(\rho_i)G_i.
\] 
It is easy to see that classical decision spaces can be identified with quantum ones with $\De=\diag_k$  and that in  this case, any decision rule is given by  a POVM. 

Similarly as for classical statistical experiments, deficiency of one experiment with respect to another is defined by comparing all possible payoffs of decision rules for all decision spaces. The following definition is very similar to the one given in 
\cite{matsumoto2010randomization}. 

\begin{defi}\label{def:defi_exps}
Let $\Ee=\{\rho_1,\dots,\rho_n\}\subset \states(\Ae)$, $\Fe=\{\sigma_1,\dots,\sigma_n\}\subset\states(\Ae')$ be statistical experiments. 
Let $\epsilon\ge 0$ and let $\De$ be a C*-algebra.
We say that $\Fe$ is \emph{ $(\epsilon,\De)$-deficient} with respect
 to $\Ee$, in notation $\Fe\succeq_{\epsilon,\De}\Ee$, if for every sequence $G=\{G_1,\dots,G_n\}\subset \De^+$
 and any $\alpha\in \Ce_{CP}(\Ae,\De)$ there is some $\alpha'\in \Ce_{CP}( \Ae',\De) $ such that
\[
P_\Ee(i,\alpha,G)\le P_\Fe(i,\alpha',G)+\epsilon\|G_i\|,\qquad  i=1,\dots,n.
\]
If $\Fe\succeq_{\epsilon,\De}\Ee$ for all $\De$, we say that $\Fe$ is \emph{$\epsilon$-deficient} with respect to $\Ee$,
 in notation $\Fe\succeq_\epsilon \Ee$.
\end{defi}

If we restrict to $\De=\diag_k$, equivalently to decision spaces with commuting payoff operators, the corresponding
$(\epsilon,\De)$-deficiency is called \emph{classical}.

Let $\alpha\in \Ce_{CP}(\Ae,\Be)$ and let $\Ee\subset \states(\Ae)$ be an experiment. Then $\alpha(\Ee)\subset\states(\Be)$ is again an experiment, called a \emph{randomization} of $\Ee$. 
The classical randomization theorem proved by Torgersen \cite{torgersen1970comparison}, see also \cite{torgersen1991comparison} or \cite{strasser1985statistics}, relates (classical) deficiency
of classical statistical experiments to their randomizations.
 The following  quantum generalization of the theorem was obtained in  
\cite[Theorem 4]{matsumoto2010randomization}, for classical decision spaces  also in \cite[Theorem 2]{jencova2012comparison}. 

\begin{thm}
\label{thm:qrand} Let $\Ee=\{\rho_1,\dots,\rho_n\}\subset\states(\Ae)$,
 $\Fe=\{\sigma_1,\dots,\sigma_n\}\subset\states(\Ae')$. Then $\Fe\succeq_{\epsilon,\De}\Ee$ if and only if for every $\alpha\in \Ce_{CP}(\Ae,\De)$ there is some $\alpha'\in \Ce_{CP}(\Ae',\De)$ such that 
\begin{equation}\label{eq:qrt}
\|\alpha'(\sigma_i)-\alpha(\rho_i)\|_1\le 2\epsilon,\quad i=1,\dots,n
\end{equation}
and  $\Fe\succeq_\epsilon\Ee$ if and only if there is a randomization $\Fe'=\{\sigma_1',\dots,\sigma_n'\}\subset \states(\Ae)$ of 
$\Fe$, such that $\|\sigma_i'-\rho_i\|_1\le 2\epsilon$ for all $i$.

\end{thm}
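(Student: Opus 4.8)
The plan is to mimic the classical randomization argument via the minimax theorem (as in \cite{matsumoto2010randomization, strasser1985statistics}): the second equivalence will follow from the first by specializing the decision space, so the work is concentrated in the nontrivial implication of the first one. For the easy implication of the first equivalence, suppose that for every $\alpha\in\Ce_{CP}(\Ae,\De)$ there is $\alpha'\in\Ce_{CP}(\Ae',\De)$ satisfying (\ref{eq:qrt}); to verify $\Fe\succeq_{\epsilon,\De}\Ee$, fix such an $\alpha$, a payoff family $G=\{G_1,\dots,G_n\}\subset\De^+$, and a corresponding $\alpha'$. The operator $\mu_i:=\alpha(\rho_i)-\alpha'(\sigma_i)$ is Hermitian and traceless (a difference of states), so $\Tr(\mu_i)_+=\tfrac12\|\mu_i\|_1\le\epsilon$, and therefore
\[
P_\Ee(i,\alpha,G)-P_\Fe(i,\alpha',G)=\Tr\mu_iG_i\le\Tr(\mu_i)_+\,\|G_i\|\le\epsilon\|G_i\|,
\]
which is exactly the inequality in the definition of $\Fe\succeq_{\epsilon,\De}\Ee$; since $\alpha$ and $G$ were arbitrary, this proves the implication.

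For the hard implication, assume $\Fe\succeq_{\epsilon,\De}\Ee$ and fix $\alpha\in\Ce_{CP}(\Ae,\De)$. Set
\[
T:=\bigl\{\,G=(G_1,\dots,G_n)\ :\ G_i\in\De^+,\ \textstyle\sum_i\|G_i\|\le1\,\bigr\},
\]
a convex compact set, and let $\Psi(\alpha',G):=\sum_i\Tr\bigl(\alpha(\rho_i)-\alpha'(\sigma_i)\bigr)G_i$ on $\Ce_{CP}(\Ae',\De)\times T$, which is continuous and affine in each of its two arguments separately. Two observations drive the proof. First, for fixed $\alpha'$ the supremum of $\Psi(\alpha',\cdot)$ over $T$ concentrates on a single index (take $G_i$ equal to the support projection of $\bigl(\alpha(\rho_i)-\alpha'(\sigma_i)\bigr)_+$ at a maximizing $i$ and the other $G_j=0$), giving
\[
\sup_{G\in T}\Psi(\alpha',G)=\max_i\Tr\bigl(\alpha(\rho_i)-\alpha'(\sigma_i)\bigr)_+=\tfrac12\max_i\|\alpha(\rho_i)-\alpha'(\sigma_i)\|_1 ,
\]
the last step again by tracelessness; hence (\ref{eq:qrt}) for a given $\alpha'$ is equivalent to $\sup_{G\in T}\Psi(\alpha',G)\le\epsilon$. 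Second, for each fixed $G\in T$ the $G_i$ form a payoff family in $\De^+$, so $\Fe\succeq_{\epsilon,\De}\Ee$ yields some $\alpha'$ with $\Tr\bigl(\alpha(\rho_i)-\alpha'(\sigma_i)\bigr)G_i\le\epsilon\|G_i\|$ for all $i$; summing over $i$ gives $\Psi(\alpha',G)\le\epsilon\sum_i\|G_i\|\le\epsilon$, hence $\inf_{\alpha'}\Psi(\alpha',G)\le\epsilon$ for every $G\in T$ and thus $\sup_{G\in T}\inf_{\alpha'}\Psi(\alpha',G)\le\epsilon$. Since $\Ce_{CP}(\Ae',\De)$ and $T$ are convex and compact and $\Psi$ is continuous and affine in each variable, the minimax theorem gives
\[
\inf_{\alpha'}\sup_{G\in T}\Psi(\alpha',G)=\sup_{G\in T}\inf_{\alpha'}\Psi(\alpha',G)\le\epsilon ,
\]
and by compactness the outer infimum is attained at some $\alpha'\in\Ce_{CP}(\Ae',\De)$, which then satisfies (\ref{eq:qrt}) by the first observation; as $\alpha$ was arbitrary, this establishes the condition in the statement. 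I expect the only genuinely delicate points to be checking the hypotheses of the minimax theorem and the first identity, which recasts the uniform-in-$i$ trace-norm bound (\ref{eq:qrt}) as a single supremum over the convex set $T$.

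The second equivalence follows by specializing the decision space. If $\Fe'=\{\sigma_1',\dots,\sigma_n'\}=\beta(\Fe)$ for some $\beta\in\Ce_{CP}(\Ae',\Ae)$ with $\|\sigma_i'-\rho_i\|_1\le2\epsilon$ for all $i$, then for every C*-algebra $\De$ and every $\alpha\in\Ce_{CP}(\Ae,\De)$ the channel $\alpha':=\alpha\circ\beta\in\Ce_{CP}(\Ae',\De)$ satisfies $\|\alpha'(\sigma_i)-\alpha(\rho_i)\|_1=\|\alpha(\sigma_i'-\rho_i)\|_1\le\|\sigma_i'-\rho_i\|_1\le2\epsilon$, using that CP channels are contractive for $\|\cdot\|_1$ (which also follows from Lemmas \ref{lemma:monot} and \ref{lemma:qccq}); by the first equivalence $\Fe\succeq_{\epsilon,\De}\Ee$ for every $\De$, i.e.\ $\Fe\succeq_\epsilon\Ee$. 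Conversely, if $\Fe\succeq_\epsilon\Ee$, apply the first equivalence with $\De=\Ae$ and $\alpha=id_\Ae$: it provides $\beta\in\Ce_{CP}(\Ae',\Ae)$ with $\|\beta(\sigma_i)-\rho_i\|_1\le2\epsilon$ for all $i$, and $\Fe':=\beta(\Fe)=\{\beta(\sigma_1),\dots,\beta(\sigma_n)\}\subset\states(\Ae)$ is the required randomization of $\Fe$.
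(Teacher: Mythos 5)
Your proof is correct. Both directions of the first equivalence are sound: the easy implication correctly exploits that $\mu_i=\alpha(\rho_i)-\alpha'(\sigma_i)$ is traceless so that $\Tr(\mu_i)_+=\tfrac12\|\mu_i\|_1$, and the hard implication correctly identifies $\sup_{G\in T}\Psi(\alpha',G)$ with $\tfrac12\max_i\|\alpha(\rho_i)-\alpha'(\sigma_i)\|_1$ and applies the minimax theorem to a function affine in each variable over two compact convex sets; the derivation of the second equivalence from the first by taking $\De=\Ae$, $\alpha=id_\Ae$ and by contractivity of $\|\cdot\|_1$ under channels is also fine. The paper does not reprove this theorem directly (it is quoted from Matsumoto and from the author's earlier work); instead it observes, via Lemma \ref{lemma:qccq}, that deficiency of experiments is the special case $\Phi=\Phi^{cq}_\Ee$, $\Psi=\Phi^{cq}_\Fe$ of post-processing deficiency, and then derives the statement from the general Theorem \ref{thm:post}, whose proof runs a minimax argument over the unit ball of $\|\cdot\|^\diamond$. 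Your set $T$ is precisely the positive part of that unit ball for qc-maps (Lemma \ref{lemma:qccq} (ii)), so your argument is the concrete specialization of the paper's route: you trade the base-section-norm machinery for an explicit computation of the inner supremum, which makes the proof self-contained but does not generalize beyond cq-channels, whereas the paper's formulation extends verbatim to arbitrary channels and admissible cones.
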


We now make a few observations. First, for any experiment $\Ee\subset \states(\Ae)$, consider   the cq-channel 
$\Phi^{cq}_\Ee:\diag_n\to \Ae$. If $\alpha\in \Ce_{CP}(\Ae,\Be)$, then $\Phi^{cq}_{\alpha(\Ee)}=\alpha\circ\Phi^{cq}_\Ee$, so that randomizations of the experiment are obtained by (post-) composition of the cq-channel. 
Furthermore, let $(\De,G)$ be a decision space, then we may identify the collection of payoff operators $G$ with the 
qc-channel $\Phi^{qc}_G:\De\to \diag_n$. Define
\[
P_\Ee(\alpha,G):=\<\alpha\circ\Phi^{cq}_\Ee,\Phi^{qc}_G\>=\sum_i P_\Ee(i,\alpha,G).
\]
It can be shown, e.g. by using the Hahn-Banach separation theorem as in \cite[Proposition 1]{buscemi2012comparison}, that  in Definition \ref{def:defi_exps}, the inequalities for each $i$ can be replaced by their sum over $i$, that is,  by
\begin{equation}\label{eq:defic}
P_\Ee(\alpha,G)\le P_\Fe(\alpha',G)+\epsilon\sum_i\|G_i\|=P_\Fe(\alpha',G)+\epsilon\|\Phi^{qc}_G\|^\diamond,
\end{equation}
where the last equality follows by Lemma \ref{lemma:qccq}. By the same Lemma, we see that  (\ref{eq:qrt}) is equivalent to
\[
\|\alpha\circ\Phi^{cq}_\Ee-\alpha'\circ\Phi^{cq}_\Fe\|_\diamond\le 2\epsilon.
\]

It is now clear that deficiency of quantum experiments and the quantum randomization theorem can be reformulated in terms of the dual norms 
$\|\cdot\|_\diamond$ and  $\|\cdot\|^\diamond$. 

\subsection{Post-processing deficiency}

We will now suppose that an admissible family $\Pe$ of positive cones is fixed. Let $\Phi\in \Ce(\Ae,\Be)$ and let 
$\De$ be a C*-algebra, 
which will be interpreted  as an algebra  of decisions. As before, decision rules will be given by channels $\alpha\in \Ce(\Be,\De)$, which 
 will act on the channel $\Phi$ by post-composition. In this way, we obtain from $\Phi$ a $\Pe$-channel $\alpha\circ\Phi: \Ae\to \De$, this is called a \emph{post-processing} of $\Phi$.
 
 Next we define the payoff connected with post-processings. In general, we may take a positive affine functional $\Gamma$  
 on the set $\Ce(\Ae,\De)$, that will serve  as the payoff functional,  and define the  payoff by  $P_\Phi(\alpha,\Gamma):=\Gamma(\alpha\circ \Phi)$. 
 Since $\Ce(\Ae,\De)$ is a base section in 
 $\Le(\Ae,\De)$, by Lemma \ref{lemma:app_krein} (Appendix) there is an element in $\Pe^*(\De,\Ae)$, also denoted by $\Gamma$,
 such that $P_\Phi(\alpha,\Gamma)=\<\alpha\circ\Phi,\Gamma\>$. Thus we define a  \emph{post-processing decision space } 
 as a pair $(\De,\Gamma)$, where $\Gamma\in \Pe^*(\De,\Ae)$ is called a \emph{payoff map}.  A \emph{post-processing decision problem} 
 is a triple $(\Phi,\De,\Gamma)$ with $\Phi\in \Ce(\Ae,\Be)$ and $\Gamma\in \Pe^*(\De,\Ae)$.

\begin{defi}
 Let $\Phi\in \Ce(\Ae,\Be)$ and $\Psi\in \Ce(\Ae,\Be')$. Let $\epsilon \ge 0$ and let $\De$ be a  C*-algebra.
 We say that $\Psi$ is \emph{$(\epsilon,\De)$-post-processing deficient}  with respect to $\Phi$, in notation  
$\Psi\succeq_{\epsilon,\De}\Phi$, if for any  $\Gamma\in \Pe^*(\De,\Ae)$ 
and any $\alpha\in \Ce(\Be,\De)$, there is some $\alpha'\in \Ce(\Be',\De)$ such that
 \[
\<\alpha\circ\Phi,\Gamma\>\le \<\alpha'\circ\Psi,\Gamma\>+\epsilon \|\Gamma\|^\diamond.
\]
If $\Psi\succeq_{\epsilon,\De}\Phi$ for all $\De$, we say that $\Psi$ is \emph{ $\epsilon$-post-processing deficient} with respect to $\Phi$, in notation, $\Psi\succeq_{\epsilon}\Phi$.
\end{defi}

We visualise the maps in the above definition in the following diagram, where the solid arrows represent the channels and the decision rules, while  the dashed arrow represents the payoff map:
\[
\xymatrix@C=4pc@R=3pc{\Ae \ar[r]^\Phi \ar[d]_\Psi & \Be \ar[d]^\alpha\\
\Be'\ar[r]^{\alpha'} & \De \ar@{-->}[ul]_\Gamma}
\]
The randomization theorem for post-processings states that $\Psi\succeq_{\epsilon,\De}$ if and only if  for all  $\alpha$ we may choose 
$\alpha'$  such that the diagram (of solid arrows) commutes up to  $\epsilon$ in the $\|\cdot\|_\diamond$-norm.

\begin{thm}\label{thm:post}\textbf{(Post-processings)} The following are equivalent.
\begin{enumerate}
\item[(i)] $\Psi\succeq_{\epsilon,\De}\Phi$.
\item [(ii)] For any  $\Gamma\in \Pe^*(\De,\Ae))$, 
\[
\|\Phi\circ\Gamma\|^\diamond\le \|\Psi\circ\Gamma\|^\diamond+ \epsilon \|\Gamma\|^\diamond
\]
\item [(iii)] For any $\Gamma\in \Le(\De,\Ae)$ and all $\alpha\in \Ce(\Be,\De)$, there is some $\alpha'\in \Ce(\Be',\De)$ such that
\[
\<\alpha\circ\Phi,\Gamma\>\le \<\alpha'\circ\Psi,\Gamma\>+2\epsilon \|\Gamma\|^\diamond
\]

\item [(iv)] For any $\alpha\in \Ce(\Be,\De)$ there is some $\alpha'\in \Ce(\Be',\De)$ such that 
\[
\|\alpha\circ\Phi-\alpha'\circ\Psi\|_{\diamond}\le 2\epsilon .
\]
\end{enumerate}
Moreover, if $id \in \Pe$, then $\Psi\succeq_{\epsilon}\Phi$ if and only if $\Psi\succeq_{\epsilon,\Be}\Phi$.
\end{thm}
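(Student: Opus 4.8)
The plan is to derive $\mathrm{(i)}\Leftrightarrow\mathrm{(ii)}$ directly from the duality of $\|\cdot\|_\diamond$ and $\|\cdot\|^\diamond$, to close the cycle $\mathrm{(i)}\Rightarrow\mathrm{(iii)}\Rightarrow\mathrm{(iv)}\Rightarrow\mathrm{(i)}$, and then to deduce the ``Moreover'' clause from $\mathrm{(iv)}$. The recurring device is the identity (\ref{eq:trian}), which turns $\langle\alpha\circ\Phi,\Gamma\rangle$ into $\langle\alpha,\Phi\circ\Gamma\rangle$, together with the remark after the definition of admissible families that $\Pe\circ\Pe^*\subseteq\Pe^*$, so $\Phi\circ\Gamma,\Psi\circ\Gamma\in\Pe^*(\De,\Be)$ whenever $\Gamma\in\Pe^*(\De,\Ae)$. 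For $\mathrm{(i)}\Leftrightarrow\mathrm{(ii)}$, fix $\Gamma\in\Pe^*(\De,\Ae)$; since $\Ce(\Be',\De)$ is compact, the inner clause of $\mathrm{(i)}$ is equivalent to $\sup_\alpha\langle\alpha\circ\Phi,\Gamma\rangle\le\sup_{\alpha'}\langle\alpha'\circ\Psi,\Gamma\rangle+\epsilon\|\Gamma\|^\diamond$. By (\ref{eq:trian}) the left-hand side equals $\sup_{\alpha\in\Ce(\Be,\De)}\langle\alpha,\Phi\circ\Gamma\rangle$, and since $\Phi\circ\Gamma\in\Pe^*$, in (\ref{eq:diamondu_sup}) the supremum over the symmetric range $-\alpha\le\xi\le\alpha$ is attained at $\xi=\alpha$, so this is exactly $\|\Phi\circ\Gamma\|^\diamond$; the same holds for the $\Psi$-term, and $\mathrm{(i)}$ for this $\Gamma$ is precisely the inequality of $\mathrm{(ii)}$.

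For $\mathrm{(i)}\Rightarrow\mathrm{(iii)}$, let $\Gamma\in\Le(\De,\Ae)$ be arbitrary. By (\ref{eq:diamondu_inf}) (the infimum being attained in finite dimension) there are $\sigma\in\states(\Ae)$ and $\lambda=\|\Gamma\|^\diamond$ with $-\lambda\Phi_{I,\sigma}\le^*\Gamma\le^*\lambda\Phi_{I,\sigma}$. Then $\Gamma_+:=\Gamma+\lambda\Phi_{I,\sigma}$ satisfies $0\le^*\Gamma_+\le^*2\lambda\Phi_{I,\sigma}$, hence $\Gamma_+\in\Pe^*(\De,\Ae)$ and $\|\Gamma_+\|^\diamond\le2\lambda$. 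Apply $\mathrm{(i)}$ to $\Gamma_+$ and the given $\alpha$; expanding both pairings and using $\langle\beta,\Phi_{I,\sigma}\rangle=s(\Phi_{I,\beta(\sigma)})=\Tr\beta(\sigma)=1$ for every channel $\beta$, the two $\lambda\langle\cdot,\Phi_{I,\sigma}\rangle$ terms cancel, leaving $\langle\alpha\circ\Phi,\Gamma\rangle\le\langle\alpha'\circ\Psi,\Gamma\rangle+\epsilon\|\Gamma_+\|^\diamond\le\langle\alpha'\circ\Psi,\Gamma\rangle+2\epsilon\|\Gamma\|^\diamond$, which is $\mathrm{(iii)}$.

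For $\mathrm{(iii)}\Rightarrow\mathrm{(iv)}$, fix $\alpha\in\Ce(\Be,\De)$. By duality $\|\alpha\circ\Phi-\alpha'\circ\Psi\|_\diamond=\sup_{\|\Gamma\|^\diamond\le1}\langle\alpha\circ\Phi-\alpha'\circ\Psi,\Gamma\rangle$, so $\mathrm{(iii)}$ says $\sup_{\|\Gamma\|^\diamond\le1}\inf_{\alpha'\in\Ce(\Be',\De)}\langle\alpha\circ\Phi-\alpha'\circ\Psi,\Gamma\rangle\le2\epsilon$; the pairing is affine in $\alpha'$, linear in $\Gamma$, and the channel set and the $\|\cdot\|^\diamond$-unit ball are compact convex, so Sion's minimax theorem swaps $\sup$ and $\inf$, giving $\inf_{\alpha'}\|\alpha\circ\Phi-\alpha'\circ\Psi\|_\diamond\le2\epsilon$ with the infimum attained, i.e. $\mathrm{(iv)}$. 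For $\mathrm{(iv)}\Rightarrow\mathrm{(i)}$, the key observation is that for any trace-annihilating $\psi\in\Le(\Ae,\De)$ and any $\Gamma\in\Pe^*(\De,\Ae)$ one has $\langle\psi,\Gamma\rangle\le\tfrac12\|\psi\|_\diamond\|\Gamma\|^\diamond$: for $\lambda>\|\psi\|_\diamond$ choose, by (\ref{eq:diamond_inf}), $\alpha\in\Ce(\Ae,\De)$ with $-\lambda\alpha\le\psi\le\lambda\alpha$; then $\lambda\alpha\pm\psi$ are positive and trace preserving, so $\beta_\pm:=\lambda^{-1}(\lambda\alpha\pm\psi)\in\Ce(\Ae,\De)$, $\psi=\tfrac{\lambda}{2}\beta_+-\tfrac{\lambda}{2}\beta_-$, and $\langle\psi,\Gamma\rangle\le\tfrac{\lambda}{2}\langle\beta_+,\Gamma\rangle\le\tfrac{\lambda}{2}\sup_{\beta\in\Ce(\Ae,\De)}\langle\beta,\Gamma\rangle=\tfrac{\lambda}{2}\|\Gamma\|^\diamond$, whence we let $\lambda\downarrow\|\psi\|_\diamond$. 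Applied with $\psi=\alpha\circ\Phi-\alpha'\circ\Psi$ (trace annihilating, $\|\psi\|_\diamond\le2\epsilon$ by $\mathrm{(iv)}$), this gives $\langle\alpha\circ\Phi,\Gamma\rangle\le\langle\alpha'\circ\Psi,\Gamma\rangle+\epsilon\|\Gamma\|^\diamond$, which is $\mathrm{(i)}$.

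For the ``Moreover'' clause one implication is trivial. Conversely, if $id\in\Pe$ then $id_\Be\in\Ce(\Be,\Be)$, so $\Psi\succeq_{\epsilon,\Be}\Phi$, via $\mathrm{(iv)}$ with $\De=\Be$ and $\alpha=id_\Be$, yields a single $\alpha_0'\in\Ce(\Be',\Be)$ with $\|\Phi-\alpha_0'\circ\Psi\|_\diamond\le2\epsilon$. For an arbitrary $\De$ and any $\alpha\in\Ce(\Be,\De)$, put $\alpha'=\alpha\circ\alpha_0'\in\Ce(\Be',\De)$; then $\alpha\circ\Phi-\alpha'\circ\Psi=\alpha\circ(\Phi-\alpha_0'\circ\Psi)$, and monotonicity under composition with channels (Lemma \ref{lemma:monot}) gives $\|\alpha\circ\Phi-\alpha'\circ\Psi\|_\diamond\le\|\Phi-\alpha_0'\circ\Psi\|_\diamond\le2\epsilon$, so $\mathrm{(iv)}$ holds for this $\De$ and hence $\Psi\succeq_{\epsilon,\De}\Phi$ for every $\De$. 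The step I expect to be the main obstacle is the $\sup$--$\inf$ exchange in $\mathrm{(iii)}\Rightarrow\mathrm{(iv)}$: one has to verify carefully that both feasible sets are compact and convex and that the bilinear pairing is jointly continuous, so that Sion's theorem applies. The factor-of-two bookkeeping in $\mathrm{(iv)}\Rightarrow\mathrm{(i)}$, i.e. the Jordan-type splitting of a trace-annihilating map (the noncommutative analogue of $\Tr(\rho-\sigma)_+=\tfrac12\|\rho-\sigma\|_1$ that converts the $2\epsilon$ back to $\epsilon$), is the other point deserving care.
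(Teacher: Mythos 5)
Your proof is correct and follows essentially the same route as the paper: duality of the two norms for (i)$\Leftrightarrow$(ii), the shift $\Gamma\mapsto\Gamma+\|\Gamma\|^\diamond\Phi_{I,\sigma}$ for (i)$\Rightarrow$(iii), the minimax theorem for (iii)$\Rightarrow$(iv), and the factor-$\tfrac12$ duality inequality for (iv)$\Rightarrow$(i). The only cosmetic differences are that you reprove Corollary \ref{coro:app_norm} (iii) inline via a Jordan-type splitting of the trace-annihilating map instead of citing it, and you obtain the ``Moreover'' clause by passing through condition (iv) and the monotonicity Lemma \ref{lemma:monot} rather than estimating the pairing directly.
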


\begin{proof}
Suppose (i), and let $\Gamma \in \Pe^*(\De,\Ae)$. Using Corollary \ref{coro:app_norm} (ii) (Appendix), we have for any $\phi\in \Ce(\Be,\De)$,
\begin{align*}
\<\phi,\Phi\circ\Gamma\>&=\<\phi\circ\Phi,\Gamma\>\le 
\sup_{\phi'\in \Ce(\Be',\De)} \<\phi'\circ\Psi,\Gamma\>+\epsilon\|\Gamma\|^\diamond\\
&=\sup_{\phi'\in \Ce(\Be',\De)}\<\phi',\Psi\circ\Gamma\>+\epsilon \|\Gamma\|^\diamond=
\|\Psi\circ\Gamma\|^\diamond+\epsilon \|\Gamma\|^\diamond.
\end{align*}
This implies (ii) and it is also not difficult to see that (ii) implies (i).

Again, suppose (i) is true and this time let $\Gamma\in \Le(\De,\Ae)$, $\|\Gamma\|^\diamond=t$. Then by (\ref{eq:diamondu_inf}) 
 there is some $\sigma\in \states(\Ae)$ such that $\Gamma+t\Phi^{cq}_\sigma\in \Pe^*(\De,\Ae)$ and 
$\|\Gamma+t\Phi^{cq}_\sigma\|^\diamond\le 2t$. By (i), for any $\alpha\in \Ce(\Be,\De)$, there is some $\alpha'\in \Ce(\Be',\De)$ such that
\[
\<\alpha\circ\Phi,\Gamma+t\Phi^{cq}_\sigma\>\le\<\alpha'\circ\Psi,\Gamma+t\Phi^{cq}_\sigma\>+\epsilon 2t 
\] 
Since $\<\Omega,t\Phi^{cq}_\sigma\>=t$ for any channel $\Omega$, this implies (iii).
  
Suppose (iii) and  let $\alpha\in \Ce(\Be,\De)$. Let $\Oe^\diamond$ be the unit ball for $\|\cdot\|^\diamond$. Then 
\[
\max_{\Gamma\in \Oe^\diamond}\min_{\alpha'\in \Ce(\Be',\De)}\<\alpha\circ\Phi-\alpha'\circ\Psi,\Gamma\>\le 2\epsilon.
\]
Since the sets $\mathcal O^\diamond$ and $\Ce(\Be',\De)$ are both compact and convex and the function
$(\Gamma,\alpha')\mapsto \<\alpha\circ\Phi-\alpha'\circ\Psi,\Gamma\>$ is linear in both arguments, the minimax theorem 
 applies, see e.g. \cite{strasser1985statistics}.
 We obtain
\[
2\epsilon\ge \min_{\alpha'\in \Ce(\Be',\De)}\max_{\Gamma\in \mathcal O^\diamond}\<\alpha\circ\Phi-\alpha'\circ\Psi,\Gamma\>=
\min_{\alpha'\in \Ce(\Be',\De)}\|\alpha\circ\Phi-\alpha'\circ\Psi\|_\diamond,
\]
which is (iv).

 Suppose (iv) and let $\Gamma\in \Pe^*(\De,\Ae)$, $\alpha\in \Ce(\Be,\De)$. Let $\alpha'\in \Ce(\Be',\De)$ be such that
 $\|\alpha\circ\Phi-\alpha'\circ\Psi\|_\diamond\le 2\epsilon $. Then by Corollary  \ref{coro:app_norm} (iii), we obtain that
\[
\<\alpha\circ\Phi-\alpha'\circ\Psi,\Gamma\>\le \frac 12\|\alpha\circ\Phi-\alpha'\circ\Psi\|_\diamond\|\Gamma\|^\diamond=\epsilon\|\Gamma\|^\diamond,
\]
so that (i) holds.

Finally, suppose that  $\Psi\succeq_{\epsilon,\Be}\Phi$. Then by putting  $\alpha=id_\Be$ in (iv), 
we obtain that there is some 
$\alpha'\in \Ce(\Be',\Be)$ such that
$\|\Phi-\alpha'\circ\Psi\|_\diamond\le 2\epsilon$. Let $\De$ be any C*-algebra and let $\Gamma\in \Pe^*(\De,\Ae)$, 
$\phi\in \Ce(\Be,\De)$. Then $\phi'= \phi\circ\alpha'\in \Ce(\Be',\De)$ satisfies
\begin{align*}
\<\phi\circ\Phi-\phi'\circ\Psi,\Gamma\>&=\<\phi\circ(\Phi-\alpha'\circ\Psi),\Gamma\>=\<\Phi-\alpha'\circ\Psi,\Gamma\circ\phi\>\\
&\le \frac12\|\Phi-\alpha'\circ\Psi\|_\diamond\|\Gamma\circ\phi\|^\diamond\le \epsilon\|\Gamma\|^\diamond,
\end{align*}
where  we used the fact that $\Pe^*\circ\Pe\subseteq \Pe^*$ and Corollary \ref{coro:app_norm} (iii) for the first inequality, and 
 Lemma \ref{lemma:monot} for the second. This shows  that $\Psi\succeq_\epsilon \Phi$, the opposite implication is clear.

\end{proof}

In the most important case, $\Phi$ and $\Psi$ are completely positive and by definition,  CP is special among all admissible families of cones.
We will establish some relations between the corresponding deficiencies if $\Phi,\Psi\in \Ce_\Pe\cap CP$.

\begin{coro} Let $\Phi,\Psi\in \Ce_{\Pe}\cap CP$. Then  $\Psi\succeq_{0, CP}\Phi$ implies 
$\Psi\succeq_{0,\Pe}\Phi$. If $CP\subseteq \Pe$, then $\Psi\succeq_{\epsilon, CP}\Phi$ implies 
$\Psi\succeq_{\epsilon,\Pe}\Phi$ for all $\epsilon\ge 0$.
\end{coro}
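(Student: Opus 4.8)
The plan is to reduce both implications to the existence of a single \emph{completely positive} channel $\beta$ relating $\Phi$ and $\Psi$, and then to transport it through the admissibility axioms of $\Pe$ and the monotonicity of $\|\cdot\|_\diamond$. First I would produce $\beta$. The family $CP$ is itself admissible, so Theorem \ref{thm:post} applies to it; the hypothesis $\Psi\succeq_{\epsilon,CP}\Phi$ includes in particular $\Psi\succeq_{\epsilon,\Be,CP}\Phi$, so the equivalence (i)$\Leftrightarrow$(iv) of that theorem, specialized to $\alpha=id_\Be$, furnishes a CP channel $\beta\in\Ce_{CP}(\Be',\Be)$ with $\|\Phi-\beta\circ\Psi\|_{\diamond,CP}\le 2\epsilon$. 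When $\epsilon=0$ this is the exact factorization $\Phi=\beta\circ\Psi$.

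For the first implication I would now check $\Psi\succeq_{0,\De,\Pe}\Phi$ straight from the definition, for an arbitrary C*-algebra $\De$. Given $\Gamma\in\Pe^*(\De,\Ae)$ and $\alpha\in\Ce(\Be,\De)$, set $\alpha':=\alpha\circ\beta$. Since $\beta$ is completely positive, admissibility property (iii) with the leftmost factor taken to be the identity (so that $\Pe\circ CP\subseteq\Pe$) gives $\alpha'\in\Pe(\Be',\De)$, and $\alpha'$ is trace preserving, hence $\alpha'\in\Ce(\Be',\De)$. Using $\Phi=\beta\circ\Psi$ we get $\<\alpha'\circ\Psi,\Gamma\>=\<\alpha\circ\beta\circ\Psi,\Gamma\>=\<\alpha\circ\Phi,\Gamma\>$, so the required inequality holds with equality. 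As this works for every $\De$, we conclude $\Psi\succeq_{0,\Pe}\Phi$; note that no relation between $CP$ and $\Pe$ was used.

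For the second implication, assume $CP\subseteq\Pe$. The point is first to transfer the diamond-norm estimate from $CP$ to $\Pe$: from $CP\subseteq\Pe$ one has $\Ce_{CP}(\Ae,\Be)\subseteq\Ce(\Ae,\Be)$ and the $\Pe$-order is coarser than the $CP$-order, so formula (\ref{eq:diamond_inf}) immediately gives $\|\cdot\|_{\diamond,\Pe}\le\|\cdot\|_{\diamond,CP}$ (alternatively, use Theorem \ref{thm:diamond} together with $\|\cdot\|_{1,\tilde\Pe}\le\|\cdot\|_{1,CP}=\|\cdot\|_1$, which holds since $C(\tilde\Pe)=C(\Pe)^*\subseteq (B^+)^*=B^+$). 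Hence $\|\Phi-\beta\circ\Psi\|_{\diamond,\Pe}\le 2\epsilon$. Now repeat the previous computation: with $\alpha\in\Ce(\Be,\De)$ and $\alpha'=\alpha\circ\beta\in\Ce(\Be',\De)$ as above, write $\alpha\circ\Phi-\alpha'\circ\Psi=\alpha\circ(\Phi-\beta\circ\Psi)$, and by Lemma \ref{lemma:monot} and Corollary \ref{coro:app_norm}(iii),
\[
\<\alpha\circ\Phi-\alpha'\circ\Psi,\Gamma\>\le\frac12\|\alpha\circ(\Phi-\beta\circ\Psi)\|_{\diamond,\Pe}\,\|\Gamma\|^\diamond\le\epsilon\|\Gamma\|^\diamond,
\]
which is the inequality defining $\Psi\succeq_{\epsilon,\De,\Pe}\Phi$; since $\De$ was arbitrary, $\Psi\succeq_{\epsilon,\Pe}\Phi$. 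I expect the only genuinely delicate step to be the norm comparison: keeping track of which way an inclusion of positive cones moves $\|\cdot\|_\diamond$ versus $\|\cdot\|^\diamond$, and recognizing that $CP\subseteq\Pe$ is precisely what makes $\|\cdot\|_{\diamond,\Pe}\le\|\cdot\|_{\diamond,CP}$ so that the $CP$-estimate survives -- whereas for $\epsilon=0$ this is vacuous, which explains why the first statement needs no hypothesis relating $CP$ and $\Pe$. Everything else is forced by admissibility property (iii) and Lemma \ref{lemma:monot}.
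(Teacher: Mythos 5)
Your proof is correct, but it takes a genuinely different route from the paper's. The paper argues directly from the definition of deficiency: given a payoff map $\Gamma\in \Pe^*(\De,\Ae)$ and a decision rule $\phi\in\Ce_\Pe(\Be,\De)$, it forms the composite payoff $\Gamma\circ\phi\in\Pe^*\cap CP$, applies the $CP$-deficiency hypothesis to \emph{this} payoff with the identity decision rule (so the resulting $\alpha\in\Ce_{CP}(\Be',\Be)$ is allowed to depend on $\Gamma$ and $\phi$), takes $\phi\circ\alpha\in\Pe\circ CP\subseteq\Pe$ as the new decision rule, and closes with the dual-norm comparison $\|\Gamma\circ\phi\|^\diamond_{CP}\le\|\Gamma\|^\diamond_{CP}\le\|\Gamma\|^\diamond_\Pe$. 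You instead invoke the equivalence (i)$\Leftrightarrow$(iv) of Theorem \ref{thm:post} for $\Pe=CP$ with $\alpha=id_\Be$ to extract a single channel $\beta\in\Ce_{CP}(\Be',\Be)$ with $\|\Phi-\beta\circ\Psi\|_{\diamond,CP}\le 2\epsilon$, and then transport this estimate into the $\Pe$-setting via the primal comparison $\|\cdot\|_{\diamond,\Pe}\le\|\cdot\|_{\diamond,CP}$. The two arguments are dual to one another: the hypothesis $CP\subseteq\Pe$ enters on opposite sides of the duality (as $\|\cdot\|^\diamond_{CP}\le\|\cdot\|^\diamond_\Pe$ in the paper, as $\|\cdot\|_{\diamond,\Pe}\le\|\cdot\|_{\diamond,CP}$ in yours), and in both it correctly becomes vacuous at $\epsilon=0$, matching the absence of that hypothesis in the first assertion. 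Your route pays the price of the minimax theorem hidden in (iii)$\Rightarrow$(iv) of Theorem \ref{thm:post}, which the paper's direct computation avoids, but it buys the stronger intermediate fact that one uniform $\beta$ serves every decision space $\De$ and every decision rule simultaneously; both routes rely in the same way on the admissibility property $\Pe\circ CP\subseteq\Pe$ and on Lemma \ref{lemma:monot}.
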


\begin{proof} Suppose $\Psi\succeq_{0,CP}\Phi$. Let $\Gamma\in \Pe^*(\De,\Ae)$ and $\phi\in\Ce_{\Pe}(\Be,\De)$, then 
$\Gamma\circ\phi\in \Pe^*\cap CP$ and by the assupmtion, there is some $\alpha\in \Ce_{CP}(\Be',\Be)$ such that
\[
\<\phi\circ\Phi,\Gamma\>=\<\Phi,\Gamma\circ\phi\>\le \<\alpha\circ\Psi,\Gamma\circ\phi\>=\<\phi\circ\alpha\circ\Psi,\Gamma\>.
\]
The first assertion now follows from the fact that $\phi\circ\alpha\in \Ce_{\Pe}(\Be',\De)$. The second statement is proved similarly, using the fact that $\|\Gamma\circ\phi\|^\diamond_{CP}\le \|\Gamma\|^\diamond_{CP}\le\|\Gamma\|^\diamond_\Pe$ by Lemma \ref{lemma:monot}
 and  (\ref{eq:diamondu_sup}).

\end{proof}

\subsubsection{Purely quantum and classical deficiency}

There are two important types of decision problems. If $\De=B(D)$ for a Hilbert space $D$, 
deficiency will be called \emph{purely quantum}. 
It is clear that this depends only on $d_D$, so we will write $\Psi\succeq_{\epsilon,d_D}\Phi$ instead of $\Psi\succeq_{\epsilon, B(D)}\Phi$. 

On the other hand, if $\De=\diag_k$, deficiency will be called \emph{classical}. Any $\Gamma\in \Pe^*(\diag_k,\Ae)$ is a cq-map and any 
$\phi\in \Ce(\Be,\diag_k)$ is a qc-channel. It is easy to see that classical deficiency does not depend on the choice of the cone $\Pe$.

It is clear that purely quantum deficiency implies $\De$-deficiency for $\De\subset B(D)$, in particular classical deficiency of the given dimension. In general, there is little hope for the opposite implication. We next  show that purely quantum $k$-deficiency  follows from
$\diag_{k^2}$-deficiency  if $\Pe=CP$ and the channels are tensored 
 with a suitable fixed  channel. These results were inspired by the works of Shmaya \cite{shmaya2005comparison} an Buscemi \cite{buscemi2012comparison}.
In the rest of this paragraph, we assume that $\Pe=CP$. 

Let $d_D=k$. Let $\mathcal G_D:=\{U_j^D,\ j=1,\dots,k^2\}$ 
be a group of unitary operators in  $B(D)$ such that 
\[
\sum_{j} \frac 1{k}(U_j^D)^*aU^D_j=(\Tr a) I_D,\qquad a\in B(D)
\] 
Let $\theta^D_j$ denote the map in $CP(D,D)$, given by  
\[
\theta^D_j: a\mapsto \frac 1{ k}(U_j^D)^*aU_j^D,\quad j=1,\dots, k^2
\]

\begin{lemma}\label{lemma:pqcl} 
Let $\Gamma\in CP(D,H)$ and let $\Phi^{cq}_\Gamma:=\Phi^{cq}_{\{G_j, j=1,\dots,k^2\}}$, where 
$G_j:=(id_H\otimes \theta^D_j)(C(\Gamma))$. Then
\[
\bigl\{\<\phi,\Gamma\>,\ \phi\in \Ce(H,D)\bigr\}=\bigl\{ \<\Phi^{qc}_M, \Phi^{cq}_\Gamma\>,\ M\in \povm(H\otimes D, k^2\bigr\}.
\]
In particular, $\|\Gamma\|^\diamond=\|\Phi^{cq}_\Gamma\|^\diamond$.

\end{lemma}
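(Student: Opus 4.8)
The plan is to transport both sides of the asserted identity into the Choi picture and to recognise each of them as $\{\Tr C(\Gamma)\,W:\ W\in B(H\otimes D)^+,\ \ptr_D W=I_H\}$.

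For the left-hand side: given $\phi\in\Ce(H,D)=\Ce_{CP}(H,D)$, symmetry of the pairing and (\ref{eq:chois}) give $\<\phi,\Gamma\>=\<\Gamma,\phi\>=\Tr C(\Gamma)\,C(\phi^*)$, where $C(\phi^*)\in B(H\otimes D)^+$ since $\phi^*$ is completely positive, and $\ptr_D C(\phi^*)=\phi^*(I_D)=I_H$ since $\phi$ is trace preserving. Conversely $C:\Le(D,H)\to B_h(H\otimes D)$ is a linear bijection, and if $W\ge0$ with $\ptr_D W=I_H$ then the corresponding $\psi\in\Le(D,H)$ is completely positive and unital, so $\psi=\phi^*$ for $\phi:=\psi^*\in\Ce(H,D)$. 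Hence the left-hand set equals $\{\Tr C(\Gamma)W:\ W\ge 0,\ \ptr_D W=I_H\}$.

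For the right-hand side: by Example \ref{ex:cqqc}, and since $\Phi^{cq}_\Gamma(|j\>\<j|)=G_j$, we have $\<\Phi^{qc}_M,\Phi^{cq}_\Gamma\>=\sum_j\Tr G_jM_j$; substituting $G_j=(id_H\otimes\theta^D_j)(C(\Gamma))$ and passing $\theta^D_j$ to its Hilbert--Schmidt adjoint under the trace, this equals $\Tr C(\Gamma)\,W_M$ with $W_M:=\sum_j(id_H\otimes(\theta^D_j)^*)(M_j)$ and $(\theta^D_j)^*:b\mapsto\tfrac1k U^D_jb(U^D_j)^*$. Each $(\theta^D_j)^*$ is completely positive, so $W_M\ge0$; and since $\Tr\circ(\theta^D_j)^*=\tfrac1k\Tr$ and $\sum_jM_j=I$ we get $\ptr_D W_M=\tfrac1k\ptr_D(\sum_jM_j)=\tfrac{d_D}{k}I_H=I_H$. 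Thus every element of the right-hand set belongs to the left-hand one. For the reverse inclusion, given $\phi\in\Ce(H,D)$ put $M_j:=(id_H\otimes\theta^D_j)(C(\phi^*))$: then $M_j\ge0$, and from $\sum_j\theta^D_j=\tau_{D,D}$ one gets $\sum_jM_j=(\ptr_D C(\phi^*))\otimes I_D=I_{H\otimes D}$, so $M\in\povm(H\otimes D,k^2)$; moreover $(\theta^D_j)^*\theta^D_j=\tfrac1{k^2}id_D$ for each $j$, hence $\sum_j(\theta^D_j)^*\theta^D_j=id_D$, so $W_M=C(\phi^*)$ and $\<\Phi^{qc}_M,\Phi^{cq}_\Gamma\>=\Tr C(\Gamma)C(\phi^*)=\<\phi,\Gamma\>$. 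This proves the set identity.

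For the ``in particular'' claim: $\Gamma\in CP(D,H)=\Pe^*(D,H)$ and $\Phi^{cq}_\Gamma\in EB(\diag_{k^2},H\otimes D)\subseteq\Pe^*(\diag_{k^2},H\otimes D)$ are both positive in the dual order, and for such a map $\psi$ the inner supremum in (\ref{eq:diamondu_sup}) over $-\alpha\le\xi\le\alpha$ is attained at $\xi=\alpha$ (because $\<\psi,\alpha-\xi\>\ge0$), so $\|\psi\|^\diamond=\sup_{\alpha\in\Ce(\Be,\Ae)}\<\psi,\alpha\>$. Applied to $\Gamma$ and to $\Phi^{cq}_\Gamma$ (using that the $\Pe$-channels from $B(H\otimes D)$ to $\diag_{k^2}$ are exactly the qc-channels $\Phi^{qc}_M$, $M\in\povm(H\otimes D,k^2)$) this gives $\|\Gamma\|^\diamond=\sup_{\phi\in\Ce(H,D)}\<\phi,\Gamma\>$ and $\|\Phi^{cq}_\Gamma\|^\diamond=\sup_M\<\Phi^{qc}_M,\Phi^{cq}_\Gamma\>$, which coincide by the set identity. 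The one place that needs care is the bookkeeping of the Choi isomorphism --- in particular using $C(\phi^*)$ rather than $C(\phi)$, so that the relevant set of operators is cut out by $\ptr_D W=I_H$ rather than $\ptr_H W=I_D$ --- together with the two elementary identities $\sum_j\theta^D_j=\tau_{D,D}$ and $\sum_j(\theta^D_j)^*\theta^D_j=id_D$ that make $M_j=(id_H\otimes\theta^D_j)(C(\phi^*))$ a POVM recovering $C(\phi^*)$ through $W_M$; no step is deep, but an error in conventions would most easily slip in there.
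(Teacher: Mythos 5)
Your proof is correct and follows essentially the same route as the paper: pass to the Choi picture, identify both sides with $\{\Tr C(\Gamma)W:\ W\ge 0,\ \ptr_D W=I_H\}$ via the twirling maps $\theta^D_j$, and deduce the norm identity from Corollary \ref{coro:app_norm}~(ii). The one place you flagged as delicate is indeed where you improve on the printed argument: in the converse direction the POVM must be $M_j=(id_H\otimes\theta^D_j)(C(\phi^*))$ (so that $W_M=\sum_j(id_H\otimes(\theta^D_j)^*\circ\theta^D_j)(C(\phi^*))=C(\phi^*)$), whereas the paper writes $(\theta^D_j)^*$ there, which would produce $\sum_j((\theta^D_j)^*)^2\ne id_D$ — your choice of adjoint is the right one.
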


\begin{proof} Let $M\in \povm(H\otimes D,k^2)$. By (\ref{eq:s_cq})
\begin{align*}
\<\Phi^{qc}_M,\Phi^{cq}_\Gamma\>= \sum_j\Tr M_jG_j=\Tr\sum_j(id_H\otimes (\theta^D_j)^*)(M_j)C(\Gamma)=\Tr Y_MC(\Gamma),
\end{align*}
where $Y_M=\sum_j(id_H\otimes (\theta^D_j)^*)(M_j)$.
Since $Y_M\in B(H\otimes D)^+$ and
\[
\Tr_DY_M=\sum_j\frac 1{n}\Tr_D M_j=\frac 1{n}\Tr_DI_{H\otimes D}=I_H,
\]
there is some unital map $\psi\in CP(D,H)$ such that $Y_M=C(\psi)$. Then  $\phi:=\psi^*\in \Ce(H,D)$ and by 
Remark \ref{rem:choi},
\[
\Tr Y_MC(\Gamma)=\Tr C(\phi^*)C(\Gamma)=\<\phi,\Gamma\>.
\]
Conversely, let $\phi\in \Ce(H,D)$ and let
\[
M_j:=(id_H\otimes(\theta^D_j)^*)(C(\phi^*)), \qquad j=1,\dots,k^2.
\]
Then $M_j\ge 0$ and $\sum_j M_j=\Tr_DC(\phi^*)\otimes I_D=I_{H\otimes D}$, so that $M_j$ is a POVM and we have
\[
\<\phi,\Gamma\>=\Tr C(\phi^*)C(\Gamma)=\sum_jM_jG_j=\<\Phi^{qc}_M,\Phi^{cq}_\Gamma\>.
\]
The last statement follows by Corollary \ref{coro:app_norm} (ii) (Appendix) and Lemma \ref{lemma:base}.

\end{proof}

We now prove some relations between purely quantum and classical deficiency. Note that  full equivalence is proved only in the case $\Be\subseteq B(D)$.

\begin{thm}\label{thm:clpq} Let $\Phi\in \Ce(\Ae,\Be)$, $\Psi\in \Ce(\Ae,\Be')$ and let $D$ be a Hilbert space with $d_D=k$. 
\begin{enumerate}
\item[(i)]  Let  $\xi\in \Ce(H',D)$ be a  surjective channel. Then
$\Psi\otimes \xi\succeq_{0,\diag_{k^2}}\Phi\otimes\xi$ implies  $\Psi\succeq_{0,k} \Phi$.
\item[(ii)] For any $\epsilon\ge 0$, $\Psi\otimes id_D\succeq_{\epsilon,\diag_{k^2}} \Phi\otimes id_D$ implies  $\Psi\succeq_{\epsilon, k}\Phi$.
\item[(iii)] $\Psi\succeq_{\epsilon}\Phi \implies \Psi\otimes \xi\succeq_{\epsilon} \Phi\otimes \xi$
for any channel $\xi$ and any $\epsilon\ge 0$.
\end{enumerate}

\end{thm}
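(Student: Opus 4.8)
The plan is to settle (iii) first, by reducing $\epsilon$-deficiency to a single approximate factorization, and then to treat (i) and (ii) together, both being instances of the quantum-to-classical translation of decision problems furnished by Lemma~\ref{lemma:pqcl}. For (iii): since $\Pe=CP$ and $id\in CP$, the ``moreover'' clause of Theorem~\ref{thm:post}, together with the equivalence (i)$\Leftrightarrow$(iv) there applied with $\De=\Be$, shows that $\Psi\succeq_\epsilon\Phi$ if and only if there is a channel $\beta\in\Ce(\Be',\Be)$ with $\|\Phi-\beta\circ\Psi\|_\diamond\le 2\epsilon$ (one direction puts $\alpha=id_\Be$ in (iv); the other composes $\beta$ with an arbitrary $\alpha$ and uses Lemma~\ref{lemma:monot}). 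Given such $\beta$ and a channel $\xi:\mathcal X\to\mathcal Y$, set $\gamma:=\beta\otimes id_{\mathcal Y}\in\Ce(\Be'\otimes\mathcal Y,\Be\otimes\mathcal Y)$; then $(\Phi\otimes\xi)-\gamma\circ(\Psi\otimes\xi)=(\Phi-\beta\circ\Psi)\otimes\xi=(id_\Be\otimes\xi)\circ\bigl((\Phi-\beta\circ\Psi)\otimes id_{\mathcal X}\bigr)$, so by Lemma~\ref{lemma:monot} and the stability $\|\eta\otimes id_{\mathcal X}\|_\diamond=\|\eta\|_\diamond$ (standard for the diamond norm with $\Pe=CP$; it follows from Theorem~\ref{thm:diamond}, since enlarging the reference system beyond the input dimension does not change the supremum) we get $\|(\Phi\otimes\xi)-\gamma\circ(\Psi\otimes\xi)\|_\diamond\le\|\Phi-\beta\circ\Psi\|_\diamond\le 2\epsilon$. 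Applying the characterization once more, now to the pair $(\Phi\otimes\xi,\Psi\otimes\xi)$, gives $\Psi\otimes\xi\succeq_\epsilon\Phi\otimes\xi$.

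For (i) and (ii) we keep $\Pe=CP$, write $d_D=k$, and (assuming $\Ae=B(H)$, the general case being routine) work inside $\Le(H,D)$. Let $\Gamma\in CP(B(D),\Ae)$ and $\alpha\in\Ce(\Be,B(D))$; we must produce $\alpha'\in\Ce(\Be',B(D))$ with $\<\alpha\circ\Phi,\Gamma\>\le\<\alpha'\circ\Psi,\Gamma\>+\epsilon\|\Gamma\|^\diamond$. Put $\phi:=\alpha\circ\Phi\in\Ce(H,D)$ and apply Lemma~\ref{lemma:pqcl}: $\<\alpha\circ\Phi,\Gamma\>=\<\Phi^{qc}_M,\Phi^{cq}_\Gamma\>$, where $M_j=(id_H\otimes(\theta^D_j)^*)(C(\phi^*))$ and $\Phi^{cq}_\Gamma\in\Pe^*(\diag_{k^2},\Ae\otimes B(D))$. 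Using $\phi^*=\Phi^*\circ\alpha^*$, the identity $C(\Phi^*\circ\alpha^*)=(\Phi^*\otimes id_D)(C(\alpha^*))$, and the fact that $(\theta^D_j)^*$ acts on the $D$-factor while $\Phi^*\otimes id_D$ acts on the other, a short computation gives $M_j=(\Phi^*\otimes id_D)(N_j)$ with $N_j:=(id_\Be\otimes(\theta^D_j)^*)(C(\alpha^*))$, and $N=\{N_j\}\in\povm(\Be\otimes B(D),k^2)$ \emph{depends only on $\alpha$} (positivity and normalization are exactly as in the proof of Lemma~\ref{lemma:pqcl} with $\alpha$ in place of $\phi$). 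Hence $\Phi^{qc}_M=\Phi^{qc}_N\circ(\Phi\otimes id_D)$ and, in the version with $\xi$, $\Phi^{qc}_N\circ(\Phi\otimes\xi)=\Phi^{qc}_{\{(\Phi^*\otimes\xi^*)(N_j)\}}$; the point is that the same $N$, built from the decision rule $\alpha$ and not from the channel, sits in front of both $\Phi$ and $\Psi$.

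Now apply the hypothesis to this $N$. In case (ii), $\Psi\otimes id_D\succeq_{\epsilon,\diag_{k^2}}\Phi\otimes id_D$ applied with payoff map $\Phi^{cq}_\Gamma$ and decision rule $\Phi^{qc}_N$ produces a qc-channel $\Phi^{qc}_{N'}$, $N'\in\povm(\Be'\otimes B(D),k^2)$, with $\<\Phi^{qc}_N\circ(\Phi\otimes id_D),\Phi^{cq}_\Gamma\>\le\<\Phi^{qc}_{N'}\circ(\Psi\otimes id_D),\Phi^{cq}_\Gamma\>+\epsilon\|\Phi^{cq}_\Gamma\|^\diamond$. In case (i), Theorem~\ref{thm:post} (i)$\Leftrightarrow$(iv) with $\De=\diag_{k^2}$ and $\epsilon=0$ turns the hypothesis into: for this $N$ there is $N'\in\povm(\Be'\otimes B(D),k^2)$ with $\Phi^{qc}_N\circ(\Phi\otimes\xi)=\Phi^{qc}_{N'}\circ(\Psi\otimes\xi)$, i.e. $(id_\Ae\otimes\xi^*)\bigl((\Phi^*\otimes id_D)(N_j)-(\Psi^*\otimes id_D)(N'_j)\bigr)=0$ for all $j$; since $\xi$ is surjective, $\xi^*$ — and hence $id_\Ae\otimes\xi^*$ — is injective, so $(\Phi^*\otimes id_D)(N_j)=(\Psi^*\otimes id_D)(N'_j)$ for all $j$, which is precisely the $\epsilon=0$ statement appearing in case (ii). In either case $\{(\Psi^*\otimes id_D)(N'_j)\}$ is again a POVM on $\Ae\otimes B(D)$, so running the construction of Lemma~\ref{lemma:pqcl} backwards gives $\<\Phi^{qc}_{N'}\circ(\Psi\otimes id_D),\Phi^{cq}_\Gamma\>=\<\alpha'\circ\Psi,\Gamma\>$ with $\alpha':=\zeta^*\in\Ce(\Be',B(D))$ and $C(\zeta)=\sum_j(id_{\Be'}\otimes(\theta^D_j)^*)(N'_j)$. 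Since $\|\Phi^{cq}_\Gamma\|^\diamond=\|\Gamma\|^\diamond$ by Lemma~\ref{lemma:pqcl}, we obtain $\<\alpha\circ\Phi,\Gamma\>\le\<\alpha'\circ\Psi,\Gamma\>+\epsilon\|\Gamma\|^\diamond$ (with $\epsilon=0$ in case (i)), i.e. $\Psi\succeq_{\epsilon,k}\Phi$.

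The main obstacle is the Choi-representation bookkeeping of the previous two paragraphs: one must verify that the POVM $N$ extracted from $\alpha$ by the recipe of Lemma~\ref{lemma:pqcl} is ``$\Phi$-equivariant'' — post-composing $\Phi^{qc}_N$ with $\Phi\otimes id_D$ (or with $\Phi\otimes\xi$) reproduces exactly the POVM that feeding $\alpha\circ\Phi$ into the lemma would give, and symmetrically on the $\Psi$ side — since this is what makes a single $N$ testable against both channels and lets the hypothesis be invoked. Surjectivity of $\xi$ is used solely through the equivalence ``$\xi$ surjective $\Leftrightarrow$ $\xi^*$ injective'', which is exactly the property that allows $\xi^*$ to be cancelled in (i) and the problem reduced to the $\otimes id_D$ situation; since this cancellation rests on an equality, it has no counterpart for $\epsilon>0$, which is why (i) is restricted to $\epsilon=0$.
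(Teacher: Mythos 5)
Your proof is correct and rests on the same two pillars as the paper's: Lemma~\ref{lemma:pqcl} to translate a quantum payoff $\Gamma\in CP(B(D),\Ae)$ into the cq-map $\Phi^{cq}_\Gamma$ with $k^2$ positive components, and Theorem~\ref{thm:post} to pass between the various formulations of deficiency; your part (iii) coincides with the paper's argument (single approximate factorization $\|\Phi-\beta\circ\Psi\|_\diamond\le2\epsilon$, then monotonicity plus tensor-stability of the diamond norm). Where you differ is in the bookkeeping for (i) and (ii): the paper stays at the level of the dual norm, using condition (ii) of Theorem~\ref{thm:post} and Corollary~\ref{coro:app_norm}~(ii) to write $\|\Phi\circ\Gamma\|^\diamond=\|(\Phi\otimes id_D)\circ\Phi^{cq}_\Gamma\|^\diamond$ as a supremum over POVMs and bound it directly by the classical hypothesis, whereas you work pointwise with condition (i)/(iv), extracting from each decision rule $\alpha$ a fixed POVM $N$ with $M_j=(\Phi^*\otimes id_D)(N_j)$ and reconstructing $\alpha'$ from the POVM $N'$ the hypothesis returns. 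Likewise, in (i) the paper uses surjectivity of $\xi$ to lift the payoff operators $G_j$ through $id_\Ae\otimes\xi$ to operators $F_j$, while you use the dual fact that $\xi^*$ is injective to cancel $\xi^*$ from the equality of qc-channels and reduce to the $id_D$ case; these are the same use of the hypothesis on $\xi$, applied on opposite sides of the duality. Your version makes the produced decision rule $\alpha'$ explicit, at the cost of some Choi-matrix verification; the paper's is shorter because the supremum over decision rules absorbs that verification. One small inherited wrinkle: you copy the formula $M_j=(id_H\otimes(\theta^D_j)^*)(C(\phi^*))$ from the converse direction of Lemma~\ref{lemma:pqcl}; with the stated convention $\theta^D_j(a)=\tfrac1k(U^D_j)^*aU^D_j$ one should take $M_j=(id_H\otimes\theta^D_j)(C(\phi^*))$ there (so that $\sum_j(id\otimes(\theta^D_j)^*)(M_j)=\sum_j\tfrac1{k^2}C(\phi^*)=C(\phi^*)$); this is a misprint already present in the lemma and does not affect the structure of your argument, since the factorization $M_j=(\Phi^*\otimes id_D)(N_j)$ goes through verbatim with either convention.
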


\begin{proof}  Assume that $\Psi\otimes \xi\succeq_{0,\diag_{k^2}}\Phi\otimes\xi$. By Theorem \ref{thm:post} (iii), 
it follows that for any $F=\{F_1,\dots,F_{k^2}\}\subset \Ae_h\otimes B_h(H')$ and $M\in \povm(\Be\otimes B(D),k^2)$
there is some $N\in \povm(\Be'\otimes B(D), k^2)$ such that
\begin{equation}\label{eq:pqcl}
\sum_i \Tr M_i(\Phi\otimes \xi)(F_i) \le \sum_i N_i(\Psi\otimes\xi)(F_i).
\end{equation}
Let $\Gamma\in CP(B(D),\Ae)$ and let   
\[
G_j=(id_\Ae\otimes \theta_j^D)(C(\Gamma))\in (\Ae\otimes B(D))^+,\qquad j=1,\dots,k^2.
\]
Suppose that $\xi$ is surjective, then also $id_\Ae\otimes \xi$ is surjective so that  there is some  
$F_j\in (\Ae_h\otimes B_h(H'))$  such that $G_j=(id_\Ae\otimes \xi)(F_j)$ and  
\begin{equation}\label{eq:gammaeta}
\Phi^{cq}_\Gamma=(id_\Ae\otimes \xi)\circ\Phi^{cq}_F.
\end{equation}
For any C*-algebra $\Fe$  and any channel $\Omega\in \Ce(\Ae,\Fe)$, we have 
\begin{equation}\label{eq:gammaeta_omega}
(\Omega\otimes \xi)\circ\Phi^{cq}_F= (\Omega\otimes id_D)\circ \Phi^{cq}_\Gamma=
\Phi^{cq}_{\Omega\circ\Gamma}.
\end{equation}
Since the RHS is clearly in CP, we obtain by Corollary \ref{coro:app_norm} (ii) that 
\[
\|(\Omega\otimes \xi)\circ\Phi^{cq}_F\|^\diamond =\sup_E \sum_i E_i(\Omega\otimes \xi)(F_i),
\]
where the supremum is taken over all $E\in \povm (\Fe\otimes B(D), k^2)$, so that  (\ref{eq:pqcl}) implies that
$\|(\Phi\otimes \xi)\circ\Phi^{cq}_F\|^\diamond\le \|(\Psi\otimes \xi)\circ\Phi^{cq}_F\|^\diamond$.
By Lemma \ref{lemma:pqcl} and (\ref{eq:gammaeta_omega}), we obtain 
\begin{align*}
\|\Phi\circ\Gamma\|^\diamond=\|\Phi^{cq}_{\Phi\circ\Gamma}\|^\diamond =\|(\Phi\otimes\xi)\circ\Phi^{cq}_F\|^\diamond\le 
\|(\Psi\otimes\xi)\circ\Phi^{cq}_F\|^\diamond
=\|\Psi\circ\Gamma\|^\diamond,
\end{align*}
this proves (i).


Similarly for (ii), suppose that $\Psi\otimes id_D\succeq_{\epsilon,\diag_{k^2}}\Phi\otimes id_D$ and let $\Gamma\in CP(B(D),\Ae)$, then
\begin{align*}
\|\Phi\circ\Gamma\|^\diamond&=\|\Phi^{cq}_{\Phi\circ\Gamma}\|^\diamond=\|(\Phi\otimes id_D)\circ \Phi^{cq}_\Gamma\|^\diamond\\
&\le 
\|(\Psi\otimes id_D)\circ\Phi^{cq}_\Gamma\|^\diamond +\epsilon\|\Phi^{cq}_\Gamma\|^\diamond=\|\Psi\circ\Gamma\|^\diamond+\epsilon\|\Gamma\|^\diamond.
\end{align*}

For (iii), let $\Psi\succeq_{\epsilon}\Phi$, then by Theorem \ref{thm:post} (iv), there is some channel
$\alpha\in \Ce(\Be',\Be)$ such that $\|\Phi-\alpha\circ\Psi\|_\diamond\le 2\epsilon$.  
Let $\xi\in \Ce(H',D)$ and let  $\De$ be any C*-algebra. Let $\phi\in\Ce(\Be\otimes B(D), \De)$ and put $\psi=\phi\circ(\alpha\otimes id_{B(D)})$, then 
$\psi\in \Ce(\Be'\otimes B(D),\De)$ and
we have
\begin{align*}
\|\phi\circ(\Phi\otimes \xi)-\psi\circ(\Psi\otimes \xi)\|_\diamond=\|\phi\circ((\Phi\otimes\xi)-
(\alpha\circ\Psi\otimes\xi))\|_\diamond\\
\le \|(\Phi\otimes\xi)-(\alpha\circ\Psi\otimes\xi)\|_\diamond\le\|\Phi-\alpha\circ\Psi\|_\diamond\le 2\epsilon,
\end{align*}
where the first equality follows from Lemma \ref{lemma:monot} and the second from Theorem  \ref{thm:diamond}. The proof now follows from Theorem \ref{thm:post} (iv).

\end{proof}

\subsubsection{Statistical experiments and pointwise deficiency}\label{sec:statexp}

We  now return to statistical experiments, still assuming that $\Pe=CP$. Let  $\Ee\subset\states(\Be)$ and $\Fe\subset\states(\Be')$. It is easy to see by the remarks at the end of Section \ref{sec:comparison} that
  $\Phi^{cq}_\Ee\succeq_{\epsilon, \De}\Phi^{cq}_\Fe$ is equivalent to $\Ee\succeq_{\epsilon,\De}\Fe$.
Note that by the proof of Theorem \ref{thm:post}, the maximal payoffs are given by 
\[
\max_{\alpha\in \Ce(\Be,\De)} P_\Ee(\alpha, G)=\|\Phi_{G,\Ee}\|_{CP}^\diamond.
\]
This was  was observed already in \cite{jencova2013base}. Theorem \ref{thm:post} (iii) also implies that using  a  broader definition of a decision space where  payoff operators  are only required to be Hermitian leads to an equivalent  notion of deficiency.

Let now $\Ee_0\subset \states(D)$ be an experiment that spans $B(D)$. Then $|\Ee_0|=d_D^2$ and 
$\Phi^{cq}_{\Ee_0}$ is a surjective channel $B(D\otimes D)\to B(D)$. 
By the results of the previous paragraph, we obtain the following slight generalization of \cite{buscemi2012comparison}:

\begin{coro} $\Ee\otimes \Ee_0\succeq_{0,\diag_{d_D^2}}\Fe\otimes \Ee_0$ implies 
$\Ee\succeq_{0,d_D}\Fe$. If $\Be\subset B(D)$, the opposite implication also holds.

\end{coro}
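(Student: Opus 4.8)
The plan is to move everything to the level of classical-to-quantum channels and then read the corollary off Theorem~\ref{thm:clpq}. Recall the dictionary stated at the start of this paragraph: for experiments $\Ee$, $\Fe$ one has $\Ee\succeq_{\epsilon,\De}\Fe$ iff $\Phi^{cq}_\Ee\succeq_{\epsilon,\De}\Phi^{cq}_\Fe$. Moreover the product experiment $\Ee\otimes\Ee_0$ consists precisely of the states $\rho_i\otimes\xi_l$ with $\rho_i\in\Ee$, $\xi_l\in\Ee_0$, so that $\Phi^{cq}_{\Ee\otimes\Ee_0}=\Phi^{cq}_\Ee\otimes\Phi^{cq}_{\Ee_0}$ (viewed, as usual, as maps on the full matrix algebras via the trace preserving conditional expectations), and likewise $\Phi^{cq}_{\Fe\otimes\Ee_0}=\Phi^{cq}_\Fe\otimes\Phi^{cq}_{\Ee_0}$. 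Since $\Ee_0$ spans $B(D)$, the channel $\xi:=\Phi^{cq}_{\Ee_0}\colon B(D\otimes D)\to B(D)$ is surjective, and $d_D$ plays the role of $k$ in Theorem~\ref{thm:clpq}.

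For the forward implication I would apply Theorem~\ref{thm:clpq}(i) with $\Phi=\Phi^{cq}_\Fe$, $\Psi=\Phi^{cq}_\Ee$, and $\xi=\Phi^{cq}_{\Ee_0}$. The hypothesis $\Ee\otimes\Ee_0\succeq_{0,\diag_{d_D^2}}\Fe\otimes\Ee_0$ translates, through the dictionary and the tensor identity above, into $\Psi\otimes\xi\succeq_{0,\diag_{k^2}}\Phi\otimes\xi$; the conclusion of Theorem~\ref{thm:clpq}(i) is then $\Psi\succeq_{0,k}\Phi$, i.e.\ $\Phi^{cq}_\Ee\succeq_{0,d_D}\Phi^{cq}_\Fe$, which is exactly $\Ee\succeq_{0,d_D}\Fe$.

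For the converse, under the assumption $\Be\subset B(D)$, the idea is to first promote $d_D$-deficiency to full deficiency and then tensor with $\xi$. Unwinding the dictionary, $\Ee\succeq_{0,d_D}\Fe$ is the purely quantum relation $\Phi^{cq}_\Ee\succeq_{0,B(D)}\Phi^{cq}_\Fe$. Since purely quantum deficiency implies $\De$-deficiency for every $\De\subset B(D)$ and since $id\in CP=\Pe$, the last assertion of Theorem~\ref{thm:post}, applied with the relevant output algebra (which is contained in $B(D)$ by hypothesis), upgrades this to the full deficiency $\Phi^{cq}_\Ee\succeq_{0}\Phi^{cq}_\Fe$, that is $\Ee\succeq_0\Fe$. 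Theorem~\ref{thm:clpq}(iii) with $\xi=\Phi^{cq}_{\Ee_0}$ then gives $\Phi^{cq}_\Ee\otimes\Phi^{cq}_{\Ee_0}\succeq_{0}\Phi^{cq}_\Fe\otimes\Phi^{cq}_{\Ee_0}$, in particular $\diag_{d_D^2}$-deficiency, which is $\Ee\otimes\Ee_0\succeq_{0,\diag_{d_D^2}}\Fe\otimes\Ee_0$.

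The only delicate points are bookkeeping: verifying that the two extensions by conditional expectations are compatible, so that $\Phi^{cq}_{\Ee\otimes\Ee_0}$ genuinely coincides with $\Phi^{cq}_\Ee\otimes\Phi^{cq}_{\Ee_0}$ and that $\diag_{d_D^2}$-deficiency of the tensored cq-channels matches the tensored classical deficiency of Theorem~\ref{thm:clpq}; and tracking precisely which output algebra the hypothesis $\Be\subset B(D)$ controls in the last assertion of Theorem~\ref{thm:post}. I do not expect any substantive obstacle: all the analytic content is already in Theorem~\ref{thm:clpq}, and the corollary is merely its specialization to classical-to-quantum channels, recovering and slightly extending \cite{buscemi2012comparison}.
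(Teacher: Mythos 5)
Your argument is correct and is essentially the paper's own: the corollary is presented there as a direct consequence of Theorem \ref{thm:clpq} applied to $\Psi=\Phi^{cq}_\Ee$, $\Phi=\Phi^{cq}_\Fe$ and the surjective channel $\xi=\Phi^{cq}_{\Ee_0}:B(D\otimes D)\to B(D)$, with the converse obtained exactly as you describe by upgrading purely quantum $d_D$-deficiency to full deficiency via the last assertion of Theorem \ref{thm:post} (using $\Be\subset B(D)$) and then tensoring via Theorem \ref{thm:clpq}(iii). The bookkeeping you flag does check out at $\epsilon=0$: since both $\Phi^{cq}_\Ee\otimes\Phi^{cq}_{\Ee_0}$ and $\Phi^{cq}_\Fe\otimes\Phi^{cq}_{\Ee_0}$ factor through the conditional expectation onto $\diag_{nd_D^2}$, every payoff map $\Gamma$ on $\diag_n\otimes B(D\otimes D)$ may be replaced by $E_{\diag_{nd_D^2}}\circ\Gamma$ without changing any of the pairings $\<\alpha\circ(\,\cdot\,),\Gamma\>$, so the $\diag_{d_D^2}$-deficiency of the tensored cq-channels coincides with that of the product experiments and the hypothesis of Theorem \ref{thm:clpq}(i) is genuinely available.
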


Classical decision problems for quantum statistical experiments have a clear operational meaning, while the generalization to quantum ones seems to be just a mathematical extension.    We next show that for any quantum decision problem there is a classical one  having the same average payoffs.

\begin{prop} Let $(\Ee,B(D),G)$ be a quantum decision problem, with $\Ee\subset \states(H)$ and $|\Ee|=n$. 
Then there is a classical decision problem $(\Ee\otimes \Ee_0,\tilde D,f)$, 
with  $|\tilde D|=d_D^2$ and $f=(f_{ij,l})$, $f_{ij,l}\in\mathbb R$, $i=1,\dots,n$, $j,l=1,\dots,d_D^2$ such that 
\[
\bigl\{P_\Ee(\alpha,G), \ \alpha\in \Ce(H,D)\bigr\}=\bigl\{\sum_iP_{\Ee\otimes\Ee_0}(i, M,f),\ M\in \povm(H\otimes D,d_D^2)\bigr\}
\]

\end{prop}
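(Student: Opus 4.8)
The plan is to rewrite the quantum payoff $P_\Ee(\alpha,G)$ as a bilinear pairing with a fixed entanglement breaking (hence completely positive) map, so that the twirling Lemma \ref{lemma:pqcl} applies without change. As recalled in Section \ref{sec:comparison}, for $\alpha\in\Ce(H,D)$ one has $P_\Ee(\alpha,G)=\<\alpha\circ\Phi^{cq}_\Ee,\Phi^{qc}_G\>$, and by (\ref{eq:trian}) this equals $\<\alpha,\Phi^{cq}_\Ee\circ\Phi^{qc}_G\>=\<\alpha,\Gamma\>$ with $\Gamma:=\Phi^{cq}_\Ee\circ\Phi^{qc}_G=\Phi_{G,\Ee}\colon b\mapsto\sum_i\rho_i\Tr G_ib$. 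Since $\Ee\subset\states(H)\subset B(H)^+$ and $G\subset B(D)^+$, the map $\Gamma$ lies in $EB(D,H)\subseteq CP(D,H)$. This is the essential observation: although $G$ is a genuinely quantum payoff, the payoff map it induces on the cq-channel $\Phi^{cq}_\Ee$ is entanglement breaking, which is exactly what is needed to discretize the optimization over decision rules.

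First I would apply Lemma \ref{lemma:pqcl} to $\Gamma$ (with $k=d_D$), which yields
\[
\bigl\{\,P_\Ee(\alpha,G)\ :\ \alpha\in\Ce(H,D)\,\bigr\}=\bigl\{\,\<\Phi^{qc}_M,\Phi^{cq}_\Gamma\>\ :\ M\in\povm(H\otimes D,d_D^2)\,\bigr\},
\]
where $\Phi^{cq}_\Gamma=\Phi^{cq}_{\{\mathcal G_1,\dots,\mathcal G_{d_D^2}\}}$ with $\mathcal G_j=(id_H\otimes\theta^D_j)(C(\Gamma))$, and by (\ref{eq:s_cq}) one has $\<\Phi^{qc}_M,\Phi^{cq}_\Gamma\>=\sum_j\Tr M_j\mathcal G_j$. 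Next I would compute the Choi operator of $\Gamma$: by Remark \ref{rem:choi}, $C(\Gamma)=(\Gamma\otimes id_D)(X_D)=\sum_i\rho_i\otimes G_i^t$, where $G_i^t$ is the transpose of $G_i$ in the fixed basis of $D$; hence $\mathcal G_j=\sum_i\rho_i\otimes\theta^D_j(G_i^t)$. The operators $\theta^D_j(G_i^t)$ are Hermitian, and since $\Ee_0=\{\omega_1,\dots,\omega_{d_D^2}\}$ spans $B(D)$ it is an $\mathbb R$-basis of $B_h(D)$; I would therefore expand $\theta^D_j(G_i^t)=\sum_l f_{ij,l}\,\omega_l$ with uniquely determined $f_{ij,l}\in\mathbb R$, so that $\mathcal G_j=\sum_{i,l}f_{ij,l}\,\rho_i\otimes\omega_l$.

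Substituting this into the formula for $\<\Phi^{qc}_M,\Phi^{cq}_\Gamma\>$ gives, for every $M\in\povm(H\otimes D,d_D^2)$,
\[
\<\Phi^{qc}_M,\Phi^{cq}_\Gamma\>=\sum_{i,l}\,\sum_j f_{ij,l}\,\Tr\bigl((\rho_i\otimes\omega_l)M_j\bigr)=\sum_{i,l}P_{\Ee\otimes\Ee_0}\bigl((i,l),M,f\bigr),
\]
which is precisely the total payoff of the decision rule $M$ in the classical decision problem $(\Ee\otimes\Ee_0,\tilde D,f)$: here $\Ee\otimes\Ee_0=\{\rho_i\otimes\omega_l\}$ is the experiment parametrized by the pairs $(i,l)$, the decision set $\tilde D$ has $d_D^2$ elements, the decision rules are exactly the POVMs in $\povm(H\otimes D,d_D^2)$, and the outer sum ranges over all parameters of $\Ee\otimes\Ee_0$. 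Combining this universal identity with the set equality from Lemma \ref{lemma:pqcl} gives the claim. I expect the only point demanding care to be the Choi/transpose bookkeeping in verifying $C(\Gamma)=\sum_i\rho_i\otimes G_i^t$ and matching the index conventions in the expansion over $\Ee_0$; there is no genuine analytic difficulty, since all of the substance — the unitary twirl and the fact that $C(\Gamma)$ has the correct reduced operator on $D$ — is already packaged in Lemma \ref{lemma:pqcl}.
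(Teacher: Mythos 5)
Your proof is correct and follows essentially the same route as the paper: both apply Lemma \ref{lemma:pqcl} to the entanglement breaking map $\Phi_{G,\Ee}=\Phi^{cq}_\Ee\circ\Phi^{qc}_G$ and define $f_{ij,l}$ by expanding $\theta^D_j(G_i^t)$ in the basis $\Ee_0$ of $B_h(D)$. The only cosmetic difference is that you compute $C(\Phi_{G,\Ee})=\sum_i\rho_i\otimes G_i^t$ directly, whereas the paper routes the same bookkeeping through its displayed identities (\ref{eq:gammaeta}) and (\ref{eq:gammaeta_omega}).
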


\begin{proof}  Let $\Ee_0=\{\sigma^0_1,\dots,\sigma^0_{d_D^2}\}$ and let $G=(G_1,\dots,G_n)$. Let 
 $f_{ij,l}\in \mathbb R $ be such that 
\[
\theta^D_j(G_i^t)=\sum_l f_{ij,l}\sigma^0_l,\quad i=1,\dots,n,\ j,l=1,\dots,d_D^2.
\]
Then we have (\ref{eq:gammaeta}) and (\ref{eq:gammaeta_omega}), with $F_j=\sum_{i,l}f_{ij,l}
|i\>\<i|\otimes |l\>\<l|$, $\xi=\Phi^{cq}_{\Ee_0}$, $\Gamma=\Phi^{qc}_G$ and $\Omega=\Phi^{cq}_\Ee$. The proof now follows by Lemma \ref{lemma:pqcl}.

\end{proof}

More generally, let $\Pe$ be any admissible   family of cones. It might be useful to write Theorem \ref{thm:post} for this case, emphasizing
the cone $\Pe$ to show how the conditions differ according to the choice of $\Pe$.

\begin{coro} The following are equivalent.
\begin{enumerate}
\item[(i)] $\Phi_\Ee^{cq}\succeq_{\epsilon,\De,\Pe}\Phi^{cq}_\Fe$
\item[(ii)] For any  $G=\{G_1,\dots,G_n\}\subset \De^+$,
\[
\|\Phi_{G,\Fe}\|^\diamond_\Pe\le \|\Phi_{G,\Ee}\|^\diamond_\Pe+\epsilon\sum_i\|G_i\|
\]
\item[(iii)] For every $\alpha'\in \Ce_\Pe(\Be', \De)$ there is some $\alpha\in \Ce_\Pe(\Be,\De)$ such that 
\[
\sup_i\|\alpha(\sigma_i)-\alpha'(\rho_i)\|_1\le 2\epsilon.
\]

\end{enumerate}

\end{coro}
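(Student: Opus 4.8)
The plan is to obtain the corollary as the cq-specialization of Theorem~\ref{thm:post}. Writing $\Ee=\{\rho_1,\dots,\rho_n\}\subset\states(\Be')$ and $\Fe=\{\sigma_1,\dots,\sigma_n\}\subset\states(\Be)$, I would set $\Ae=\diag_n$, $\Phi:=\Phi^{cq}_\Fe\in\Ce_\Pe(\diag_n,\Be)$ and $\Psi:=\Phi^{cq}_\Ee\in\Ce_\Pe(\diag_n,\Be')$. These cq-channels are entanglement breaking, hence $\Pe$-channels, so Theorem~\ref{thm:post} applies and its item (i) coincides with (i) above. The remaining work is to rewrite items (ii) and (iv) of the theorem in cq/qc terms using Example~\ref{ex:cqqc} and the norm formulas of Lemma~\ref{lemma:qccq}.

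For (i)$\Leftrightarrow$(ii): I would first check that $\Pe^*(\De,\diag_n)=\{\Phi^{qc}_G:G=\{G_1,\dots,G_n\}\subset\De^+\}$. Indeed, every positive map out of $\diag_n$ is a cq-map and hence entanglement breaking, so $EB(\diag_n,\De)=Pos(\diag_n,\De)$; the squeeze $EB\subseteq\Pe\subseteq Pos$ then forces $\Pe(\diag_n,\De)$ to be this common cone, and dualizing gives $\Pe^*(\De,\diag_n)=Pos(\De,\diag_n)=EB(\De,\diag_n)$, the positive qc-maps. For such $\Gamma=\Phi^{qc}_G$ one has $\Phi\circ\Gamma=\Phi^{cq}_\Fe\circ\Phi^{qc}_G=\Phi_{G,\Fe}$ and $\Psi\circ\Gamma=\Phi_{G,\Ee}$, while $\|\Gamma\|^\diamond=\|\Phi^{qc}_G\|^\diamond=\sum_i\|G_i\|$ by Lemma~\ref{lemma:qccq}(ii); substituting these into Theorem~\ref{thm:post}(ii) gives exactly (ii) here. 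Note that the subscript $\Pe$ survives on the left: $\Phi_{G,\Fe},\Phi_{G,\Ee}$ are genuinely noncommutative EB-maps into $\Be,\Be'$, whose $\|\cdot\|^\diamond_\Pe$-norms depend on $\Pe$.

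For (i)$\Leftrightarrow$(iii): I would invoke the equivalence (i)$\Leftrightarrow$(iv) of Theorem~\ref{thm:post}. Any post-processing $\alpha\circ\Phi^{cq}_\Fe$ equals $\Phi^{cq}_{\alpha(\Fe)}$, with coefficients the states $\alpha(\sigma_i)$, and likewise $\alpha'\circ\Phi^{cq}_\Ee=\Phi^{cq}_{\alpha'(\Ee)}$; by linearity their difference is the cq-map with Hermitian coefficients $\alpha(\sigma_i)-\alpha'(\rho_i)$, so by Lemma~\ref{lemma:qccq}(iii) its $\|\cdot\|_\diamond$-norm is $\max_i\|\alpha(\sigma_i)-\alpha'(\rho_i)\|_1$. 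Hence the $2\epsilon$-closeness in $\|\cdot\|_\diamond$ of Theorem~\ref{thm:post}(iv) becomes precisely the $L_1$-condition in (iii). The remaining equivalent forms of Theorem~\ref{thm:post} — its (ii) with $\Gamma\in\Le(\De,\Ae)$ unrestricted, and its (iii) — reduce to (ii)--(iii) above under the same substitutions, so recording these three suffices.

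I do not expect a genuine obstacle here; the only points requiring care are the identification of $\Pe^*(\De,\diag_n)$ with the positive qc-maps, keeping track of where the $\Pe$-dependence is lost (on the cq/qc maps) and where it is retained (the classes $\Ce_\Pe(\Be,\De)$, $\Ce_\Pe(\Be',\De)$ and the norm appearing in (ii)), and matching the quantifier order of $\alpha$ and $\alpha'$ in (iii) against Theorem~\ref{thm:post}(iv).
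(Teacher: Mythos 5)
Your proposal is correct and is exactly the paper's (implicit) argument: the corollary is stated there without proof as the specialization of Theorem \ref{thm:post} to cq-channels, and your identification of $\Pe^*(\De,\diag_n)$ with the positive qc-maps, the composition identities $\Phi^{cq}_\Fe\circ\Phi^{qc}_G=\Phi_{G,\Fe}$, and the norm formulas of Lemma \ref{lemma:qccq} are precisely what that specialization requires. The one point you flag but leave unresolved --- the quantifier order in (iii) --- is in fact a typo in the corollary rather than a gap in your argument: under your substitution, Theorem \ref{thm:post}(iv) yields ``for every $\alpha\in\Ce_\Pe(\Be,\De)$ there is some $\alpha'\in\Ce_\Pe(\Be',\De)$'', i.e.\ the universally quantified rule is the one applied to the less informative channel $\Phi^{cq}_\Fe$, and the version printed in the corollary (with $\alpha$ and $\alpha'$ transposed) would assert deficiency in the opposite direction; your derivation proves the intended statement.
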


\begin{rem}\label{rem:weaker} Note that the  condition in (ii) does not have to be checked for all collections of payoff operators. 
For example, we may always assume that the operators  are not invertible. Indeed, suppose $\lambda_j$ is the 
smallest eigenvalue of $G_j$ and let $G'=\{G'_j:=G_j-\lambda_jI_\De, j=1,\dots,n\}$. Then $G_j'$ are not invertible, 
$ \|G_j'\|\le \|G_j\|$ and for any experiment $\Ee$ and $\phi\in \Ce(\Be,\De)$,
\[
\<\phi\circ\Phi^{cq}_\Ee,\Phi^{qc}_{G}\>=\<\phi\circ\Phi^{cq}_\Ee,\Phi^{qc}_{G'}\>+\sum_j\lambda_j.
\]
In particular, for $\De\subseteq B(\mathbb C^2)$, it is enough to assume that $G_i=|g_i\>\<g_i|$ are rank one operators.
This shows that one can restrict to a smaller set of decision spaces. An interesting question is whether it is enough to consider only commuting sets of  payoff operators, that is, whether one can restrict to classical decision spaces. It can be shown \footnote{Private communication with K. Matsumoto} that that this is 
in general not possible even in the case that $\Pe=Pos$. However, by \cite{aluh1980problem}, this is true for $\epsilon=0$ and $\Ae=\Be=\Be'=\De=B(\mathbb C^2)$.

\end{rem}

Let now $\Phi\in \Ce(\Ae,\Be)$ and let $\Gamma\in EB(\De,\Ae)$. Since $EB\subset \Pe^*$, $\Gamma$ is a payoff map.  
We may write $\Gamma=\Phi_{G,\Ee}=\Phi^{cq}_\Ee\circ\Phi^{qc}_G$ for some experiment $\Ee\subset \states(\Ae)$ and some sequence of operators   $G\subset \De^+$.  
For any decision rule $\alpha\in \Ce(\Be,\De)$ we have for the corresponding payoff
\[
P_\Phi(\alpha,\Gamma)=\<\alpha\circ\Phi\circ\Phi^{cq}_\Ee,\Phi^{qc}_G\>=\<\alpha\circ\Phi_{\Phi(\Ee)}^{cq},\Phi^{qc}_G\>=
P_{\Phi(\Ee)}(\alpha,G),
\]
as illustrated on  the following diagram: 
\[
\xymatrix@C=2pc@R=1.5pc{ & \Ae \ar[rr]^\Phi \ar[dd]_\Psi & &\Be \ar[dd]^\alpha\\
 & & & \\
 & \Be'\ar[rr]^{\alpha'} & &\De \ar@{-->}[uull]_{\Phi_{G,\Ee}} \ar@{-->}[dlll]^{\Phi^{qc}_G}\\
\diag_n \ar[uuur]^{\Phi^{cq}_\Ee} & & }
\]
note that the payoff maps are represented by dashed arrows.
Conversely, if $\Ee\subset \states(\Ae)$ is any experiment and  $G\subset \De^+$ a sequence of payoff operators, then $\Phi_{G,\Ee}\in EB(\De,\Ae)$ is a payoff map with $P_{\Phi(\Ee)}(\alpha,G)=P_\Phi(\alpha,\Phi_{G,\Ee})$ for any decision rule $\alpha$. Since $\|\Phi_{G,\Ee}\|^\diamond \le \|\Phi^{qc}_G\|^\diamond$ by Lemma \ref{lemma:monot}, this proves the following:

\begin{prop}\label{prop:pw} $\Psi\succeq_{\epsilon,\De}\Phi$  implies that $\Psi(\Ee)\succeq_{\epsilon,\De}\Phi(\Ee)$ for any experiment
 $\Ee\subset\states(\Ae)$. 

\end{prop}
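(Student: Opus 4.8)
The plan is to reduce the claim to the post-processing deficiency of the cq-channels associated with the two randomized experiments, and then to apply the hypothesis to one cleverly chosen payoff map. Fix an experiment $\Ee=\{\rho_1,\dots,\rho_n\}\subset\states(\Ae)$. Since $\Phi$ and $\Psi$ are $\Pe$-channels, hence positive and trace preserving, the images $\Phi(\Ee)\subset\states(\Be)$ and $\Psi(\Ee)\subset\states(\Be')$ are again experiments, and by the remarks at the end of Section \ref{sec:comparison} it is enough to prove $\Phi^{cq}_{\Psi(\Ee)}\succeq_{\epsilon,\De}\Phi^{cq}_{\Phi(\Ee)}$; that is, that for every $G=\{G_1,\dots,G_n\}\subset\De^+$ and every $\alpha\in\Ce(\Be,\De)$ there is $\alpha'\in\Ce(\Be',\De)$ with
\[
P_{\Phi(\Ee)}(\alpha,G)\le P_{\Psi(\Ee)}(\alpha',G)+\epsilon\sum_i\|G_i\|.
\]

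The step carrying the content is the choice of payoff map. Given $G$ and $\alpha$ as above, I would feed the hypothesis $\Psi\succeq_{\epsilon,\De}\Phi$ with the map $\Gamma:=\Phi_{G,\Ee}=\Phi^{cq}_\Ee\circ\Phi^{qc}_G$, which lies in $EB(\De,\Ae)\subseteq\Pe^*(\De,\Ae)$ and is therefore an admissible payoff map, together with the same decision rule $\alpha$. This yields some $\alpha'\in\Ce(\Be',\De)$ with $\<\alpha\circ\Phi,\Gamma\>\le\<\alpha'\circ\Psi,\Gamma\>+\epsilon\|\Gamma\|^\diamond$. I would then rewrite each term: using $\Phi\circ\Phi^{cq}_\Ee=\Phi^{cq}_{\Phi(\Ee)}$, $\Psi\circ\Phi^{cq}_\Ee=\Phi^{cq}_{\Psi(\Ee)}$ and associativity of composition inside the pairing $\<\cdot,\cdot\>$ (equivalently (\ref{eq:trian})), the two bracket terms become $\<\alpha\circ\Phi^{cq}_{\Phi(\Ee)},\Phi^{qc}_G\>=P_{\Phi(\Ee)}(\alpha,G)$ and $\<\alpha'\circ\Phi^{cq}_{\Psi(\Ee)},\Phi^{qc}_G\>=P_{\Psi(\Ee)}(\alpha',G)$; and for the error term, since $\Phi^{cq}_\Ee$ is a (CP-)channel, Lemma \ref{lemma:monot} gives $\|\Gamma\|^\diamond=\|\Phi^{cq}_\Ee\circ\Phi^{qc}_G\|^\diamond\le\|\Phi^{qc}_G\|^\diamond=\sum_i\|G_i\|$ by Lemma \ref{lemma:qccq}(ii). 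Substituting these three identities gives the displayed inequality, and since by (\ref{eq:defic}) (and the remark preceding it) the summed form of the deficiency inequality is equivalent to the per-index one, this establishes $\Psi(\Ee)\succeq_{\epsilon,\De}\Phi(\Ee)$.

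There is no real obstacle: the only nonroutine move is recognising that the $EB$ map $\Phi_{G,\Ee}$ is precisely the payoff map whose post-processing decision problem for $\Phi$ reproduces the decision problem $(\De,G)$ for the randomized experiment $\Phi(\Ee)$ — and this correspondence is already spelled out in the paragraph preceding the proposition. One point to keep straight, though it causes no difficulty, is that $\Phi$ and $\Psi$ need not be completely positive; but the identity $\Phi\circ\Phi^{cq}_\Ee=\Phi^{cq}_{\Phi(\Ee)}$ only uses positivity and trace preservation, so it applies verbatim, and the bound $\|\Gamma\|^\diamond\le\|\Phi^{qc}_G\|^\diamond$ only needs $\Phi^{cq}_\Ee$ to be a channel.
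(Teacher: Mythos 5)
Your proof is correct and follows essentially the same route as the paper: the paper's argument (given in the paragraph immediately preceding the proposition) likewise feeds the hypothesis with the EB payoff map $\Gamma=\Phi_{G,\Ee}=\Phi^{cq}_\Ee\circ\Phi^{qc}_G$, identifies $P_\Phi(\alpha,\Gamma)=P_{\Phi(\Ee)}(\alpha,G)$ via (\ref{eq:trian}), and controls the error term through $\|\Phi_{G,\Ee}\|^\diamond\le\|\Phi^{qc}_G\|^\diamond$ from Lemma \ref{lemma:monot}. Your write-up just makes explicit the reduction to the cq-channels of the randomized experiments and the appeal to Lemma \ref{lemma:qccq}(ii), which the paper leaves implicit.
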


We call the 'only if' part of this proposition \emph{pointwise $(\epsilon,\De)$-post-processing deficiency}. 
For $\epsilon=0$, the following result is also easy to see.

\begin{prop} The following are equivalent.
\begin{enumerate}
\item[(i)] $\Psi\succeq_{0,\De}\Phi$.
\item[(ii)] $\|\Phi\circ\Gamma\|^\diamond\le \|\Psi\circ\Gamma\|^\diamond$ for all $\Gamma\in EB(\De,\Ae)$.
\item[(iii)] $\Psi(\Ee)\succeq_{0,\De}\Phi(\Ee)$ for any experiment
 $\Ee\subset\states(\Ae)$.
\item[(iv)] $\Psi(\Ee)\succeq_{0,\De}\Phi(\Ee)$ for some experiment $\Ee\subset \states(\Ae)$ that spans $\Ae$. 

\end{enumerate}

\end{prop}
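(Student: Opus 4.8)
The plan is to prove a cycle of implications, taking advantage of the fact that $EB$ maps are exactly the compositions $\Phi^{cq}_\Ee\circ\Phi^{qc}_G$. The equivalence (i)$\Leftrightarrow$(ii) is immediate: it is just Theorem \ref{thm:post} with $\epsilon=0$ specialized to the observation that, when $\epsilon=0$, the payoff condition in Definition of $(\epsilon,\De)$-post-processing deficiency only ever involves $\Gamma\in\Pe^*$, but the computations in the proof of Theorem \ref{thm:post} (the step (i)$\Rightarrow$(ii)) go through verbatim for $\Gamma\in\Pe^*$; restricting further to $\Gamma\in EB(\De,\Ae)\subseteq\Pe^*(\De,\Ae)$ is harmless because the reverse direction (ii)$\Rightarrow$(i) for $\epsilon=0$ only needs the inequality $\|\Phi\circ\Gamma\|^\diamond\le\|\Psi\circ\Gamma\|^\diamond$ on $EB$ maps, which is what the preceding discussion (the diagram with $\Phi_{G,\Ee}$) shows suffices. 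So first I would note that (i)$\Leftrightarrow$(ii) follows by combining Theorem \ref{thm:post} with the fact, established in the paragraph preceding Proposition \ref{prop:pw}, that every $\Gamma\in EB(\De,\Ae)$ has the form $\Phi_{G,\Ee}$ and conversely.

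Next, (ii)$\Rightarrow$(iii): given an experiment $\Ee\subset\states(\Ae)$ and writing $\Phi_{G,\Ee}=\Phi^{cq}_\Ee\circ\Phi^{qc}_G\in EB(\De,\Ae)$ for an arbitrary payoff family $G\subset\De^+$, we have $\Phi\circ\Phi_{G,\Ee}=\Phi_{\Phi(\Ee)}^{cq}\circ\Phi^{qc}_G=\Phi_{G,\Phi(\Ee)}$ (using $\Phi^{cq}_{\Phi(\Ee)}=\Phi\circ\Phi^{cq}_\Ee$ from Section \ref{sec:comparison}), and similarly for $\Psi$. Thus (ii) says $\|\Phi_{G,\Phi(\Ee)}\|^\diamond\le\|\Phi_{G,\Psi(\Ee)}\|^\diamond$ for all $G$, which by the equivalence (i)$\Leftrightarrow$(ii) applied to the cq-channels $\Phi^{cq}_{\Phi(\Ee)}$ and $\Phi^{cq}_{\Psi(\Ee)}$ (or directly by the Corollary stating the cq-version of Theorem \ref{thm:post}) is exactly $\Psi(\Ee)\succeq_{0,\De}\Phi(\Ee)$. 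The implication (iii)$\Rightarrow$(iv) is trivial since a spanning experiment exists (e.g. a basis of $\states(\Ae)$ in $\Ae_h$, normalized).

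The substantive step, and the one I expect to be the main obstacle, is (iv)$\Rightarrow$(ii): from $\Psi(\Ee_*)\succeq_{0,\De}\Phi(\Ee_*)$ for a single experiment $\Ee_*=\{\rho_1,\dots,\rho_m\}$ spanning $\Ae$, deduce $\|\Phi\circ\Gamma\|^\diamond\le\|\Psi\circ\Gamma\|^\diamond$ for \emph{all} $\Gamma\in EB(\De,\Ae)$. The point is that an arbitrary $\Gamma=\Phi_{H,\Fe}$ with $\Fe=\{\tau_1,\dots,\tau_r\}\subset\states(\Ae)$, $H\subset\De^+$, can be rewritten using the spanning property: each $\tau_\ell=\sum_k c_{\ell k}\rho_k$ for real coefficients $c_{\ell k}$, hence $\Gamma(b)=\sum_\ell\tau_\ell\Tr H_\ell b=\sum_k\rho_k\Tr\bigl(\sum_\ell c_{\ell k}H_\ell\bigr)b=\Phi^{cq}_{\Ee_*}\bigl(\Phi^{qc}_{H'}(b)\bigr)$ where $H'_k=\sum_\ell c_{\ell k}H_\ell\in\De_h$. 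So $\Gamma=\Phi^{cq}_{\Ee_*}\circ\Phi^{qc}_{H'}$ with $H'$ only Hermitian, not necessarily positive. Then $\Phi\circ\Gamma=\Phi^{cq}_{\Phi(\Ee_*)}\circ\Phi^{qc}_{H'}$ and $\Psi\circ\Gamma=\Phi^{cq}_{\Psi(\Ee_*)}\circ\Phi^{qc}_{H'}$, so what is needed is $\|\Phi^{cq}_{\Phi(\Ee_*)}\circ\Phi^{qc}_{H'}\|^\diamond\le\|\Phi^{cq}_{\Psi(\Ee_*)}\circ\Phi^{qc}_{H'}\|^\diamond$ for all Hermitian $H'$, knowing it for all positive $H'$ (that is what $\Psi(\Ee_*)\succeq_{0,\De}\Phi(\Ee_*)$ gives, via (i)$\Leftrightarrow$(ii) for cq-channels). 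The passage from positive to Hermitian payoff families is exactly the content of Theorem \ref{thm:post} (iii) with $\epsilon=0$ (the step using the decomposition $\Gamma+t\Phi^{cq}_\sigma\in\Pe^*$, here trivialized since $t=0$ is not forced — rather one uses that shifting by $\lambda_k I_\De$ as in Remark \ref{rem:weaker} does not change $\|\cdot\|^\diamond$-inequalities between payoffs of a \emph{fixed} pair of experiments up to an additive constant common to both sides), so I would invoke that. Assembling: $\Phi\circ\Gamma$ and $\Psi\circ\Gamma$ are then cq-channels postcomposed with the \emph{same} qc-map $\Phi^{qc}_{H'}$, and the desired norm inequality follows. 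The delicate bookkeeping will be making sure the additive constants $\sum_k\lambda_k$ cancel correctly on both sides and that the spanning coefficients $c_{\ell k}$ can indeed be chosen real (guaranteed because $\Ee_*$ spans the real space $\Ae_h$, which it does once it spans $\Ae$ and one may replace each element by its Hermitian part, or simply note $\states(\Ae)$ spans $\Ae_h$ over $\mathbb R$).
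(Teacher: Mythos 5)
The cycle you set up never returns to (i), and the one place where you claim it does --- the assertion that (ii)$\Rightarrow$(i) because ``restricting to $\Gamma\in EB(\De,\Ae)$ is harmless'' --- is precisely the nontrivial content of the proposition, not something that follows from Theorem \ref{thm:post} or from the diagram preceding Proposition \ref{prop:pw}. Theorem \ref{thm:post}(ii) characterizes $\Psi\succeq_{0,\De}\Phi$ by the inequality $\|\Phi\circ\Gamma\|^\diamond\le\|\Psi\circ\Gamma\|^\diamond$ for \emph{all} $\Gamma\in \Pe^*(\De,\Ae)$, and in general (e.g.\ $\Pe=CP$, so $\Pe^*=CP\supsetneq EB$) this is a strictly larger class than $EB(\De,\Ae)$; the preceding discussion only shows that EB payoffs compute pointwise payoffs, not that they detect deficiency. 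So as written you prove (i)$\Rightarrow$(ii)$\Rightarrow$(iii)$\Rightarrow$(iv)$\Rightarrow$(ii), i.e.\ the equivalence of (ii)--(iv) plus one implication out of (i), but no implication back into (i).

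The missing step is (iv)$\Rightarrow$(i), and it is where $\epsilon=0$ and the spanning hypothesis do their real work. The short route: by the randomization theorem for experiments (Theorem \ref{thm:qrand}, or the cq-Corollary, with $\epsilon=0$), $\Psi(\Ee)\succeq_{0,\De}\Phi(\Ee)$ means that for every $\alpha\in\Ce(\Be,\De)$ there is $\alpha'\in\Ce(\Be',\De)$ with $\alpha(\Phi(\rho_i))=\alpha'(\Psi(\rho_i))$ \emph{exactly} for all $i$; since $\{\rho_i\}$ spans $\Ae$, linearity forces $\alpha\circ\Phi=\alpha'\circ\Psi$, which is Theorem \ref{thm:post}(iv) with $\epsilon=0$, hence (i). Alternatively, your own factorization idea closes the loop if you aim it at (i) rather than (ii): since $\Phi^{cq}_{\Ee_*}:\diag_m\to\Ae$ is surjective, \emph{every} $\Gamma\in\Pe^*(\De,\Ae)$ (indeed every $\Gamma\in\Le(\De,\Ae)$) factors as $\Phi^{cq}_{\Ee_*}\circ\Gamma'$ with $\Gamma'\in\Le(\De,\diag_m)$, so $\<\alpha\circ\Phi,\Gamma\>=\<\alpha\circ\Phi^{cq}_{\Phi(\Ee_*)},\Gamma'\>$ and Theorem \ref{thm:post}(iii) with $\epsilon=0$ for the cq-channel pair supplies the required $\alpha'$; there is no need to restrict to EB targets at all. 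Two smaller points in your (iv)$\Rightarrow$(ii) step: the appeal to Remark \ref{rem:weaker} and cancelling constants is a detour (Theorem \ref{thm:post}(iii) already handles Hermitian payoff families directly), and when you convert the payoff suprema back into $\|\cdot\|^\diamond$ norms you are implicitly using that $\Phi\circ\Gamma$ and $\Psi\circ\Gamma$ remain in $EB\subseteq\Pe^*$ (so that Corollary \ref{coro:app_norm}(ii) applies) even though the intermediate family $H'$ is only Hermitian --- this should be said explicitly.
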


Combining this with the previous paragraph, we obtain the result of Chefles \cite{chefles2009quantum}:

\begin{coro}
Let $\Pe=CP$, $\Phi\in \Ce(H,K)$, $\Psi\in \Ce(H,K')$. The following are equivalent.
\begin{enumerate}
\item[(i)] $\Phi=\alpha\circ\Psi$ for some channel $\alpha\in \Ce(K',K)$.
\item[(ii)] For any Hilbert space $D$, any experiment $\Ee\subset \states(H\otimes D)$ and any $N\in \mathbb N$, $(\Psi\otimes id_D)(\Ee)\succeq_{0,\diag_N} (\Phi\otimes id_D)(\Ee)$.
\item[(iii)] $(\Psi\otimes id_K)(\Ee)\succeq_{0,\diag_{d_K^2}} (\Phi\otimes id_K)(\Ee)$ for some experiment $\Ee$ that spans $B(H\otimes K)$. 
\end{enumerate}

\end{coro}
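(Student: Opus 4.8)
The plan is to deduce this from the preceding Proposition (equivalence of (i)–(iv) for $\Psi\succeq_{0,\De}\Phi$), from Theorem~\ref{thm:post}, from Theorem~\ref{thm:clpq}, and from the discussion of pointwise deficiency for statistical experiments. First I would unwind condition~(i): by Theorem~\ref{thm:post}(iv) with $\epsilon=0$ and $\De=K$, the existence of $\alpha\in\Ce(K',K)$ with $\Phi=\alpha\circ\Psi$ is equivalent to $\Psi\succeq_{0}\Phi$ (here $id\in CP$, so $\epsilon$-deficiency equals $\epsilon$-$B(K)$-deficiency), which by the preceding Proposition is in turn equivalent to $\|\Phi\circ\Gamma\|^\diamond\le\|\Psi\circ\Gamma\|^\diamond$ for all $\Gamma\in EB(\De,H)$, for every $\De$. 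So (i) is really the statement that $\Psi$ post-processing dominates $\Phi$ against every EB payoff map into every decision algebra.

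Next I would show (i)$\Rightarrow$(ii). Given (i), write $\alpha$ for the channel with $\Phi=\alpha\circ\Psi$. For any Hilbert space $D$, tensoring by $id_D$ gives $\Phi\otimes id_D=(\alpha\otimes id_D)\circ(\Psi\otimes id_D)$, so $\Psi\otimes id_D\succeq_0\Phi\otimes id_D$ by Theorem~\ref{thm:post}; by Proposition~\ref{prop:pw} (pointwise deficiency), for any experiment $\Ee\subset\states(H\otimes D)$ we get $(\Psi\otimes id_D)(\Ee)\succeq_{0,\diag_N}(\Phi\otimes id_D)(\Ee)$ for every $N$, since classical deficiency of experiments is exactly $\diag_N$-deficiency of their cq-channels. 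That is (ii). The implication (ii)$\Rightarrow$(iii) is trivial: specialize $D=K$, $N=d_K^2$, and pick $\Ee$ to span $B(H\otimes K)$ (such an $\Ee$ exists since $B(H\otimes K)$ is spanned by its states).

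The substantive direction is (iii)$\Rightarrow$(i), and this is where I expect the real work. The idea is to run the argument of Theorem~\ref{thm:clpq}(i) in reverse of its usual reading: take $\Ee$ spanning $B(H\otimes K)$ with $(\Psi\otimes id_K)(\Ee)\succeq_{0,\diag_{d_K^2}}(\Phi\otimes id_K)(\Ee)$. Using the cq-channel identification $(\Psi\otimes id_K)(\Ee)=(\Psi\otimes id_K)\circ\Phi^{cq}_\Ee$ and the fact that $\diag_{d_K^2}$-deficiency of experiments is $\diag_{d_K^2}$-post-processing deficiency of the corresponding cq-channels (the remark at the start of Section~\ref{sec:statexp}), condition (iii) says $\Phi^{cq}_\Ee\circ(\text{something})$-type domination, or more directly that for every POVM $M\in\povm(K\otimes D,d_K^2)$ there is $N\in\povm(K'\otimes D,d_K^2)$ realizing the payoff inequality~\eqref{eq:pqcl} with $\xi=id_K$. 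Then for an arbitrary $\Gamma\in CP(B(K),H)$ I would form $G_j=(id_H\otimes\theta^K_j)(C(\Gamma))$ as in Lemma~\ref{lemma:pqcl}, observe via the spanning property of $\Ee$ that $\Phi^{cq}_\Gamma$ factors through $\Phi^{cq}_\Ee$ as in~\eqref{eq:gammaeta}, invoke Lemma~\ref{lemma:pqcl} to translate between the purely-quantum payoffs $\langle\phi,\Gamma\rangle$ and the classical POVM payoffs $\langle\Phi^{qc}_M,\Phi^{cq}_\Gamma\rangle$, and conclude $\|\Phi\circ\Gamma\|^\diamond\le\|\Psi\circ\Gamma\|^\diamond$ for all $\Gamma\in CP(B(K),H)$. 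By the preceding Proposition this gives $\Psi\succeq_{0,K}\Phi$, hence $\Psi\succeq_0\Phi$ (since $id\in CP$), hence (i) via Theorem~\ref{thm:post}(iv). The main obstacle will be checking that the spanning hypothesis on $\Ee$ really lets one absorb the auxiliary $F_j$'s and recover \emph{all} of $CP(B(K),H)$ as payoff maps $\Gamma$ — i.e. that surjectivity of $id_H\otimes\Phi^{cq}_\Ee$ onto $B(H\otimes K)$ is exactly what~\eqref{eq:gammaeta} needs — and in keeping careful track of the dimensions ($d_D^2$ versus the number of $\theta^K_j$'s) so that the POVMs have the right number of outcomes. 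Once the bookkeeping is in place, everything reduces to the cited results.
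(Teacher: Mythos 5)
Your proposal is correct and follows the route the paper intends (the paper writes no proof, only the instruction to combine the pointwise-deficiency Proposition with the preceding paragraph): (i) is equivalent to $\Psi\succeq_0\Phi$ via Theorem \ref{thm:post}, (i)$\Rightarrow$(ii) follows by tensoring with $id_D$ and Proposition \ref{prop:pw}, and (iii)$\Rightarrow$(i) follows from the Proposition's spanning-experiment step together with Theorem \ref{thm:clpq}(ii) applied with $D=K$ and the last part of Theorem \ref{thm:post}. The only blemish is presentational: there is no need to re-derive Theorem \ref{thm:clpq} (citing part (i) with the surjective $\xi=id_K$, or part (ii), suffices), and the spanning hypothesis on $\Ee$ is consumed by the Proposition's implication (iv)$\Rightarrow$(i), not by the factorization \eqref{eq:gammaeta}, which only needs surjectivity of $id_K$.
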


\subsubsection{Post-processing deficiency for  POVMs}

 Let $M\in \povm(\Ae,m)$. Then $\Phi^{qc}_M$  describes the corresponding  \emph{measurement} on $\Ae$ with $m$ outcomes, in the sense that it maps each state $\rho\in \states(\Ae)$ 
to the vector  of probabilities of the outcomes. A post-processing of $\Phi^{qc}_M$ is a composition with a cq-channel $\Phi^{cq}_\Ee$, resulting in the EB-channel $\Phi_{M,\Ee}$.  Moreover, by a  classical  post-processing of $\Phi^{qc}_M$ 
we obtain a qc-channel $\Phi^{qc}_{\Lambda(M)}$, where $\Lambda$ is a 
 $k\times m$ \emph{stochastic matrix} $\Lambda=(\lambda_{ij})$ (which is a matrix with nonnegative entries, such that $\sum_i\lambda_{ij}=1$ for all $j$) and  $\Lambda(M)_i=\sum_j\lambda_{ij}M_j$.  

Let $N\in \povm(\Ae,n)$. We define post-processing deficiency of $N$ with respect to $M$ as the corresponding deficiency of the qc-channels.  Since $id_{\diag_m}\in \Pe(\diag_m,\diag_m)$ for any $\Pe$, we obtain by the last 
part of  Theorem \ref{thm:post} that for any choice of 
the positive cone, $N\succeq_\epsilon M$ if and only if $N\succeq_{\epsilon, \diag_m} M$, equivalently, there is an $m\times n$ stochastic matrix $\Lambda$ such that $\|\Phi^{qc}_M-\Phi^{qc}_{\Lambda(N)}\|_\diamond\le 2\epsilon$. In particular, for $\epsilon=0$, this means that $N$ is  \emph{post-processing cleaner} than $M$, \cite{bkdpw2005clean}.

\subsection{Pre-processings}

 Suppose that $\Pe$ is  an admissible family of cones and let $\Phi\in \Ce(\Ae,\Be)$. Clearly, there is another possibility to define a
  ''randomization'' of $\Phi$, namely by pre-composition with  some  $\beta\in \Ce(\Ae',\Ae)$. The resulting 
  channel $\Phi\circ\beta\in \Ce(\Ae',\Be)$ will be   called a \emph{pre-processing} of $\Phi$.

A \emph{pre-processing decision space} is a pair $(\De,\Gamma)$, where $\Gamma\in \Pe^*(\Be,\De)$ and the triple $(\Phi,\De,\Gamma)$ with  $\Phi\in \Ce(\Ae,\Be)$ will be called a \emph{pre-processing decision problem}. 
Decision rules are given by elements of $\Ce(\De,\Ae)$. The definition of pre-processing deficiency is formulated in the same way 
as for post-processings, we  use the notation $\Psi\succeq^{\epsilon,\De}\Phi$ meaning that $\Psi$ is \emph{$(\epsilon,\De)$-pre-processing deficient}  with respect to $\Phi$ and $\Psi\succeq^\epsilon \Phi$ if \emph{$\Psi$ is $\epsilon$-pre-processing deficient} with respect to $\Phi$.
The corresponding diagram is  as follows:
\[
\xymatrix@C=4pc@R=3pc{\Ae\ar[r]^\Phi & \Be \ar@{-->}[dl]_\Gamma\\
 \De\ar[u]^\beta\ar[r]_{\beta'} & \Ae'\ar[u]_\Psi}
\]
We omit the proof of the next theorem as it is practically the same as the proof of Theorem \ref{thm:post}.
\begin{thm}\label{thm:pre}\textbf{(Pre-processing)} Let 
$\Phi\in \Ce(\Ae,\Be)$ and $\Psi\in \Ce(\Ae',\Be)$, $\epsilon \ge 0$ and let $\De$  be any  C*-algebra.
  The following are equivalent.
\begin{enumerate}
\item[(i)] $\Psi\succeq^{\epsilon,\De}\Phi$.
\item [(ii)] For any  $\Gamma\in \Pe^*(\Be,\De)$, 
\[
\|\Gamma\circ\Phi\|^\diamond\le \|\Gamma\circ\Psi\|^\diamond+ \epsilon \|\Gamma\|^\diamond
\]
\item [(iii)] For any $\Gamma\in \Le(\Be,\De)$ and all $\beta\in \Ce(\De,\Ae)$, there is some $\beta'\in \Ce(\De,\Ae')$ such that
\[
\<\Phi\circ\beta,\Gamma\>\le \<\Psi\circ\beta',\Gamma\>+2\epsilon \|\Gamma\|^\diamond
\]
\item [(iv)] For any $\beta\in \Ce(\De,\Ae)$ there is some $\beta'\in \Ce(\De,\Ae')$ such that 
\[
\|\Phi\circ\beta-\Psi\circ\beta'\|_\diamond\le 2\epsilon .
\]
\end{enumerate}
Moreover, $\Psi\succeq^{\epsilon} \Phi$ if and only if $\Psi\succeq^{\epsilon,\Ae}\Phi$.
\end{thm}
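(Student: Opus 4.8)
The statement is the pre-processing analogue of Theorem~\ref{thm:post}, and I would prove it by mirroring the post-processing argument step by step, with the roles of input and output algebras interchanged and left-composition replaced by right-composition. The key structural fact powering the original proof is the identity~(\ref{eq:trian}): for composable Hermitian maps, $\<\alpha\circ\beta,\psi\>=\<\beta,\psi\circ\alpha\>=\<\alpha,\beta\circ\psi\>$. In the post-processing case one slides a decision rule $\phi$ off $\alpha\circ\Phi$ and onto $\Gamma$ to get $\Phi\circ\Gamma$; here, with $\Gamma\in\Pe^*(\Be,\De)$ and $\beta\in\Ce(\De,\Ae)$, the relevant payoff is $\<\Gamma,\Phi\circ\beta\>=\<\Gamma\circ\Phi,\beta\>$, so the natural object to track is $\Gamma\circ\Phi$ (rather than $\Phi\circ\Gamma$), which is exactly what appears in condition~(ii).

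\textbf{The cycle of implications.} I would run the same four-step cycle (i)$\Rightarrow$(ii)$\Rightarrow$(iii)$\Rightarrow$(iv)$\Rightarrow$(i) as in Theorem~\ref{thm:post}. For (i)$\Rightarrow$(ii): fix $\Gamma\in\Pe^*(\Be,\De)$; for any channel $\phi\in\Ce(\De,\Ae)$, use~(\ref{eq:trian}) to rewrite $\<\Gamma\circ\Phi,\phi\>=\<\Phi\circ\phi,\Gamma\>\le\sup_{\phi'\in\Ce(\De,\Ae')}\<\Psi\circ\phi',\Gamma\>+\epsilon\|\Gamma\|^\diamond$, and recognize the supremum as $\|\Gamma\circ\Psi\|^\diamond$ via Corollary~\ref{coro:app_norm}(ii); the converse direction is immediate. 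For (i)$\Rightarrow$(iii): given an arbitrary $\Gamma\in\Le(\Be,\De)$ with $\|\Gamma\|^\diamond=t$, apply~(\ref{eq:diamondu_inf}) to find $\sigma\in\states(\De)$ with $\Gamma+t\Phi^{cq}_\sigma\in\Pe^*(\Be,\De)$ and $\|\Gamma+t\Phi^{cq}_\sigma\|^\diamond\le 2t$; here I must be slightly careful about which algebra the dual-section element $\Phi^{cq}_\sigma$ lives over and in which slot it sits, but the bookkeeping is the same as before, and since $\<\Gamma',t\Phi^{cq}_\sigma\>$ is a constant independent of the channel, subtracting it recovers the $\Gamma$-payoff inequality with the factor-of-two loss. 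For (iii)$\Rightarrow$(iv): fix $\beta\in\Ce(\De,\Ae)$, write the content of (iii) as $\max_{\Gamma\in\Oe^\diamond}\min_{\beta'}\<\Phi\circ\beta-\Psi\circ\beta',\Gamma\>\le2\epsilon$, invoke the minimax theorem on the compact convex sets $\Oe^\diamond$ and $\Ce(\De,\Ae')$ (bilinearity is clear), swap $\max$ and $\min$, and read off $\min_{\beta'}\|\Phi\circ\beta-\Psi\circ\beta'\|_\diamond\le2\epsilon$. For (iv)$\Rightarrow$(i): pick $\beta'$ realizing $\|\Phi\circ\beta-\Psi\circ\beta'\|_\diamond\le2\epsilon$ and apply Corollary~\ref{coro:app_norm}(iii) to get $\<\Phi\circ\beta-\Psi\circ\beta',\Gamma\>\le\tfrac12\|\Phi\circ\beta-\Psi\circ\beta'\|_\diamond\|\Gamma\|^\diamond\le\epsilon\|\Gamma\|^\diamond$.

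\textbf{The ``moreover'' clause.} For the final equivalence $\Psi\succeq^\epsilon\Phi\iff\Psi\succeq^{\epsilon,\Ae}\Phi$, the argument again parallels the post-processing case but \emph{without} needing any hypothesis like $id\in\Pe$: one direction is trivial. For the other, assuming $\Psi\succeq^{\epsilon,\Ae}\Phi$, put $\beta=id_\Ae$ in~(iv) to obtain $\beta'\in\Ce(\Ae,\Ae')$ with $\|\Phi-\Psi\circ\beta'\|_\diamond\le2\epsilon$. Then for any $\De$, any $\Gamma\in\Pe^*(\Be,\De)$ and any $\phi\in\Ce(\De,\Ae)$, set $\phi'=\beta'\circ\phi\in\Ce(\De,\Ae')$; using~(\ref{eq:trian}), the bound $\<\Phi\circ\phi-\Psi\circ\phi',\Gamma\>=\<(\Phi-\Psi\circ\beta')\circ\phi,\Gamma\>=\<\Phi-\Psi\circ\beta',\Gamma\circ\phi\>$, then Corollary~\ref{coro:app_norm}(iii) and monotonicity (Lemma~\ref{lemma:monot}) to control $\|\Gamma\circ\phi\|^\diamond\le\|\Gamma\|^\diamond$, gives $\le\epsilon\|\Gamma\|^\diamond$. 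The reason the $id\in\Pe$ hypothesis drops out here is that in the pre-processing setting the ``universal'' decision algebra is the input algebra $\Ae$ itself and the bridging map $\beta'$ is composed on the \emph{input} side, where it only ever needs to be a genuine $\Pe$-channel $\Ae\to\Ae'$ — no identity on an output algebra is invoked. I expect the only real friction to be purely clerical: keeping the direction of every composition and the input/output slot of every $\Pe^*$-element straight throughout the dualizations, since a single transposed arrow would break~(\ref{eq:trian}). There is no genuine mathematical obstacle beyond what Theorem~\ref{thm:post} already overcame.
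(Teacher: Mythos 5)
Your cycle (i)$\Rightarrow$(ii)$\Rightarrow$(iii)$\Rightarrow$(iv)$\Rightarrow$(i) is correct and is exactly the argument the paper has in mind: the proof of Theorem~\ref{thm:post} with left- and right-composition interchanged. Your bookkeeping is sound throughout — tracking $\Gamma\circ\Phi\in\Pe^*(\Ae,\De)$, using $\<\Gamma\circ\Phi,\beta\>=\<\Phi\circ\beta,\Gamma\>$, shifting by $t\Phi_{I,\sigma}$ with $\sigma\in\states(\De)$, the minimax step, and Corollary~\ref{coro:app_norm}(iii) — up to one clerical slip: in the ``moreover'' part you write $\<\Phi-\Psi\circ\beta',\Gamma\circ\phi\>$, but $\Gamma\circ\phi$ does not compose (here $\phi\in\Ce(\De,\Ae)$ and $\Gamma\in\Le(\Be,\De)$); the map you want is $\phi\circ\Gamma\in\Pe\circ\Pe^*\subseteq\Pe^*(\Be,\Ae)$, after which Lemma~\ref{lemma:monot} gives $\|\phi\circ\Gamma\|^\diamond\le\|\Gamma\|^\diamond$ as intended.

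The genuine gap is your claim that the ``moreover'' clause requires no hypothesis of the form $id\in\Pe$, together with the justification you give for it. Your argument begins by ``putting $\beta=id_\Ae$ in (iv)'', but (iv) with $\De=\Ae$ quantifies over $\beta\in\Ce(\Ae,\Ae)$, i.e.\ over trace-preserving elements of $\Pe(\Ae,\Ae)$; so $\beta=id_\Ae$ is an admissible decision rule only if $id_\Ae\in\Pe(\Ae,\Ae)$, which fails for instance when $\Pe=EB$ and $\Ae$ is noncommutative. The situation is perfectly symmetric to the post-processing case: there the identity is used as a decision rule $\Be\to\De=\Be$ on the output side, here as a decision rule $\De=\Ae\to\Ae$ on the input side, and in both cases it must itself be a $\Pe$-channel. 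Your stated reason for the hypothesis ``dropping out'' — that the bridging map $\beta'$ only needs to be a genuine $\Pe$-channel $\Ae\to\Ae'$ — is beside the point, since the obstruction concerns $\beta$, not $\beta'$. (The theorem as printed does omit the hypothesis, but the proof the paper points to is the one you reproduced, and it needs $id\in\Pe$ in the same way. For special families such as $\Pe=EB$ one can rescue the claim by factoring every decision rule through a classical algebra, writing $\beta=\Phi^{cq}_\Ee\circ\Phi^{qc}_F$ and invoking $\Psi\supseteq_\epsilon\Phi$ as in Corollary~\ref{coro:cl_pre}; but that is a different argument, and you have not supplied one valid for a general admissible $\Pe$ with $id\notin\Pe$.)
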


\subsubsection{Classical pre-processing deficiency}

We next show that unlike post-processings, the \emph{classical pre-processing deficiency} is independent from the size of the set of decisions.
So let $\Phi\in \Ce(\Ae,\Be)$, $\Psi\in \Ce(\Ae',\Be)$ and let $\epsilon \ge 0$. We first observe that $\Psi\succeq^{\epsilon, \diag_1}\Phi$ means that  for every $\sigma\in \states(\Ae)$ there is some 
$\rho\in \states (\Ae')$ such that $\|\Phi(\sigma)-\Psi(\rho)\|_1\le 2\epsilon$, or in other words,
\begin{equation}\label{eq:pre_sub}
\sup_{\sigma\in\Se(\Ae)}\inf_{\rho\in \Se(\Ae')}\|\Phi(\sigma)-\Psi(\rho)\|_1\le 2\epsilon.
\end{equation}
This  condition  means that the range of $\Phi$ is not far from being a subset of the range of $\Psi$, in particular, we obtain
 $\Phi(\states(\Ae))\subseteq \Psi(\states(\Ae'))$ for $\epsilon=0$. We will therefore use the simpler notation $\Psi\supseteq_\epsilon \Phi$ if 
 (\ref{eq:pre_sub}) holds.  Note also that $\Psi\supseteq_\epsilon\Phi$ and simultaneously $\Psi\subseteq_\epsilon\Phi$ if and only if 
\[
\mathrm{dist}_1(\Phi(\states(\Ae)),\Psi(\states(\Ae')))\le 2\epsilon,
\]
where $\mathrm{dist}_1$ is the distance of the two ranges with respect to the trace norm. 

\begin{coro}\label{coro:cl_pre} Let $\Phi\in \Ce(\Ae,\Be)$, $\Psi\in \Ce(\Ae',\Be)$, $\epsilon\ge 0$, $n\in \mathbb N$. The following are equivalent.
\begin{enumerate}
\item[(i)] $\Psi\succeq^{\epsilon,\diag_n}\Phi$.
\item[(ii)] For all $G\in \Be^+$, $\|\Phi^*(G)\|\le \|\Psi^*(G)\|+\epsilon \|G\|$.
\item[(iii)] $\Psi\supseteq_\epsilon \Phi$.

\end{enumerate}

\end{coro}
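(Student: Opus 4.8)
The plan is to deduce this from Theorem \ref{thm:pre}, specialized to $\De = \diag_n$, and to exploit the fact that for commutative $\De$ everything collapses to qc-maps, cq-maps and POVMs as in Example \ref{ex:cqqc}, so that the cone $\Pe$ plays no role. First I would record that (i) for $n=1$ is by definition exactly (iii): when $\De=\diag_1=\mathbb C$, the only channel $\beta\in\Ce(\De,\Ae)$ is $\Phi^{cq}_\sigma:\lambda\mapsto\lambda\sigma$ for $\sigma\in\states(\Ae)$, and $\Phi\circ\beta=\Phi^{cq}_{\Phi(\sigma)}$, while $\|\Phi^{cq}_{\Phi(\sigma)}-\Psi^{cq}_{\Psi(\rho)}\|_\diamond = \|\Phi(\sigma)-\Psi(\rho)\|_1$ by Lemma \ref{lemma:qccq}(iii); applying Theorem \ref{thm:pre}(iv) turns this into \eqref{eq:pre_sub}. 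Thus (i)$_{n=1}\Leftrightarrow$(iii). The trivial implications (i)$_n\Rightarrow$(i)$_1$ for every $n$ (restrict to $\De=\diag_1$, viewed inside $\diag_n$, or just note a smaller decision algebra gives a weaker deficiency) hold, so it remains to close the loop by showing (iii)$\Rightarrow$(i)$_n$ for all $n$, and to identify condition (ii).

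For (ii), I would use Theorem \ref{thm:pre}(ii) with $\De=\diag_n$: any $\Gamma\in\Pe^*(\Be,\diag_n)$ is a qc-map $\Phi^{qc}_A$ with $A=\{A_1,\dots,A_n\}\subset\Be^+$, and by Lemma \ref{lemma:qccq}(ii) we have $\|\Phi^{qc}_A\|^\diamond=\sum_i\|A_i\|$, while $\Gamma\circ\Phi=\Phi^{qc}_{\{\Phi^*(A_i)\}}$ and likewise for $\Psi$, so $\|\Gamma\circ\Phi\|^\diamond=\sum_i\|\Phi^*(A_i)\|$. Hence Theorem \ref{thm:pre}(ii) for $\De=\diag_n$ reads $\sum_i\|\Phi^*(A_i)\|\le\sum_i\|\Psi^*(A_i)\|+\epsilon\sum_i\|A_i\|$ for all $A_1,\dots,A_n\in\Be^+$. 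Since this is a sum of independent terms, it holds for all $n$ and all such families if and only if it holds for a single operator, i.e. $\|\Phi^*(G)\|\le\|\Psi^*(G)\|+\epsilon\|G\|$ for all $G\in\Be^+$, which is (ii). This simultaneously shows (i)$_n\Leftrightarrow$(ii) for every $n\ge 1$ and in particular that (i)$_n$ is independent of $n$; combined with the $n=1$ case above this already gives (i)$_n\Leftrightarrow$(iii) once we know (i)$_1\Leftrightarrow$(ii), which is the $n=1$ instance of what we just proved. So really only one nontrivial equivalence is needed: that Theorem \ref{thm:pre}(ii) for $\De=\diag_1$ — i.e. $\|\Phi^*(G)\|\le\|\Psi^*(G)\|+\epsilon\|G\|$ — is equivalent to \eqref{eq:pre_sub}; but this in turn is Theorem \ref{thm:pre}((ii)$\Leftrightarrow$(iv)) applied with $\De=\diag_1$, using the norm computations of Lemma \ref{lemma:qccq}.

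The step I expect to require the most care is verifying, for $\De=\diag_1$, that the inf–sup expression \eqref{eq:pre_sub} is faithfully captured by the abstract norm inequality of Theorem \ref{thm:pre}(ii): one must check that $\|\Phi^{cq}_\tau\|^\diamond$ for $\tau\in\Be^+$ equals $\|\tau\|$ (Lemma \ref{lemma:qccq}(i) with $\sigma=\rho$ a rank-one state, or directly from \eqref{eq:diamondu_inf}), so that the payoff map $\Gamma\in\Pe^*(\Be,\mathbb C)=\Be^+$ has $\|\Gamma\|^\diamond=\|\Gamma\|$, and that $\langle\Phi\circ\beta-\Psi\circ\beta',\Gamma\rangle = \Tr G(\Phi(\sigma)-\Psi(\rho))$ with $G=\Gamma$; the minimax/compactness packaging is already done inside Theorem \ref{thm:pre}, so beyond these identifications the argument is bookkeeping. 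I would also remark explicitly that, because $Pos=EB$ on maps into or out of a commutative algebra, none of the three conditions depends on the admissible family $\Pe$, which is the point of the corollary.
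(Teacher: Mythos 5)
Your proposal is correct and follows essentially the same route as the paper: specialize Theorem \ref{thm:pre} to commutative $\De$, identify payoff maps and decision rules with qc-maps and cq-channels, compute the relevant norms via Lemma \ref{lemma:qccq}, and observe that the $\diag_n$ conditions decouple into $n$ independent single-operator (resp. single-state) conditions, so that everything reduces to $n=1$. The only cosmetic difference is that the paper performs the decoupling on the $\sup_i$ in formulation (iv) to reach (iii) and then invokes (ii)$\Leftrightarrow$(iv) of Theorem \ref{thm:pre} at $n=1$, whereas you decouple the sum in formulation (ii) directly; also note that the payoff map $\Gamma\in\Pe^*(\Be,\diag_1)$ is a qc-map, so the norm identity $\|\Gamma\|^\diamond=\|G\|$ is Lemma \ref{lemma:qccq}(ii), not (i).
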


\begin{proof} Let $(\diag_n,\Gamma)$ be a decision space.
Any decision rule for $(\Phi,\diag_n,\Gamma)$ is   a cq-channel $\Phi^{cq}_\Fe$,  
for some experiment $\Fe=\{\sigma_1,\dots,\sigma_n\}\subset \states(\Ae)$, similarly for $\Psi$. By definition,  
$\Psi\succeq^{\epsilon,\diag_n}\Phi$ if and only if for any such $\Fe$ there is an experiment $\Ee=\{\rho_1,\dots,\rho_n\}\subset\states(\Ae')$ such that 
\[
2\epsilon\ge \|\Phi\circ\Phi^{cq}_{\Fe}-\Psi\circ\Phi^{cq}_\Ee\|_\diamond=\|\Phi^{cq}_{\Phi(\Fe)}- \Phi^{cq}_{\Psi(\Ee)}\|_\diamond=\sup_i \|\Phi(\sigma_i)-\Psi(\rho_i)\|_1.
\]
But this is equivalent with the same condition for $n=1$, which is $\Psi\supseteq_\epsilon \Phi$. By  Theorem \ref{thm:pre}, this is equivalent to (ii) . 
\end{proof}

\subsubsection{Classical and purely quantum pre-procesing deficiency}

We will show the relation between classical and \emph{purely quantum pre-processing deficiency} for $\Pe=CP$, this will be assumed throughout this paragraph. As before, we write $\Psi\succeq^{\epsilon,d_D}\Phi$ instead of $\Psi\succeq^{\epsilon,B(D)}\Phi$.
%
Recall  the notation $|x_\sigma\>=\sum |i_H\>\otimes \sigma^{1/2}|i_H\>\in H\otimes H$, for 
$\sigma\in \states(H)$.

\begin{lemma}\label{lemma:bi_state} Let $\sigma\in \states(H\otimes D)$. Then there is some pure state $\sigma_0\in \states(D\otimes D)$ and some $\beta\in \Ce_{CP}(D,H)$ such that $\sigma=(\beta\otimes id_D)(\sigma_0)$.

\end{lemma}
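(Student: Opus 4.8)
The plan is to take $\sigma_0$ to be a purification of the reduced state $\sigma_D:=\ptr_H\sigma\in\states(D)$ realized inside $D\otimes D$, and to extract $\beta$ from the essential uniqueness of purifications. Concretely, put $\sigma_0:=|x_{\sigma_D}\>\<x_{\sigma_D}|$, with $|x_{\sigma_D}\>\in D\otimes D$ as in the recalled notation (so $\<x_{\sigma_D}|x_{\sigma_D}\>=\Tr\sigma_D=1$, hence $\sigma_0$ is a pure state on $D\otimes D$). A one-line computation gives that the partial trace of $\sigma_0$ over the \emph{first} factor equals $\sigma_D$, so $\sigma_0$ and $\sigma$ have the same marginal on the second copy of $D$ --- the factor that is to be left untouched by $id_D$.

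Next I would purify $\sigma$: choose an auxiliary Hilbert space $H'$ and a unit vector $|\Phi\>\in H\otimes H'\otimes D$ with $\ptr_{H'}|\Phi\>\<\Phi|=\sigma$; then automatically $\ptr_{H\otimes H'}|\Phi\>\<\Phi|=\ptr_H\sigma=\sigma_D$, so $|\Phi\>$ is a purification of $\sigma_D$ with purifying system $H\otimes H'$. I would pick $\dim H'$ large enough to serve two purposes at once: $\dim H'\ge\mathrm{rank}(\sigma)$, so that such a $|\Phi\>$ exists, and $d_H\dim H'\ge d_D$. Since $|x_{\sigma_D}\>$ is another purification of $\sigma_D$, now with purifying system $D$, the standard fact that two purifications of the same state are related by a partial isometry between the purifying systems --- which extends to a genuine isometry as soon as the target system has dimension at least that of the source --- produces an isometry $V:D\to H\otimes H'$ with $(V\otimes I_D)|x_{\sigma_D}\>=|\Phi\>$.

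Finally I would set $\beta(x):=\ptr_{H'}(VxV^*)$ for $x\in B(D)$. This is completely positive, being a composition of the CP maps $x\mapsto VxV^*$ and $\ptr_{H'}$, and it is trace preserving because $V^*V=I_D$ gives $\Tr\beta(x)=\Tr(V^*Vx)=\Tr x$; hence $\beta\in\Ce_{CP}(D,H)$. The desired identity then follows by a direct computation,
\[
(\beta\otimes id_D)(\sigma_0)=\ptr_{H'}\big[(V\otimes I_D)|x_{\sigma_D}\>\<x_{\sigma_D}|(V^*\otimes I_D)\big]=\ptr_{H'}|\Phi\>\<\Phi|=\sigma.
\]
The one point requiring care is the dimension bookkeeping that lets $V$ be taken to be an isometry rather than merely a partial isometry: if $d_H\dim H'<d_D$ the construction would only yield a CP, trace-nonincreasing map, not a channel. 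Everything else --- the marginal computation for $\sigma_0$, the existence of the purification $|\Phi\>$, and the uniqueness-of-purifications step --- is routine.
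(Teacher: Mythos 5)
Your proof is correct, but it takes a genuinely different route from the paper's. You use the purification machinery: purify $\sigma$ to $|\Phi\>\in H\otimes H'\otimes D$, observe that $|x_{\sigma_D}\>$ and $|\Phi\>$ are two purifications of the same marginal $\sigma_D$, invoke the essential uniqueness of purifications to get an isometry $V:D\to H\otimes H'$ with $(V\otimes I_D)|x_{\sigma_D}\>=|\Phi\>$, and read off $\beta=\ptr_{H'}(V\cdot V^*)$ in Stinespring form. The paper instead works directly at the level of Choi matrices: it sets $C'=(I\otimes\sigma_D^{-1/2})\sigma(I\otimes\sigma_D^{-1/2})$ (inverse on the support), patches the kernel of $\sigma_D$ so that the $H$-marginal becomes $I_D$, and declares the result to be $C(\beta)$; the identity $\sigma=(\beta\otimes id_D)(\sigma_0)$ is then immediate from $\sigma_0=(I\otimes\sigma_D^{1/2})X_D(I\otimes\sigma_D^{1/2})$. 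The two constructions produce the same $\sigma_0$ and essentially the same $\beta$, and both must handle the rank-deficient case: the paper by adding a term supported on $I_D-p$, you by extending the partial isometry to a genuine isometry, which is exactly where your dimension condition $d_H\dim H'\ge d_D$ is needed (and you correctly flag that without it one only gets a trace-nonincreasing CP map). What your approach buys is a Stinespring dilation of $\beta$ for free and no need to manipulate pseudoinverses; what the paper's buys is brevity and no auxiliary space $H'$, staying entirely within the Choi formalism already set up in Remark \ref{rem:choi}. Both are complete arguments.
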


\begin{proof} Put  $\sigma_D:=\ptr_H \sigma$ and let $p=\mathrm{supp} (\sigma_D)$. Then $\mathrm{supp}(\sigma)\le (I\otimes p)$ and $C':=(I\otimes \sigma_D^{-1/2})\sigma(I\otimes \sigma_D^{-1/2})$ is a positive element in $B(H\otimes D)$, where the inverse is taken only on the support of $\sigma_D$. 
Put $C=C'+(I_H\otimes(I_D- p))$, then $C\ge 0$ and
 $\ptr_H C=p+I_D-p=I_D$, hence there is some $\beta\in \Ce_{CP}(D,H)$ such that $C=C(\beta)$. Moreover,
\[
\sigma=(I_H\otimes\sigma_D^{1/2})C(I_H\otimes \sigma_D^{1/2})=(\beta\otimes id_D)(\sigma_0),
\]
where $\sigma_0=(I_D\otimes \sigma_D^{1/2})X_D(I_D\otimes \sigma_D^{1/2})=|x_{\sigma_D}\>\<x_{\sigma_D}|$ is a pure state in $\states(D\otimes D)$.

\end{proof}

Now we can prove the main results of this paragraph.

\begin{thm}\label{thm:pre_cp_0} Suppose $\Pe=CP$ and let $k\in \mathbb N$. The following are equivalent.
\begin{enumerate}
\item[(i)] $\Psi\succeq^{0,k} \Phi$.
\item[(ii)] $\Psi\otimes\xi\supseteq_0\Phi\otimes\xi$ for any channel $\xi\in \Ce(D,H')$, where $D$ and $H'$ are 
Hilbert spaces, $d_D=k$.

\item[(iii)] $\Psi\otimes\xi\supseteq_0\Phi\otimes\xi$ for some injective $\xi\in \Ce(D,H')$, where $D$ and $H'$ are Hilbert spaces, $d_D=k$.
\end{enumerate}

\end{thm}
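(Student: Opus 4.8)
The plan is to prove the chain of implications (i)$\Rightarrow$(ii)$\Rightarrow$(iii)$\Rightarrow$(i), using Theorem~\ref{thm:pre} to translate pre-processing deficiency into a statement about post-composition by payoff maps, together with the reduction of bipartite states supplied by Lemma~\ref{lemma:bi_state}. Throughout, recall that by Corollary~\ref{coro:cl_pre}, the relation $\Psi\otimes\xi\supseteq_0\Phi\otimes\xi$ is exactly classical pre-processing deficiency with $\diag_1$, i.e. it says $(\Phi\otimes\xi)(\states(\Ae\otimes D))\subseteq(\Psi\otimes\xi)(\states(\Ae'\otimes D))$.

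First I would prove (i)$\Rightarrow$(ii). Assume $\Psi\succeq^{0,k}\Phi$, i.e. $\Psi\succeq^{0,B(D)}\Phi$ with $d_D=k$. By Theorem~\ref{thm:pre}(iv) (with $\De=B(D)$) there is a channel $\beta'$ relating pre-compositions; more directly, I would use Theorem~\ref{thm:pre}(ii): for any $\Gamma\in CP(\Be,B(D))$ we have $\|\Gamma\circ\Phi\|^\diamond\le\|\Gamma\circ\Psi\|^\diamond$. Now take $\xi\in\Ce_{CP}(D,H')$ and an arbitrary state $\tau\in\states(\Ae\otimes D)$; I want a state $\tau'\in\states(\Ae'\otimes D)$ with $(\Phi\otimes\xi)(\tau)=(\Psi\otimes\xi)(\tau')$. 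Using Lemma~\ref{lemma:bi_state} applied to $\tau$ (with roles $H\leadsto\Ae$, $D\leadsto D$), write $\tau=(\beta\otimes id_D)(\sigma_0)$ for a pure $\sigma_0\in\states(D\otimes D)$ and $\beta\in\Ce_{CP}(D,\Ae)$. Since $\beta$ is a pre-processing decision rule for $(\Phi,B(D),\cdot)$, the hypothesis (via Theorem~\ref{thm:pre}(iv) in the "exact" case $\epsilon=0$) yields $\beta'\in\Ce_{CP}(D,\Ae')$ with $\Phi\circ\beta=\Psi\circ\beta'$. Then
\[
(\Phi\otimes\xi)(\tau)=((\Phi\circ\beta)\otimes\xi)(\sigma_0)=((\Psi\circ\beta')\otimes\xi)(\sigma_0)=(\Psi\otimes\xi)\bigl((\beta'\otimes id_D)(\sigma_0)\bigr),
\]
so $\tau'=(\beta'\otimes id_D)(\sigma_0)$ works, and $\Psi\otimes\xi\supseteq_0\Phi\otimes\xi$; this holds for every $\xi$.

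The implication (ii)$\Rightarrow$(iii) is immediate, taking $\xi$ to be any injective channel, e.g. an isometric embedding $\xi=V(\cdot)V^*$ of $B(D)$ into a corner of $B(H')$, which is a CP channel (or composed with a pinching to be trace preserving onto its image); such $\xi$ exists for every $k$. The real work is (iii)$\Rightarrow$(i). Suppose $\Psi\otimes\xi\supseteq_0\Phi\otimes\xi$ for some injective $\xi\in\Ce_{CP}(D,H')$, $d_D=k$. I must recover $\Psi\succeq^{0,B(D)}\Phi$, and by Theorem~\ref{thm:pre}(iv) it suffices to produce, for each $\beta\in\Ce_{CP}(D,\Ae)$, some $\beta'\in\Ce_{CP}(D,\Ae')$ with $\Phi\circ\beta=\Psi\circ\beta'$. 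Given $\beta$, form the Choi state $\sigma_\beta=C(\beta)/k\in\states(D\otimes D)$ (a bipartite state with the correct marginal), and consider $\tau=(\beta\otimes\xi)(\text{maximally entangled state on }D\otimes D)$, equivalently a suitable bipartite state in $\states(\Ae\otimes H')$ obtained by pushing a fixed maximally entangled vector through $\beta\otimes\xi$. By (iii) there is $\tau'\in\states(\Ae'\otimes H')$ with $(\Phi\otimes\xi)(\tau)=(\Psi\otimes\xi)(\tau')=:\omega$. The point is that injectivity of $\xi$ lets me "cancel" it: since $\xi$ is injective on states, the map $id\otimes\xi$ is injective on the relevant tensor spaces, so $\tau'$ is forced to have the form $(\beta'\otimes\, \cdot\,)$ with the same pure factor, and reading off the appropriate marginal / Choi operator recovers $\beta'\in\Ce_{CP}(D,\Ae')$ with $\Phi\circ\beta=\Psi\circ\beta'$. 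Concretely, one compares the two representations of $\omega$ as images under $id_{\cdot}\otimes\xi$ and uses that $\xi$ injective (hence left-invertible on its range, as a positive map) to deduce $(\Phi\circ\beta)\otimes\eta=(\Psi\circ\beta')\otimes\eta$ for some faithful $\eta$ built from $\xi$, and then strip off $\eta$.

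The main obstacle I anticipate is exactly this last "cancellation" step: turning the set-theoretic inclusion of ranges of $\Phi\otimes\xi$ and $\Psi\otimes\xi$ into an identity of channels $\Phi\circ\beta=\Psi\circ\beta'$ with $\beta'$ a genuine CP channel (not merely a Hermitian-preserving map), and doing so uniformly so that the choice is compatible with pre-processing decision rules. The tools are: Lemma~\ref{lemma:bi_state} to normalise bipartite states to the form $(\beta\otimes id)(\text{pure})$, injectivity of $\xi$ to invert $id\otimes\xi$ on the pertinent subspace, and positivity/trace-preservation bookkeeping via the Choi isomorphism (Remark~\ref{rem:choi}) to ensure the extracted map lies in $\Ce_{CP}$. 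One should also check the $\epsilon=0$ reduction in Theorem~\ref{thm:pre}(iv) genuinely gives an \emph{equality} $\Phi\circ\beta=\Psi\circ\beta'$ (not just $\|\cdot\|_\diamond\le 0$, which of course forces equality) — this is where the "exact" nature of the statement is used, and it is what makes the otherwise-delicate limiting/compactness arguments unnecessary here.
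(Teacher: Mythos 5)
Your chain (i)$\Rightarrow$(ii)$\Rightarrow$(iii) is correct and coincides with the paper's argument: decompose an arbitrary $\sigma\in\states(\Ae\otimes B(D))$ as $(\beta\otimes id_D)(\sigma_0)$ via Lemma \ref{lemma:bi_state}, replace $\beta$ by $\beta'$ with $\Phi\circ\beta=\Psi\circ\beta'$ (Theorem \ref{thm:pre}(iv) at $\epsilon=0$), and push $\sigma_0$ through.

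The gap is in (iii)$\Rightarrow$(i), exactly where you flag it, and it starts with a type error. The map $\Phi\otimes\xi$ sends $\Ae\otimes B(D)$ to $\Be\otimes B(H')$, so the input state must be $\sigma=\tfrac1k C(\beta)\in\states(\Ae\otimes B(D))$; the channel $\xi$ is applied by $\Phi\otimes\xi$ itself, not baked into the input. Your $\tau=(\beta\otimes\xi)(\cdot)\in\states(\Ae\otimes B(H'))$ is not in the domain of $\Phi\otimes\xi$, and the witness supplied by (iii) lives in $\states(\Ae'\otimes B(D))$, not $\states(\Ae'\otimes B(H'))$. Your "cancellation" step is then never carried out, and the formula ``$(\Phi\circ\beta)\otimes\eta=(\Psi\circ\beta')\otimes\eta$ for a faithful $\eta$'' does not parse. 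What closes the argument is two separate uses of injectivity of $\xi$. (a) Given $\rho\in\states(\Ae'\otimes B(D))$ with $(\Phi\otimes\xi)(\sigma)=(\Psi\otimes\xi)(\rho)$, apply $\ptr_\Be$ to both sides: since $\Phi,\Psi$ are trace preserving this gives $\xi(\ptr_{\Ae'}\rho)=\xi(\tfrac1k I_D)$, and injectivity of $\xi$ forces $\ptr_{\Ae'}\rho=\tfrac1k I_D$, so $k\rho=C(\beta')$ for a genuine channel $\beta'\in\Ce(B(D),\Ae')$ — positivity is automatic because $\rho$ is a state, which is all the "bookkeeping" there is. (b) To get $\Phi\circ\beta=\Psi\circ\beta'$ you may either cancel the injective map $id_\Be\otimes\xi$ from $(id_\Be\otimes\xi)(C(\Phi\circ\beta))=(id_\Be\otimes\xi)(C(\Psi\circ\beta'))$, or, as the paper does, dualize: injectivity of $\xi$ makes $id_\Be\otimes\xi^*$ surjective, so every $C(\Gamma^*)$ equals $(id_\Be\otimes\xi^*)(G)$ for some Hermitian $G$, whence $\<\Phi\circ\beta,\Gamma\>=k\Tr(\Phi\otimes\xi)(\sigma)G=k\Tr(\Psi\otimes\xi)(\rho)G=\<\Psi\circ\beta',\Gamma\>$ for every payoff $\Gamma$, with $\beta'$ depending only on $\beta$. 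With (a) and (b) written out, your outline becomes a complete proof; without them it stops short of the statement.
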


\begin{proof} Suppose (i) and let $\xi\in \Ce(D,H')$ be any channel. Let $\sigma\in \states(\Ae\otimes B(D))$, then by 
Lemma \ref{lemma:bi_state} there is some $\beta\in \Ce(B(D),\Ae)$ and a (pure) state $\sigma_0\in \states(D\otimes D)$ such that $\sigma=(\beta\otimes id)(\sigma_0)$. 
 By Theorem \ref{thm:pre} (iv) there is some $\beta'\in \Ce(B(D),\Ae')$ such that $\Phi\circ\beta=\Psi\circ\beta'$.
Hence
\[
(\Phi\otimes \xi)(\sigma)=(\Phi\circ\beta\otimes \xi)(\sigma_0)=(\Psi\circ\beta'\otimes\xi)(\sigma_0)=(\Psi\otimes\xi)(\rho),
\]
where $\rho=(\beta'\otimes id_D)(\sigma_0)\in \states(\Ae'\otimes B(D))$, this shows (ii). The implication (ii) $\implies$ (iii) is trivial.

Suppose (iii) and let $\Gamma\in CP(\Be,B(D))$, $\beta\in \Ce(B(D),\Ae)$. By Remark \ref{rem:choi},
\[
\<\Phi\circ\beta,\Gamma\>=\Tr C(\Phi\circ\beta)C(\Gamma^*)=\Tr (\Phi\otimes id_D)(C(\beta))C(\Gamma^*).
\]
Since $\xi$ is injective, $\xi^*$ and therefore also $id_\Be\otimes \xi^*$ is surjective, hence there is some 
$G\in (\Be\otimes B(H'))_h$ such that $(id_\Be\otimes\xi^*)(G)=C(\Gamma^*)$. Put $\sigma=\tfrac1dC(\beta)$, we obtain
\[
\<\Phi\circ\beta,\Gamma\>=d\Tr (\Phi\otimes \xi)(\sigma)G.
\]
Let $\rho\in \states(\Ae'\otimes B(D))$ be such that $(\Phi\otimes \xi)(\sigma)=(\Psi\otimes \xi)(\rho)$. 
Note that $\xi(\ptr_{\Ae'}(\rho))=\xi(\ptr_{\Ae}\sigma)=\tfrac1d\xi(I_D)$. Since $\xi$ is injective, this implies that
 $d\rho=C(\beta')$ for some $\beta'\in \Ce(B(D),\Ae')$ and
\[
\<\Phi\circ\beta,\Gamma\>=d\Tr(\Psi\otimes\xi)(\rho)G=\Tr (\Psi\otimes id)(C(\beta'))C(\Gamma^*)=\<\Psi\circ\beta',\Gamma\>,
\]
this implies (i).

\end{proof}

For $0<\epsilon <1$ we get a less satisfactory result.

\begin{thm}\label{thm:pre_cp_eps} Suppose $\Pe=CP$ and let $k\in \mathbb N$, $\epsilon \ge 0$. Consider the following statements.
\begin{enumerate}
\item[(i)] $\Psi\succeq^{\epsilon,k} \Phi$.
\item[(ii)] $\Psi\otimes\xi\supseteq_\epsilon\Phi\otimes\xi$ for any channel $\xi\in \Ce(D,H')$, where $D$ and $H'$ are 
Hilbert spaces, $d_D=k$.

\item[(iii)] $\Psi\otimes id_D\supseteq_\epsilon\Phi\otimes id_D$ for $d_D=k$.

\end{enumerate}
Then (i) $\implies$ (ii) $\implies$ (iii) $\implies \Psi\succeq^{\epsilon',k} \Phi$, where $\epsilon'=\epsilon+\frac12\sqrt{\epsilon}$.

\end{thm}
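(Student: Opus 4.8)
The plan is to prove the four implications in turn; the first two are short and parallel the proof of Theorem~\ref{thm:pre_cp_0}, and the bulk of the work is in the last one, (iii)$\implies\Psi\succeq^{\epsilon',k}\Phi$.

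For (i)$\implies$(ii): given $\beta\in\Ce(B(D),\Ae)$, Theorem~\ref{thm:pre}~(iv) with $\De=B(D)$ yields $\beta'\in\Ce(B(D),\Ae')$ with $\|\Phi\circ\beta-\Psi\circ\beta'\|_\diamond\le2\epsilon$; for any $\sigma\in\states(\Ae\otimes B(D))$ I would write $\sigma=(\beta\otimes id_D)(\sigma_0)$ with $\sigma_0\in\states(D\otimes D)$ pure by Lemma~\ref{lemma:bi_state}, and put $\rho:=(\beta'\otimes id_D)(\sigma_0)$. Since $(\Phi\otimes\xi)(\sigma)-(\Psi\otimes\xi)(\rho)=(id_\Be\otimes\xi)\big(((\Phi\circ\beta-\Psi\circ\beta')\otimes id_D)(\sigma_0)\big)$, contractivity of the channel $id_\Be\otimes\xi$ in $\|\cdot\|_1$ together with Theorem~\ref{thm:diamond} (for $\Pe=CP$, $\|\cdot\|_{1,\tilde\Pe}=\|\cdot\|_1$ by Remark~\ref{rem:norms}) gives $\|(\Phi\otimes\xi)(\sigma)-(\Psi\otimes\xi)(\rho)\|_1\le\|\Phi\circ\beta-\Psi\circ\beta'\|_\diamond\le2\epsilon$; taking $\inf_\rho\sup_\sigma$ proves (ii), and (ii)$\implies$(iii) is the case $\xi=id_D$.

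For (iii)$\implies\Psi\succeq^{\epsilon',k}\Phi$, by definition it suffices to fix a payoff map $\Gamma\in CP(\Be,B(D))$ ($\Gamma\neq0$) and a decision rule $\beta\in\Ce(B(D),\Ae)$ and to produce $\beta'\in\Ce(B(D),\Ae')$ with $\<\Phi\circ\beta,\Gamma\>\le\<\Psi\circ\beta',\Gamma\>+\epsilon'\|\Gamma\|^\diamond$; the strategy is to transport this to a comparison of states via the Choi isomorphism and then apply (iii). Pick an optimal $\sigma_\Gamma\in\states(D)$ via Lemma~\ref{lemma:cp_udiam}, so $0\le C(\Gamma^*)\le\|\Gamma\|^\diamond(I_\Be\otimes\sigma_\Gamma)$, and set $E:=(\|\Gamma\|^\diamond)^{-1}(I_\Be\otimes\sigma_\Gamma^{-1/2})C(\Gamma^*)(I_\Be\otimes\sigma_\Gamma^{-1/2})$, a positive contraction. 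Using $C(\Phi\circ\beta)=(\Phi\otimes id_D)(C(\beta))$ (Remark~\ref{rem:choi}) and commuting the conjugation by $I\otimes\sigma_\Gamma^{1/2}$ past $\Phi\otimes id_D$, one gets $\<\Phi\circ\beta,\Gamma\>=\Tr C(\Phi\circ\beta)C(\Gamma^*)=\|\Gamma\|^\diamond\Tr[(\Phi\otimes id_D)(\tilde\sigma)E]$, where $\tilde\sigma:=(I_\Ae\otimes\sigma_\Gamma^{1/2})C(\beta)(I_\Ae\otimes\sigma_\Gamma^{1/2})\in\states(\Ae\otimes B(D))$ has $\ptr_\Ae\tilde\sigma=\sigma_\Gamma$ (because $\ptr_\Ae C(\beta)=I_D$); symmetrically $\<\Psi\circ\beta',\Gamma\>=\|\Gamma\|^\diamond\Tr[(\Psi\otimes id_D)(\tilde\rho')E]$ with $\tilde\rho':=(I_{\Ae'}\otimes\sigma_\Gamma^{1/2})C(\beta')(I_{\Ae'}\otimes\sigma_\Gamma^{1/2})$, and conversely every state $\tilde\rho'\in\states(\Ae'\otimes B(D))$ with $\ptr_{\Ae'}\tilde\rho'=\sigma_\Gamma$ comes from some $\beta'\in\Ce(B(D),\Ae')$. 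So it is enough to find such $\tilde\rho'$ with $\Tr[(\Phi\otimes id_D)(\tilde\sigma)E]\le\Tr[(\Psi\otimes id_D)(\tilde\rho')E]+\epsilon'$. Applying (iii) to $\tilde\sigma$ gives $\rho\in\states(\Ae'\otimes B(D))$ with $\|(\Phi\otimes id_D)(\tilde\sigma)-(\Psi\otimes id_D)(\rho)\|_1\le2\epsilon$, hence (traceless difference, $0\le E\le I$) $\Tr[(\Phi\otimes id_D)(\tilde\sigma)E]\le\Tr[(\Psi\otimes id_D)(\rho)E]+\epsilon$; applying $\ptr_\Be$ and using trace preservation of $\Phi,\Psi$ shows $\|\ptr_{\Ae'}\rho-\sigma_\Gamma\|_1\le2\epsilon$. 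Finally I would correct $\rho$: by Uhlmann's theorem (in the form: an extension of $\ptr_{\Ae'}\rho$ can be matched by an extension of $\sigma_\Gamma$ with fidelity $F(\ptr_{\Ae'}\rho,\sigma_\Gamma)$) there is $\tilde\rho'\in\states(\Ae'\otimes B(D))$ with $\ptr_{\Ae'}\tilde\rho'=\sigma_\Gamma$ and $F(\rho,\tilde\rho')=F(\ptr_{\Ae'}\rho,\sigma_\Gamma)$, and the Fuchs--van de Graaf inequalities bound $\tfrac12\|\rho-\tilde\rho'\|_1$ by $O(\sqrt\epsilon)$; a careful estimate produces the correction term, $\Tr[(\Psi\otimes id_D)(\rho)E]\le\Tr[(\Psi\otimes id_D)(\tilde\rho')E]+\tfrac12\sqrt\epsilon$ (using contractivity of $\Psi\otimes id_D$). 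Chaining the two estimates and multiplying by $\|\Gamma\|^\diamond$ gives $\<\Phi\circ\beta,\Gamma\>\le\<\Psi\circ\beta',\Gamma\>+(\epsilon+\tfrac12\sqrt\epsilon)\|\Gamma\|^\diamond$; as $\Gamma,\beta$ were arbitrary, $\Psi\succeq^{\epsilon',B(D)}\Phi$, that is, $\Psi\succeq^{\epsilon',k}\Phi$.

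The step I expect to be the main obstacle is this final correction --- replacing a state whose $D$-marginal (equivalently, a completely positive map $B(D)\to\Ae'$ whose trace-action) is only \emph{approximately} $\sigma_\Gamma$ (the identity) by one matching it \emph{exactly}, i.e. a genuine $\Pe$-channel, with small cost. This cannot be done at cost linear in $\epsilon$: highly entangled $\rho$ force a genuine loss of order $\sqrt\epsilon$ (already visible in small qubit examples), which is exactly why the theorem loses $\tfrac12\sqrt\epsilon$ and why Uhlmann's theorem together with the fidelity/trace-distance inequalities is the natural tool; extracting the sharp constant is the one really delicate computation. Routine points to dispatch along the way: when $\sigma_\Gamma$ is singular, work on its support and extend $\beta'$ arbitrarily off it (harmless, since $E$ is supported there), and the Choi-level identities and the behaviour of partial traces under trace-preserving maps are as in Remark~\ref{rem:choi}.
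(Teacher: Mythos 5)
Your proof of (i)$\implies$(ii)$\implies$(iii) coincides with the paper's. For the last implication you follow the same overall strategy as the paper --- pass to Choi matrices with the optimal $\gamma=\sigma_\Gamma$ from Lemma~\ref{lemma:cp_udiam}, apply (iii) to $\tilde\sigma=(\beta\otimes id)(|x_{\sigma_\Gamma}\>\<x_{\sigma_\Gamma}|)$, and then repair the fact that the resulting $\rho$ has only approximately the right $D$-marginal --- but you implement the repair differently. The paper writes $\rho=(\beta'\otimes id)(|x_{\rho_D}\>\<x_{\rho_D}|)$ via Lemma~\ref{lemma:bi_state} and keeps that $\beta'$, charging the error to the trace distance between the two canonical purifications $|x_{\rho_D}\>$ and $|x_\gamma\>$, which it controls by the Powers--St\"ormer inequality; you instead replace $\rho$ itself by a state $\tilde\rho'$ with exact marginal $\sigma_\Gamma$ via Uhlmann's theorem and control $\|\rho-\tilde\rho'\|_1$ by Fuchs--van de Graaf. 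Both routes are sound and both give a correction of order $\sqrt\epsilon$, so you correctly identified where the loss occurs.

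The one concrete shortfall is the constant. From $F(\rho,\tilde\rho')=F(\ptr_{\Ae'}\rho,\sigma_\Gamma)\ge 1-\tfrac12\|\ptr_{\Ae'}\rho-\sigma_\Gamma\|_1\ge 1-\epsilon$ and Fuchs--van de Graaf you get $\tfrac12\|\rho-\tilde\rho'\|_1\le\sqrt{1-F^2}\le\sqrt{2\epsilon}$, so your route as written yields $\epsilon'=\epsilon+\sqrt{2\epsilon}$ rather than $\epsilon+\tfrac12\sqrt\epsilon$; the ``careful estimate'' you defer does not obviously close this factor of $2\sqrt2$. You should either state the theorem with the weaker constant or switch to the purification-swap argument; note, though, that even the paper's own estimate is delicate at exactly this point (it writes $\||u\>\<u|-|v\>\<v|\|_1=\sqrt{1-|\<u,v\>|^2}$, dropping the factor $2$, and replaces $1-|\<x_{\rho_D},x_\gamma\>|^2$ by the smaller quantity $1-\Tr\rho_D^{1/2}\gamma^{1/2}$), so the honest bound obtainable by either method appears to be of order $\sqrt{2\epsilon}$. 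Everything else --- the Choi-level bookkeeping, the support issues for singular $\sigma_\Gamma$, the marginal estimate $\|\ptr_{\Ae'}\rho-\sigma_\Gamma\|_1\le 2\epsilon$, and the correspondence between channels $\beta'$ and states with marginal $\sigma_\Gamma$ --- is correct.
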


\begin{proof} The proof (i) $\implies$ (ii) is very similar to the proof in the case $\epsilon=0$. 
Let $\xi\in \Ce(D,H')$ be any channel and  let $\sigma\in \states(\Ae\otimes B(D))$, $\sigma=(\beta\otimes id)(\sigma_0)$.
Let $\beta'\in \Ce(B(D),\Ae')$ be such that $\|\Phi\circ\beta-\Psi\circ\beta'\|_\diamond\le 2\epsilon$ and 
put $\rho:=(\beta'\otimes id_D)(\sigma_0)$. Then $\rho\in \states(\Ae'\otimes B(D))$ and
\begin{align*}
\|(\Phi\otimes \xi)(\sigma)&-(\Psi\otimes \xi)(\rho)\|_1\\
&=\|(\Phi\circ\beta\otimes \xi)(\sigma_0)-(\Psi\circ\beta'\otimes \xi)(\sigma_0)\|_1\\
&\le \|\Phi\circ\beta-\Psi\circ\beta'\|_\diamond\le 2\epsilon.
\end{align*}
The implication (ii) $\implies$ (iii) is trivial.

Suppose (iii), let $\Gamma\in CP(\Be,B(D))$, $\beta\in \Ce(B(D),\Ae)$. By Lemma \ref{lemma:cp_udiam} there is some
 $\gamma\in \states(D)$ such that $G:=(I\otimes\gamma^{-1/2})C(\Gamma^*)(I\otimes\gamma^{-1/2})$ satisfies 
$\|\Gamma\|^\diamond=\|G\|$.  By Remark \ref{rem:choi} we have
\[
\<\Phi\circ\beta,\Gamma\>=\Tr (\Phi\otimes id)(C(\beta))C(\Gamma^*)=\Tr (\Phi\otimes id)(\sigma)G,
\]
where $\sigma=(I\otimes\gamma^{1/2})C(\beta)(I\otimes\gamma^{1/2})=(\beta\otimes id)(|x_\gamma\>\<x_\gamma|)\in\states(\Ae\otimes B(D))$.
Let $\rho\in \states(\Ae'\otimes B(D))$ be such that $\|(\Phi\otimes id)(\sigma)-(\Psi\otimes id)(\rho)\|_1\le 2\epsilon$.
By Lemma \ref{lemma:bi_state}, there is some $\beta'\in \Ce(B(D),\Ae')$ and a pure state 
$\rho_0=|x_{\rho_D}\>\<x_{\rho_D}|\in \states(D\otimes D)$  such that $\rho=(\beta'\otimes id)(\rho_0)$. We have
\begin{align*}
\<\Phi\circ\beta,\Gamma\>-\<\Psi\circ\beta',\Gamma\>&=\Tr[(\Phi\otimes id)(\sigma)-(\Psi\otimes id)(\rho)]G+\\&+
\Tr(\Psi\circ\beta'\otimes id)(|x_{\rho_D}\>\<x_{\rho_D}|-|x_\gamma\>\<x_\gamma|)G\\&\le \epsilon\|G\|+\frac12\||x_{\rho_D}\>\<x_{\rho_D}|-|x_\gamma\>\<x_\gamma|\|_1\|G\|
\end{align*}
Further,
\begin{align*}
\||x_{\rho_D}\>\<x_{\rho_D}|-|x_\gamma\>\<x_\gamma|\|_1&=\sqrt{1-|\<x_{\rho_D},x_{\gamma}\>|^2}=\sqrt{1-\Tr \rho_D^{1/2}\gamma^{1/2}}\\
&=\sqrt{\frac12\Tr(\rho_D^{1/2}-\gamma^{1/2})^2}\\
&\le \sqrt{\frac12\|\rho_D-\gamma\|_1}\le \sqrt{\epsilon},
\end{align*}
where the first inequality follows by Powers-St\"ormer inequality \cite{post1970free} and the second inequality follows from 
\[
\|\rho_D-\gamma\|_1=\|\ptr_\Be[(\Phi\otimes id)(\sigma)-(\Psi\otimes id)(\rho)]\|_1\le 2\epsilon.
\]

\end{proof}

\subsubsection{POVMs and pointwise pre-processing deficiency}

The POVMs, or the qc-channels, will play a similar role for pre-processings as the cq-channels, or experiments, for post-processings.

Let $M\in \povm(\Ae,n)$ and let  $(\Phi^{qc}_M,\De,\Gamma)$ be a pre-processing decision problem. Then
 $\Gamma=\Phi^{cq}_G$ for some $G=\{G_1,\dots,G_n\}\subset \De^+$. In this case, we will use a simpler notation $(M,\De,G)$ for this decision problem.  If $\beta\in \Ce(\De,\Ae)$ is a decision rule, we will denote the corresponding payoff by $P_M(\beta,G)=P_{\Phi^{qc}_M}(\beta,\Phi^{cq}_G)$ and if $N\in \povm(\Ae')$ we will write 
 $N\succeq^{\epsilon,\De} M$ instead of $\Phi^{qc}_N\succeq^{\epsilon,\De}\Phi^{qc}_M$. 
As  before, let $\Pe$ be an admissible  family of cones. As in the case of experiments, we write Theorem \ref{thm:pre} for POVMs, indicating the cone $\Pe$.

\begin{coro} The following are equivalent.
\begin{enumerate}
\item[(i)] $N\succeq^{\epsilon,\De, \Pe}M$.
\item[(ii)] For all $G=\{G_1,\dots,G_n\}\subset \De^+$. 
\[
\|\Phi_{M,G}\|^{\diamond}_\Pe\le \|\Phi_{N,G}\|^\diamond_\Pe+\epsilon \inf_{\sigma\in \states(\De)}\max_i\|\sigma^{-1/2}G_i\sigma^{-1/2}\|.
\]
\item[(iii)] For all $\beta\in \Ce_\Pe(\De,\Ae)$ there is some $\beta'\in \Ce_\Pe(\De,\Ae')$ such that 
\[
\sup_{\sigma\in \states(\De)}\sum_i\|\sigma^{1/2}(\beta^*(M_i)-(\beta')^*(N_i))\sigma^{1/2}\|_1\le 2\epsilon.
\]
\end{enumerate}

\end{coro}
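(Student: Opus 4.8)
The plan is to specialize Theorem \ref{thm:pre} to the pre-processing decision problem $(\Phi^{qc}_M,\De,\Gamma)$, rewrite every object in terms of POVMs, and then translate the four equivalent conditions of that theorem using the explicit norm formulas of Lemma \ref{lemma:qccq}. First I would record the necessary bookkeeping. Since the domain $\diag_n$ of $\Phi^{qc}_M$ is commutative we have $Pos(\diag_n,\De)=EB(\diag_n,\De)$, and because $EB\subseteq\Pe^*\subseteq Pos$ this collapses $\Pe^*(\diag_n,\De)$ to the set of cq-maps $\{\Phi^{cq}_G:\ G=\{G_1,\dots,G_n\}\subset\De^+\}$; this is exactly why a payoff map here may be written $\Gamma=\Phi^{cq}_G$ and the problem denoted $(M,\De,G)$. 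By Lemma \ref{lemma:qccq} (i), $\|\Gamma\|^\diamond=\|\Phi^{cq}_G\|^\diamond=\inf_{\sigma\in\states(\De)}\max_i\|\sigma^{-1/2}G_i\sigma^{-1/2}\|$. Next, for a decision rule $\beta\in\Ce_\Pe(\De,\Ae)$ one checks directly from the definitions in Example \ref{ex:cqqc} that $\beta\circ\Phi^{cq}_G=\Phi^{cq}_{\{\beta(G_i)\}}$, that $\Phi^{cq}_G\circ\Phi^{qc}_M=\Phi_{M,G}=\Gamma\circ\Phi^{qc}_M$ in the notation of Theorem \ref{thm:pre}, and that $\Phi^{qc}_M\circ\beta=\Phi^{qc}_{\beta^*(M)}$, where $\beta^*(M)=\{\beta^*(M_1),\dots,\beta^*(M_n)\}$ is again a POVM in $\De$ because $\beta$ is trace-preserving and positive, so that $\beta^*$ is unital and positive. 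Combining with (\ref{eq:s_cq}) one gets $\<\Phi^{qc}_M\circ\beta,\Phi^{cq}_G\>=\sum_i\Tr G_i\beta^*(M_i)$.

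Granting these identifications, $N\succeq^{\epsilon,\De,\Pe}M$ is by definition condition (i) of Theorem \ref{thm:pre} applied to $\Phi=\Phi^{qc}_M$, $\Psi=\Phi^{qc}_N$. Hence (i)$\Leftrightarrow$(ii) of the present corollary is nothing but (i)$\Leftrightarrow$(ii) of Theorem \ref{thm:pre}: when $\Gamma$ ranges over $\Pe^*(\diag_n,\De)$, i.e.\ over the cq-maps $\Phi^{cq}_G$, one has $\Gamma\circ\Phi^{qc}_M=\Phi_{M,G}$, $\Gamma\circ\Phi^{qc}_N=\Phi_{N,G}$, and $\|\Gamma\|^\diamond=\inf_{\sigma}\max_i\|\sigma^{-1/2}G_i\sigma^{-1/2}\|$ as above. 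For (i)$\Leftrightarrow$(iii) I would invoke (i)$\Leftrightarrow$(iv) of Theorem \ref{thm:pre}: $N\succeq^{\epsilon,\De,\Pe}M$ holds iff for every $\beta\in\Ce_\Pe(\De,\Ae)$ there is $\beta'\in\Ce_\Pe(\De,\Ae')$ with $\|\Phi^{qc}_M\circ\beta-\Phi^{qc}_N\circ\beta'\|_\diamond\le 2\epsilon$. Since $A\mapsto\Phi^{qc}_A$ is linear, $\Phi^{qc}_M\circ\beta-\Phi^{qc}_N\circ\beta'=\Phi^{qc}_{\beta^*(M)-(\beta')^*(N)}$, and Lemma \ref{lemma:qccq} (iv) evaluates its diamond norm as $\sup_{\sigma\in\states(\De)}\sum_i\|\sigma^{1/2}(\beta^*(M_i)-(\beta')^*(N_i))\sigma^{1/2}\|_1$, which is precisely the quantity appearing in (iii).

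The argument is essentially routine once the dictionary between cq-/qc-maps, their compositions with channels, and the adjoint maps $\beta^*$ is in place, so I do not expect a genuine obstacle; the only points needing a little care are the two structural observations that make the specialization legitimate — that on a commutative domain $\Pe^*$ collapses to $EB=Pos$, so the payoff maps are exactly the cq-maps (independently of $\Pe$), and that $\beta^*(M)$ is a genuine POVM — together with matching the three constants $\epsilon\|\Gamma\|^\diamond$, $\epsilon\inf_\sigma\max_i\|\sigma^{-1/2}G_i\sigma^{-1/2}\|$ and $2\epsilon$ through Lemma \ref{lemma:qccq}.
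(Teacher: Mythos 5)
Your proposal is correct and is exactly the route the paper intends: the corollary is presented there without proof as the specialization of Theorem \ref{thm:pre} to $\Phi=\Phi^{qc}_M$, $\Psi=\Phi^{qc}_N$, with $\Pe^*(\diag_n,\De)$ collapsing to the cq-maps $\Phi^{cq}_G$, $\Gamma\circ\Phi^{qc}_M=\Phi_{M,G}$, $\Phi^{qc}_M\circ\beta=\Phi^{qc}_{\beta^*(M)}$, and the two norm constants supplied by Lemma \ref{lemma:qccq} (i) and (iv). Your bookkeeping (including the observation that $Pos=EB$ on a commutative domain, so the payoff maps and norms are independent of $\Pe$) matches the paper's surrounding remarks, so there is nothing to add.
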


In the case that $\Pe=CP$, $\epsilon=0$ and $\De=\Ae$, these are equivalent conditions for $N$ to be \emph{cleaner} than $M$, in the sense of  
\cite{bkdpw2005clean}. Note also that Theorem \ref{thm:pre_cp_0} for POVMs yields  \cite[Theorem 7.2.]{bkdpw2005clean}.
In fact, the former can be obtained from the latter, using the relation of pre-processing deficiency with its 
pointwise version for $\epsilon=0$ proved below.

Let $\Phi\in \Ce(\Ae,\Be)$ be a channel and consider the decision problem $(\Phi,\De,\Gamma)$ where
 $\Gamma=\Phi_{F,G}$ is an EB map. By Remark \ref{rem:EB}
 we may suppose that $F$ is a POVM. For any decision rule $\beta$, we have for
 the corresponding payoff 
\begin{align*}
P_\Phi(\beta,\Gamma)=\<\Phi\circ\beta,\Phi_{F,G}\>=\<\Phi^{qc}_{\Phi^*(F)}\circ\beta,\Phi^{cq}_G\>=P_{\Phi^*(F)}(\beta, G).
\end{align*}
Conversely, let $F\in \povm(\Be,n)$, then $\Phi^*(F)\in \povm(\Ae,n)$. Let $(\Phi^*(F),\De,G)$ be a decision problem, then for any decision rule $\beta$, $P_{\Phi^*(F)}(\beta,G)=P_\Phi(\beta, \Phi_{F,G})$.
It follows that, similarly as in the case of post-processings, decision problems with the payoff represented by an EB map are closely related to
\emph{pointwise pre-processing deficiency} of channels, see also the diagram
\[
\xymatrix@C=2pc@R=1.5pc{ \Ae \ar[rr]^\Phi & &\Be \ar@{-->}[ddll]^{\Phi_{F,G}} \ar[dddr]^{\Phi^{qc}_F}& \\
 & & & \\
 \De \ar[rr]^{\beta'} \ar[uu]^\beta & &\Ae' \ar[uu]_{\Psi} & \\
 & & & \diag_n \ar@{-->}[ulll]^{\Phi^{cq}_G}}
\]
Similarly as for post-processings, we obtain:

\begin{prop} $\Psi\succeq^{\epsilon,\De,\Pe}\Phi$ implies that $\Psi^*(F)\succeq^{\epsilon,\De,\Pe} \Phi^*(F)$ for any 
 $F\in \povm(\Be)$.

\end{prop}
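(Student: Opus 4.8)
The plan is to mirror the payoff identification carried out just above the statement: for a fixed POVM $F\in\povm(\Be,n)$, the pre-processing decision problems for the qc-channel $\Phi^{qc}_{\Phi^*(F)}$ are exactly the pre-processing decision problems for $\Phi$ whose payoff map is the EB map $\Phi_{F,G}$, so the claim will follow by applying the hypothesis $\Psi\succeq^{\epsilon,\De,\Pe}\Phi$ to these particular payoff maps, just as Proposition \ref{prop:pw} was obtained for post-processings. First I would record that $\Phi^*(F)$ and $\Psi^*(F)$ are indeed POVMs, so that the conclusion is meaningful: the Hilbert--Schmidt adjoint of a positive map is positive (since $\Tr\Phi^*(b)a=\Tr b\,\Phi(a)\ge0$ for $a,b\ge0$), so each $\Phi^*(F_i),\Psi^*(F_i)$ is positive, and since $\Phi,\Psi$ are trace preserving their adjoints are unital, whence $\sum_i\Phi^*(F_i)=\Phi^*(I_\Be)=I_\Ae$ and likewise $\sum_i\Psi^*(F_i)=I_{\Ae'}$.

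Next I would fix an arbitrary pre-processing decision problem $(\Phi^*(F),\De,G)$, i.e. a sequence $G=\{G_1,\dots,G_n\}\subset\De^+$, and an arbitrary decision rule $\beta\in\Ce_\Pe(\De,\Ae)$; the goal is to exhibit $\beta'\in\Ce_\Pe(\De,\Ae')$ with $P_{\Phi^*(F)}(\beta,G)\le P_{\Psi^*(F)}(\beta',G)+\epsilon\|\Phi^{cq}_G\|^\diamond$, which is precisely the defining inequality of $\Psi^*(F)\succeq^{\epsilon,\De,\Pe}\Phi^*(F)$. To this end, consider the map $\Gamma:=\Phi_{F,G}=\Phi^{cq}_G\circ\Phi^{qc}_F\in EB(\Be,\De)\subseteq\Pe^*(\Be,\De)$; since $F$ is a POVM and $G\subset\De^+$, this is a legitimate payoff map for the pre-processing decision problems $(\Phi,\De,\Gamma)$ and $(\Psi,\De,\Gamma)$. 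Applying the hypothesis $\Psi\succeq^{\epsilon,\De,\Pe}\Phi$ to this $\Gamma$ and to $\beta$ yields some $\beta'\in\Ce_\Pe(\De,\Ae')$ with $\<\Phi\circ\beta,\Gamma\>\le\<\Psi\circ\beta',\Gamma\>+\epsilon\|\Gamma\|^\diamond$.

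It then remains to translate each term back into the POVM language. By the identity established immediately before the statement, $\<\Phi\circ\beta,\Phi_{F,G}\>=\<\Phi^{qc}_{\Phi^*(F)}\circ\beta,\Phi^{cq}_G\>=P_{\Phi^*(F)}(\beta,G)$, and with $\Phi,\Ae,\beta$ replaced by $\Psi,\Ae',\beta'$ the same computation gives $\<\Psi\circ\beta',\Phi_{F,G}\>=P_{\Psi^*(F)}(\beta',G)$. Finally, by Lemma \ref{lemma:monot}, $\|\Gamma\|^\diamond=\|\Phi^{cq}_G\circ\Phi^{qc}_F\|^\diamond\le\|\Phi^{cq}_G\|^\diamond$, since $\Phi^{qc}_F$ is a channel. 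Substituting these three facts into the inequality above produces $P_{\Phi^*(F)}(\beta,G)\le P_{\Psi^*(F)}(\beta',G)+\epsilon\|\Phi^{cq}_G\|^\diamond$, as required, and since $\beta,G$ and $\De$ were arbitrary this is $\Psi^*(F)\succeq^{\epsilon,\De,\Pe}\Phi^*(F)$. There is no substantial obstacle: the content is entirely in the payoff identification preceding the proposition together with the monotonicity of $\|\cdot\|^\diamond$ under composition with channels. The only points needing a little care are checking that $\Phi^*(F),\Psi^*(F)$ are genuine POVMs and ensuring that the defect produced, $\epsilon\|\Gamma\|^\diamond$, is dominated by the defect $\epsilon\|\Phi^{cq}_G\|^\diamond$ appearing in the definition of pre-processing deficiency for POVMs — which is exactly where Lemma \ref{lemma:monot} enters.
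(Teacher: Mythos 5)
Your proof is correct and follows essentially the same route as the paper: the identity $\<\Phi\circ\beta,\Phi_{F,G}\>=P_{\Phi^*(F)}(\beta,G)$ established just before the statement reduces the claim to applying the hypothesis to the EB payoff map $\Gamma=\Phi_{F,G}$, with Lemma \ref{lemma:monot} supplying $\|\Phi_{F,G}\|^\diamond\le\|\Phi^{cq}_G\|^\diamond$ exactly as in the post-processing analogue (Proposition \ref{prop:pw}). The extra checks you include (that $\Phi^*(F)$, $\Psi^*(F)$ are POVMs and that the payoff maps for a qc-channel are precisely the cq-maps $\Phi^{cq}_G$ with $G\subset\De^+$) are the right details to verify and are handled correctly.
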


Let now $E\in \povm(\Be,n)$ be \emph{informationally complete}, that is, $\Tr E_i\rho=\Tr E_i \sigma$ for all $i$ implies that $\rho=\sigma$ for any pair of states $\rho,\sigma\in \states(\Be)$, equivalently,  $\mathrm{span}(E)=\Be$.
Let $\Phi\in \Ce(\Ae,\Be)$ and $\Psi\in \Ce(\Ae',\Be)$, then $\Phi^*(E)\in \povm(\Ae,n)$ and $\Psi^*(E)\in \povm(\Ae',n)$. 
Suppose that
$\Psi^*(E)\succeq^{0,\De}\Phi^*(E)$ and let $\beta\in \Ce(\De,\Ae)$, then there is some $\beta'\in\Ce(\De,\Ae')$ such that
 $\beta^*(\Phi^*(E_i))=(\beta')^*(\Psi^*(E_i))$ for all $i$. Since $E$ is informationally complete, this implies that
$\beta^*\circ\Phi^*=(\beta')^*\circ\Psi^*$, that is, $\Phi\circ\beta=\Psi\circ\beta'$, so that we have proved:

\begin{prop} The following are equivalent.
\begin{enumerate}
\item[(i)] $\Psi\succeq^{0,\De}\Phi$.
\item[(ii)] $\Psi^*(E)\succeq^{0,\De}\Phi^*(E)$ for any $E\in \povm(\Be)$.
\item[(iii)]  $\Psi^*(E)\succeq^{0,\De}\Phi^*(E)$ for some informationally complete $E\in\povm(\Be)$.

\end{enumerate}

\end{prop}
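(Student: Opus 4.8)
The plan is to prove the cycle (i) $\Rightarrow$ (ii) $\Rightarrow$ (iii) $\Rightarrow$ (i). The first implication is immediate: it is the special case $\epsilon=0$, $F=E$ of the preceding proposition on pointwise pre-processing deficiency. For (ii) $\Rightarrow$ (iii) it suffices to observe that informationally complete POVMs on $\Be$ exist: the cone $\Be^+$ has nonempty interior in $\Be_h$, so one may choose finitely many positive operators in $\Be$ whose linear span is all of $\Be$ and then rescale them (adding a slack operator as in Remark \ref{rem:EB} to arrange $\sum_i E_i=I$) into a POVM $E$ with $\mathrm{span}(E)=\Be$; (ii), which quantifies over all $F\in\povm(\Be)$, applies in particular to this $E$.

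The content is in (iii) $\Rightarrow$ (i), and it runs parallel to the post-processing argument with experiments spanning the input algebra. First record the elementary identity $\Phi^{qc}_A\circ\gamma=\Phi^{qc}_{\gamma^*(A)}$, valid for any channel $\gamma$ whose range is the input algebra of $\Phi^{qc}_A$; it is immediate from the defining formula in Example \ref{ex:cqqc} together with the definition of the Hilbert--Schmidt adjoint. In particular, for $\beta\in\Ce(\De,\Ae)$ and $\beta'\in\Ce(\De,\Ae')$ one has $\Phi^{qc}_{\Phi^*(E)}\circ\beta=\Phi^{qc}_{(\Phi\circ\beta)^*(E)}$ and $\Phi^{qc}_{\Psi^*(E)}\circ\beta'=\Phi^{qc}_{(\Psi\circ\beta')^*(E)}$. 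Now fix $\beta\in\Ce(\De,\Ae)$. Since $\Psi^*(E)\succeq^{0,\De}\Phi^*(E)$, Theorem \ref{thm:pre} (iv), applied with $\epsilon=0$ to the qc-channels $\Phi^{qc}_{\Phi^*(E)}\in\Ce(\Ae,\diag_n)$ and $\Phi^{qc}_{\Psi^*(E)}\in\Ce(\Ae',\diag_n)$, yields some $\beta'\in\Ce(\De,\Ae')$ with $\|\Phi^{qc}_{\Phi^*(E)}\circ\beta-\Phi^{qc}_{\Psi^*(E)}\circ\beta'\|_\diamond=0$; because $\|\cdot\|_\diamond$ is a genuine norm, this gives the exact equality $\Phi^{qc}_{(\Phi\circ\beta)^*(E)}=\Phi^{qc}_{(\Psi\circ\beta')^*(E)}$, hence $(\Phi\circ\beta)^*(E_i)=(\Psi\circ\beta')^*(E_i)$ for every $i$.

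Since $E$ is informationally complete, $\mathrm{span}(E)=\Be$, so the maps $(\Phi\circ\beta)^*$ and $(\Psi\circ\beta')^*$ on $\Be$ agree on a spanning set and are therefore equal; taking adjoints, $\Phi\circ\beta=\Psi\circ\beta'$. As $\beta$ was arbitrary, Theorem \ref{thm:pre} (iv) (in the direction (iv) $\Rightarrow$ (i), with $\epsilon=0$) gives $\Psi\succeq^{0,\De}\Phi$, closing the cycle. I expect no genuine obstacle here; the one point requiring care is that $0$-deficiency in the $\|\cdot\|_\diamond$-norm forces \emph{exact} equality of the composed qc-channels rather than mere approximation, which is precisely what makes informational completeness applicable.
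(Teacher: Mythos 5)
Your proof is correct and follows essentially the same route as the paper's: the only substantive step is (iii) $\Rightarrow$ (i), obtained by forcing exact equality $\beta^*(\Phi^*(E_i))=(\beta')^*(\Psi^*(E_i))$ from the $\epsilon=0$ case of Theorem \ref{thm:pre} and then using $\mathrm{span}(E)=\Be$ to conclude $\Phi\circ\beta=\Psi\circ\beta'$. The remaining implications are handled exactly as the paper intends, via the preceding pointwise pre-processing deficiency proposition and the existence of informationally complete POVMs.
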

 In particular, for 0-deficiency it is enough to consider decision problems with EB payoff maps.

\subsubsection{Pre-processing deficiency for statistical experiments}

We finish by description of pre-processing deficiency for statistical experiments. 
This time, $\Ee$ and $\Fe$ are families of states in the same 
algebra $\Be$, but the number of elements is different, say, $\Ee=\{\rho_1,\dots,\rho_n\}$ and $\Fe=\{\sigma_1,\dots,\sigma_m\}$. We define pre-processing deficiency of  experiments as pre-processing deficiency of $\Phi^{cq}_\Ee$ and $\Phi^{cq}_{\Fe}$. Since the input algebra is $\Ae=\diag_m$, we obtain by Theorem \ref{thm:pre} and Corollary \ref{coro:cl_pre}
that $\Ee\succeq^\epsilon \Fe$ if and only if $\Phi^{cq}_\Ee\supseteq_\epsilon\Psi^{cq}_\Fe$. Since the range 
of $\Phi^{cq}_\Ee$ is the convex hull $co(\Ee)$, we obtain 

\begin{coro}  $\Ee\succeq^\epsilon \Fe$ if and only if $\sup_i\inf_{\rho\in co(\Ee)}\|\sigma_i-\rho\|_1\le 2\epsilon$.

\end{coro}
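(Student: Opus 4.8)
The plan is to peel off the statement in two stages: first invoke the reduction of pre-processing deficiency of experiments to the range-inclusion condition $\supseteq_\epsilon$ that was recorded just before the corollary, and then carry out a short convexity argument identifying the worst-case trace distance from $co(\Fe)$ to $co(\Ee)$ with the worst case over the generators $\sigma_i$ alone.

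For the first stage I would recall that $\Ee\succeq^\epsilon\Fe$ means, by definition, $\Phi^{cq}_\Ee\succeq^\epsilon\Phi^{cq}_\Fe$, where $\Phi^{cq}_\Fe\colon\diag_m\to\Be$ and $\Phi^{cq}_\Ee\colon\diag_n\to\Be$. Since the input algebra of $\Phi^{cq}_\Fe$ is the commutative algebra $\diag_m$, the last assertion of Theorem~\ref{thm:pre} gives $\Phi^{cq}_\Ee\succeq^\epsilon\Phi^{cq}_\Fe\iff\Phi^{cq}_\Ee\succeq^{\epsilon,\diag_m}\Phi^{cq}_\Fe$, and Corollary~\ref{coro:cl_pre} (applied with $\Phi=\Phi^{cq}_\Fe$, $\Psi=\Phi^{cq}_\Ee$, decision algebra $\diag_m$) turns this into $\Phi^{cq}_\Ee\supseteq_\epsilon\Phi^{cq}_\Fe$, that is, into
\[
\sup_{\sigma\in\states(\diag_m)}\ \inf_{\rho\in\states(\diag_n)}\ \bigl\|\Phi^{cq}_\Fe(\sigma)-\Phi^{cq}_\Ee(\rho)\bigr\|_1\le 2\epsilon .
\]
Since $\Phi^{cq}_\Fe$ maps $\states(\diag_m)$ onto $co(\Fe)$ and $\Phi^{cq}_\Ee$ maps $\states(\diag_n)$ onto $co(\Ee)$, this is exactly
\[
\sup_{\tau\in co(\Fe)}\ \mathrm{dist}_1\bigl(\tau,co(\Ee)\bigr)\le 2\epsilon,\qquad \mathrm{dist}_1(\tau,C):=\inf_{\rho\in C}\|\tau-\rho\|_1 .
\]

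It remains to see that the supremum over the polytope $co(\Fe)$ is attained at one of its generators. For a fixed nonempty convex set $C$, the function $\tau\mapsto\mathrm{dist}_1(\tau,C)$ is convex — it is the partial infimum over $\rho\in C$ of the jointly convex map $(\tau,\rho)\mapsto\|\tau-\rho\|_1$ — and $1$-Lipschitz, hence continuous; a convex continuous function on the compact convex set $co(\Fe)$ attains its maximum at an extreme point, and every extreme point of $co(\Fe)$ lies in $\{\sigma_1,\dots,\sigma_m\}$. Hence $\sup_{\tau\in co(\Fe)}\mathrm{dist}_1(\tau,co(\Ee))=\sup_i\mathrm{dist}_1(\sigma_i,co(\Ee))=\sup_i\inf_{\rho\in co(\Ee)}\|\sigma_i-\rho\|_1$, and substituting this into the displays above yields the asserted equivalence. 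I do not anticipate a genuine obstacle; the only point needing minor care is that the $\sigma_i$ need not all be extreme points of $co(\Fe)$, but this is harmless, since $\sup_i\le\sup_\tau$ is trivial and the reverse inequality uses only that every extreme point of $co(\Fe)$ occurs among the $\sigma_i$.
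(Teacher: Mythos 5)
Your proof is correct and follows the same route as the paper: reduce via the last assertion of Theorem \ref{thm:pre} and Corollary \ref{coro:cl_pre} to the range-inclusion condition $\Phi^{cq}_\Ee\supseteq_\epsilon\Phi^{cq}_\Fe$, then identify the two ranges with $co(\Fe)$ and $co(\Ee)$. The only addition is your explicit convexity argument showing that the supremum of $\tau\mapsto\inf_{\rho\in co(\Ee)}\|\tau-\rho\|_1$ over the polytope $co(\Fe)$ is attained among the generators $\sigma_i$, a step the paper leaves implicit.
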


\section{Final remarks and questions}

We have shown that the randomization theorem for quantum statistical experiments  fits naturally into the framework of base section norms, more precicely the diamond norm and its dual. This allowed us to extend the theory to more 
general classes of  channels, even in the case that the channels and decision rules  are not necessarily completely 
positive but  positivity is given by an admissible family of cones. This extension is purely mathematical and it is not clear how to interpret the post- and pre- processing  decision problems. However, it works quite nicely for characterizing the situation when a channel is not far from a pre- or post-processing of another channel.
Moreover,  it is clarified that  both classical and quantum randomization theorems are a consequence of 
duality of the two norms.
 We finish with a list of remarks and
questions that were left for future work.

 The norm $\|\cdot\|_{\diamond,\Pe}$ is interesting in its own right. As follows from the results of \cite{jencova2013base}, 
this is a \emph{distinguishability norm} for elements in $\Ce_\Pe(H,K)$, in the sense that the minimum Bayes error probability for symmetric hypothesis testing of $\Phi_0$ against $\Phi_1$ is given by 
$\frac12(1-\frac12\|\Phi_0-\Phi_1\|_{\diamond,\Pe})$. Here \emph{tests} are defined as afine maps 
$\Ce_\Pe(H,K)\to [0,1]$, assigning to $\Phi$ the probability $p(\Phi)$ of rejecting $\Phi_0$ if $\Phi$ is true. It can be infered that such tests are given by triples $(H_0,\rho, M)$, where $\rho\in \states(H\otimes H_0)$ and $0\le_{\tilde \Pe}
M\le_{\tilde \Pe} I$ is an element in $B(K\otimes H)$, see Remark 
\ref{rem:norms}, such that $p(\Phi)=\Tr M(\Phi\otimes id_{H_0})(\rho)$. If $\Pe=CP$, these are called \emph{1-testers} \cite{cdp2008memory} or 
\emph{PPOVMs} \cite{ziman2008PPOVM}. If $\Phi_0$ and $\Phi_1$ are in $CP\cap \Pe$, $\|\cdot\|_{\diamond,\Pe}$ gives the minimum Bayes error probability if the tests are restricted to those satisfying the above inequalities.

 In \cite{aluh1980problem}, it was shown that for pairs of qubit states, 0-deficiency is equivalent to classical 0-deficiency.
This suggests that the set of payoff functionals in condition (ii) of Theorem \ref{thm:post} can be further restricted, see also 
Remark \ref{rem:weaker}.

 For the relation of purely quantum and classical post- and pre-processing deficiency, it seems that a full equivalence should hold in Theorems \ref{thm:clpq} and \ref{thm:pre_cp_eps}.

  Another question is if one can restrict to EB payoffs for $\epsilon$-deficiency also if $\epsilon >0$. 
A closely related  question is if pointwise $\epsilon$-deficiency is equivalent to $\epsilon$-deficiency.
It seems that this should hold at least if the payoffs are restricted to EB maps, the question is if one can always write such a map in the form $\Gamma=\Phi_{G,\Ee}$ with $\Ee\subset \states(\Ae)$ and 
$\|\Gamma\|^\diamond=\|\Phi^{qc}_G\|^\diamond$, resp. $\Gamma=\Phi_{F,G}$ with $F\in \povm(\Be)$ and $\|\Gamma\|^\diamond=\|\Phi^{cq}_G\|^\diamond$. 

 One can also consider more general randomizations than post- and pre-processings. For example, the composition
$\Phi\mapsto \alpha\circ\Phi\circ\beta$ where $\alpha$ and $\beta$ are channels is also a transformation between channels.
If $\Pe=CP$, it was shown in \cite{cdp2008supermaps} that most general physical transformations that map $\Ce(H,K)$ to $\Ce(H',K')$
 are given by \emph{quantum supermaps}, defined as follows: let $H_0$ be an ancilla and let $\alpha\in \Ce(H',H\otimes H_0)$ and $\beta\in \Ce(K\otimes H_0,K')$, then the transformation has the form 
\[
\Phi\mapsto \beta\circ(\Phi\otimes id_{H_0})\circ\alpha
\]
It should be possible to prove similar results also for this kind of randomizations, using the dual pair of norms
$\|\cdot\|_{2\diamond}$ and $\|\cdot\|^{2\diamond}:=\|\cdot\|_{2\diamond}^*$, see \cite{gutoski2012measure, jencova2013base}. Even more generally, one could prove randomization theorems for quantum networks using the norms $\|\cdot\|_{n\diamond}$ and 
$\|\cdot\|^{n\diamond}$, obtaining a framework for questions like how close two networks of length $n$ can get in an appropriate norm when applied to networks of length $m$, etc.

 The methods used here work only in finite dimensions. If one of the spaces is infinite dimensional, different methods have to be developed.

\section{Appendix: Base sections and norms in ordered vector spaces}

In \cite{jencova2013base}, a family of norms was introduced in the ordered vector space $B_h(H)$ endowed with the positive cone $B(H)^+$. This family includes base norms with respect to any base of the positive cone, as well as  order unit norms with respect to any order unit in $B_h(H)$.  We now introduce similar norms for any (finite dimensional) ordered vector space with a proper cone. 
 For some basic definitions and facts we need in the sequel, see Section 1.2 in \cite{jencova2013base}, or 
\cite{rockafellar1970convex}. 

We will consider finite dimensional real ordered vector spaces $(\Ve,Q)$, with a proper cone $Q$.
We denote the  partial order in $\Ve$ by $\le$ and the dual order in $\Ve^*$ by $\le^*$.

A subset $B\subset \Ve$ will be called a \emph{base section} in $(\Ve,Q)$ if $B\cap int(Q)\ne \emptyset$ and $B=\Te\cap S$ for some linear subspace $\Te\subseteq \Ve$ and a base $S$ of $Q$. It is easy to see that then 
$B=\mathrm{span}(B)\cap S$. This gives the following characterization of base sections.

\begin{lemma}\label{lemma:app_base} Let $S$ be a base of $Q$. If $B\subset S$ contains an element of $int(Q)$, then $B$ is a base section
 if and only if  for any $t,s\ge 0$ and $b_1,b_2\in B$, $tb_1-sb_2\in S$ implies that $tb_1-sb_2\in B$.
\end{lemma}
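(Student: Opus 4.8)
The plan is to prove Lemma~\ref{lemma:app_base} by showing the two implications separately, using the observation made just before the statement that a base section $B$ satisfies $B = \mathrm{span}(B) \cap S$. First I would unwind the definitions: $S$ being a base of the proper cone $Q$ means there is a strictly positive functional $f$ on $\Ve$ (i.e. $f > 0$ on $Q \setminus \{0\}$) with $S = \{x \in Q : f(x) = 1\}$, and every nonzero $x \in Q$ is uniquely $f(x)\cdot s$ for some $s \in S$.

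\textbf{The forward implication.} Suppose $B$ is a base section, so $B = \Te \cap S$ for a linear subspace $\Te$, and hence $B = \mathrm{span}(B) \cap S$ (so without loss of generality $\Te = \mathrm{span}(B)$). Take $b_1, b_2 \in B$ and $t, s \ge 0$ with $tb_1 - sb_2 \in S$. Since $b_1, b_2 \in \mathrm{span}(B) = \Te$ and $\Te$ is a linear subspace, $tb_1 - sb_2 \in \Te$; combined with $tb_1 - sb_2 \in S$ this gives $tb_1 - sb_2 \in \Te \cap S = B$. This direction is essentially immediate.

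\textbf{The converse implication.} This is where the real work lies. Assume $B \subseteq S$ contains an interior point of $Q$ and satisfies the stated closure property: for all $t, s \ge 0$ and $b_1, b_2 \in B$, if $tb_1 - sb_2 \in S$ then $tb_1 - sb_2 \in B$. I would set $\Te := \mathrm{span}(B)$ and aim to show $B = \Te \cap S$; the inclusion $B \subseteq \Te \cap S$ is trivial, so the task is $\Te \cap S \subseteq B$. Take $x \in \Te \cap S$. Then $x = \sum_i c_i b_i$ for scalars $c_i \in \mathbb{R}$ and $b_i \in B$; split into positive and negative coefficients to write $x = p - q$ where $p = \sum_{i \in I^+} c_i b_i$ and $q = \sum_{i \in I^-} (-c_i) b_i$ are nonnegative combinations of elements of $B$. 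The key subclaim is that any nonnegative combination $\sum_j \lambda_j b_j$ with $\lambda_j \ge 0$ and $\sum_j \lambda_j > 0$, when rescaled to lie on the base $S$ (divide by $f(\sum_j \lambda_j b_j) = \sum_j \lambda_j$, using $f(b_j) = 1$), lands in $B$; this follows by induction on the number of terms from the hypothesis applied with $s = 0$ (the two-term case $\lambda_1 b_1 + \lambda_2 b_2$ rescaled is handled directly, noting one checks it lies in $S$ since $f$ of it is $1$ and it is in $Q$). Hence $p = \mu\, b_p$ and $q = \nu\, b_q$ for some $b_p, b_q \in B$ and $\mu, \nu \ge 0$ (with the degenerate cases $p = 0$ or $q = 0$ treated separately). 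Then $x = \mu b_p - \nu b_q \in S$ by assumption, so the hypothesis applies directly with $t = \mu$, $s = \nu$, $b_1 = b_p$, $b_2 = b_q$, giving $x \in B$.

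\textbf{Anticipated obstacle.} The main subtlety is the bookkeeping around degenerate cases — when $p$ or $q$ is zero, or when an intermediate nonnegative combination in the induction is zero — since then one cannot rescale onto $S$; these need to be dispatched by hand (e.g. if $q = 0$ then $x = p \in Q \cap S$ is already a positively rescaled combination of $B$-elements, so the induction alone finishes it). A secondary point requiring care is verifying at each step that the relevant element genuinely lies in $S$, not merely in $Q$: this is where one repeatedly uses that $f$ is constant equal to $1$ on $B$ and additive, so $f$ of any nonnegative combination is just the sum of coefficients, making the rescaling explicit. Once the interior-point hypothesis is invoked (only needed to guarantee $B$ is nonempty with the right dimension, i.e. that $\mathrm{span}(B) \cap \mathrm{int}(Q) \ne \emptyset$, matching the definition of base section), the argument closes.
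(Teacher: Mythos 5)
The forward implication is fine (up to the harmless point that the definition of a base section lets the base be some other base $S'$ of $Q$; since $B\subset S$, any $x=tb_1-sb_2\in S$ has $t-s=1$ and hence also has $f'$-value $1$ for the functional defining $S'$, so nothing is lost). The converse, however, has a genuine gap at its central step. Your subclaim --- that a nonnegative combination $\sum_j\lambda_j b_j$ of elements of $B$, rescaled to lie on $S$, lands in $B$ --- is precisely the assertion that $B$ is convex, and it does not follow from ``the hypothesis applied with $s=0$''. With $s=0$ the hypothesis reads: $tb_1\in S$ implies $tb_1\in B$; since $f(tb_1)=t$, the premise forces $t=1$ and the conclusion is just $b_1\in B$, which is vacuous. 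A convex combination $\lambda_1b_1+\lambda_2b_2$ has the form $tb_1-sb_2$ only with $s=-\lambda_2\le 0$, so the hypothesis never applies to it directly. Since your reduction of a general $x\in\mathrm{span}(B)\cap S$ to a single difference $\mu b_p-\nu b_q$ rests entirely on this subclaim, the converse is not proved as written. Relatedly, your closing remark that the interior-point hypothesis is only needed to match the definition is not right: without it the closure property can hold for sets that are not base sections (two extreme points of a simplex satisfy it vacuously, since every admissible ray exits $S$ immediately), so the interior point must enter the converse argument substantively.

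Here is one way to close the gap, which shows where the interior point is really used. Let $e\in B\cap int(Q)$. For $b\in B$ and small $\epsilon>0$, the point $d_b:=(1+\epsilon)e-\epsilon b$ lies in $Q$ (interiority of $e$) and has $f$-value $1$, hence lies in $S$, hence in $B$ by the closure property. The ray from $e$ away from $d_b$ is $e+s\epsilon(b-e)$, $s\ge 0$, which for $s\epsilon\in[0,1]$ stays in $S$; the closure property therefore gives $[e,b]\subseteq B$ for every $b\in B$. Now for $b_1,b_2\in B$ and $\lambda\in(0,1)$, set $c=\lambda b_1+(1-\lambda)b_2\in S$ and check that $c=x+s(x-d_{b_2})$ with $s=(1-\lambda)/\epsilon\ge 0$ and $x=e+\tfrac{\lambda\epsilon}{\epsilon+1-\lambda}(b_1-e)\in[e,b_1]\subseteq B$; one more application of the closure property yields $c\in B$. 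This establishes convexity of $B$, after which your decomposition $x=p-q$ with $p,q$ rescaled into $B$ and the final application of the hypothesis go through as you describe.
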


Since $\Te$ contains an interior point of $Q$, $\Te\cap Q$ is a proper cone in $\Te$ and it is clear that $B$ is its base. The  relative interior of $B$ is the set of order units contained in $B$,
$ri(B)=B\cap int(Q)$.

If $B$ is a base section, we define the \emph{dual} of $B$ as 
\[
\tilde B:=\{ b^*\in Q^*, \<b,b^*\>=1\}.
\]
\begin{lemma}\label{lemma:app_dual} $\tilde B$ is a base section in $(\Ve^*,Q^*)$ and $\tilde{\tilde B}=B$.

\end{lemma}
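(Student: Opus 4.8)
The plan is to reduce both claims to a single concrete description of $\tilde B$ as an affine slice of the dual cone. First I would replace $\Te$ by $\mathrm{span}(B)$, which is harmless by the remark (just before the statement) that $B=\mathrm{span}(B)\cap S$, and fix a functional $f\in int(Q^*)$ with $S=\{q\in Q:\<q,f\>=1\}$; such an $f$ exists because $S$ is a base of the proper cone $Q$ (see Section~1.2 of \cite{jencova2013base} or \cite{rockafellar1970convex}). Writing $\Te^\perp:=\{v^*\in\Ve^*:\<b,v^*\>=0\ \text{for all}\ b\in\Te\}$, the first key step is the identity $\tilde B=(f+\Te^\perp)\cap Q^*$. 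Indeed, if $b^*=f+w$ with $w\in\Te^\perp$ and $b^*\in Q^*$, then for every $b\in B\subseteq\Te\cap S$ we have $\<b,b^*\>=\<b,f\>+\<b,w\>=1+0=1$, so $b^*\in\tilde B$; conversely, if $b^*\in\tilde B$ then $\<b,b^*-f\>=1-1=0$ for all $b\in B$, so $b^*-f$ annihilates $\mathrm{span}(B)=\Te$, that is, $b^*-f\in\Te^\perp$.

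Next I would check that $\tilde B$ is a base section in $(\Ve^*,Q^*)$. Pick $b_0\in ri(B)=B\cap int(Q)$, which is nonempty since $B$ is a base section. As $b_0\in int(Q)=int(Q^{**})$, $b_0$ is strictly positive on $Q^*$, so $S':=\{q^*\in Q^*:\<b_0,q^*\>=1\}$ is a base of $Q^*$, and $\tilde B\subseteq S'$ because $b_0\in B$. Set $\Te':=\Te^\perp+\mathbb R f$; note $f\notin\Te^\perp$ because $\<b_0,f\>=1$, so $\Te'$ is a subspace containing the affine set $f+\Te^\perp$. Then $\tilde B=\Te'\cap S'$: the inclusion $\subseteq$ follows from $\tilde B\subseteq(f+\Te^\perp)\cap S'\subseteq\Te'\cap S'$, and for $\supseteq$, any $q^*=w+\lambda f\in\Te'$ with $w\in\Te^\perp$, $\lambda\in\mathbb R$, and $\<b_0,q^*\>=1$ forces $1=\<b_0,q^*\>=\<b_0,w\>+\lambda\<b_0,f\>=\lambda$, whence $q^*=w+f\in(f+\Te^\perp)\cap Q^*=\tilde B$. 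Finally $f\in\tilde B\cap int(Q^*)$, so the interior requirement for a base section holds.

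For $\tilde{\tilde B}=B$, I would use the reflexivity identifications $\Ve^{**}=\Ve$, $Q^{**}=Q$, under which $\tilde{\tilde B}=\{v\in Q:\<v,b^*\>=1\ \text{for all}\ b^*\in\tilde B\}$. The inclusion $B\subseteq\tilde{\tilde B}$ is immediate from the definition of $\tilde B$. Conversely, let $v\in Q$ with $\<v,b^*\>=1$ for all $b^*\in\tilde B$. Testing against $f\in\tilde B$ gives $\<v,f\>=1$, hence $v\in S$. For $w\in\Te^\perp$ small enough that $f+w\in Q^*$ (possible since $f\in int(Q^*)$) we have $f+w\in\tilde B$, so $\<v,w\>=\<v,f+w\>-\<v,f\>=0$; scaling shows $\<v,w\>=0$ for every $w\in\Te^\perp$, that is, $v\in(\Te^\perp)^\perp=\Te$. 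Thus $v\in\Te\cap S=B$, giving $\tilde{\tilde B}\subseteq B$.

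The only step requiring real care is the passage from ``$\<v,w\>=0$ for all $w$ near $0$ in $\Te^\perp$'' to ``$\<v,w\>=0$ for all $w\in\Te^\perp$'': this uses crucially that the functional $f$ defining the base $S$ lies in the \emph{interior} of $Q^*$, not merely in some base of $Q^*$. The supporting facts --- the representation $S=\{q\in Q:\<q,f\>=1\}$ with $f\in int(Q^*)$, the identity $int(Q)=\{v:\<v,q^*\>>0\ \text{for all}\ q^*\in Q^*\setminus\{0\}\}$, and the finite-dimensional identifications $\Ve^{**}=\Ve$, $Q^{**}=Q$, $(\Te^\perp)^\perp=\Te$ --- are all standard, and the remaining manipulations are routine linear algebra.
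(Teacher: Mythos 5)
Your proof is correct and follows essentially the same route as the paper: both rest on the affine description $\tilde B=(f+B^\perp)\cap Q^*$ with $f$ an interior point of $Q^*$, and both pin down $\tilde{\tilde B}$ by perturbing $f$ within $B^\perp$ and using $(B^\perp)^\perp=\mathrm{span}(B)$. The only (harmless) difference is that for the base-section claim you verify the definition directly via the explicit subspace $\Te'=\Te^\perp+\mathbb R f$, whereas the paper invokes the characterization in Lemma \ref{lemma:app_base}.
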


\begin{proof} Let $B=\Te\cap S$, then there is some $\tilde b\in int(Q^*)$ such that $S=\{q\in Q,\ \<q,\tilde b\>=1\}$ and it is clear  that $\tilde b\in \tilde B$. Moreover, let $b\in ri(B)$, then $b$ is an order unit and $\tilde B$ is contained in the corresponding base $S^*$ of $Q^*$. Let now $s,t\ge 0$ and $\tilde b_1,\tilde b_2\in \tilde B$ be such that $t\tilde b_1-s\tilde b_2\in S^*$. Then $t-s=1$ and we have $t\tilde b_1-s\tilde b_2\in\tilde B$. By Lemma \ref{lemma:app_base}, $\tilde B$ is a base section. 

It si clear that $B\subseteq \tilde{\tilde B}$ and  it is easy to see that $\tilde B=(\tilde b+ B^\perp)\cap Q^*$, where $B^\perp$ is the annihilator of $B$. Similarly, $\tilde{\tilde B}=(b+\tilde B^\perp)\cap Q$. Since $\tilde b\in int(Q^*)$, for every $x^*\in B^\perp$ there is some $t>0$ such that $e^*\pm tx^*\in \tilde B$. It follows that $\tilde B^\perp\subseteq (B^\perp)^\perp=\mathrm{span}(B)$, hence 
$\mathrm{span}(\tilde{\tilde B})= \mathrm{span}(B)$. But then $B$ and $\tilde{\tilde B}$ are two bases of the cone $\mathrm{span}(B)\cap Q$, so that $B=\tilde{\tilde B}$.

\end{proof}

\begin{lemma}\label{lemma:app_krein} For any  affine functional $f:B\to \mathbb R^+$ there is some $q^*\in Q^*$ such  that
$f(b)=\<b,q^*\>$ for all $b\in B$.

\end{lemma}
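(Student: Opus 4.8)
The plan is to reduce the statement to the M.~Riesz (Krein) extension theorem for positive linear functionals. Write $\Te:=\mathrm{span}(B)$ and $C:=\Te\cap Q$. Using the identity $B=\mathrm{span}(B)\cap S$ recorded just before Lemma~\ref{lemma:app_base} (so that, replacing $\Te$ by $\mathrm{span}(B)$, we still have $B=\Te\cap S$ with $S$ a base of $Q$), I would first check that $C$ is a proper cone in $\Te$ having $B$ as a base: for nonzero $c\in C$ the unique decomposition $c=tb'$, $t>0$, $b'\in S$, coming from $S$ being a base of $Q$, forces $b'=c/t\in\Te\cap S=B$, and $B$ is convex. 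I would also note that $C$ is generating, i.e.\ $C-C=\Te$; this holds because $B$ meets $int(Q)$, so $\Te$ contains an interior point of $Q$ and $C$ has nonempty interior relative to $\Te$.

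Next I would pass from the affine functional $f$ on $B$ to a positively homogeneous additive functional on $C$. Since $B$ is a base of $C$, I can define $\hat f(0)=0$ and $\hat f(tb)=tf(b)$ for $t>0$, $b\in B$. Positive homogeneity is built in; additivity is the one small computation: for $c_i=t_ib_i$ the element $b:=(t_1+t_2)^{-1}(t_1b_1+t_2b_2)$ lies in $B$ by convexity, and affinity of $f$ gives $\hat f(c_1+c_2)=(t_1+t_2)f(b)=t_1f(b_1)+t_2f(b_2)=\hat f(c_1)+\hat f(c_2)$. An additive, positively homogeneous $\mathbb R^+$-valued map on the generating cone $C$ then extends in a unique way to a linear functional $\ell$ on $\Te=C-C$ via $\ell(x-y)=\hat f(x)-\hat f(y)$; this $\ell$ is nonnegative on $C$ and restricts to $f$ on $B$.

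Finally I would invoke the extension theorem. Since $\Te$ contains an interior point $e$ of $Q$, $\Te$ is cofinal in $(\Ve,Q)$: for any $v\in\Ve$ some multiple of $e$ dominates $v$. Hence the functional $\ell$, positive on $(\Te,C)$, extends to some $q^*\in\Ve^*$ that is nonnegative on all of $Q$, i.e.\ $q^*\in Q^*$; and $\<b,q^*\>=\ell(b)=f(b)$ for every $b\in B$, which is the assertion. The only genuinely nontrivial step is this last one, but its hypothesis is exactly the defining feature of a base section, $B\cap int(Q)\neq\emptyset$; everything else is bookkeeping. Alternatively one could avoid quoting the extension theorem and argue directly in finite dimensions that the affine set $\{q^*\in\Ve^*:\,q^*|_\Te=\ell\}$ meets the closed cone $Q^*$, using a proper-separation argument together with $Q^{**}=Q$ and again the presence of an interior point of $Q$ in $\Te$; but citing the standard extension theorem keeps the proof short.
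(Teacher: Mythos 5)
Your proof is correct and follows essentially the same route as the paper's: extend $f$ to a positive linear functional on $(\mathrm{span}(B),\,\mathrm{span}(B)\cap Q)$ using that $B$ is a base of that cone, and then apply Krein's extension theorem, which is applicable because $B\cap int(Q)\neq\emptyset$. You have merely spelled out the bookkeeping (additivity of $\hat f$, cofinality of $\mathrm{span}(B)$) that the paper leaves implicit.
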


\begin{proof} Since $B$ is a base of $\mathrm{span}(B)\cap Q$, $f$ extends to a positive linear functional on $(\mathrm{span}(B), \mathrm{span}(B)\cap Q)$. As $B\cap int(Q)$ is nonempty, Krein's theorem \cite{naimark1959normed} implies that this functional extends to a positive functional on $(\Ve,Q)$, that is an element of $Q^*$.

\end{proof}

For a base section $B$, we define
\[
\Oe_B:=\{q_1-q_2, \ q_1,q_2\in Q,\ q_1+q_2\in B\}=\{x\in \Ve, \exists b\in B,\ -b\le x\le b\}
\]
\begin{thm}\label{thm:app_norm}
$\Oe_B$ is the unit ball of a norm $\|\cdot\|_B$ in $\Ve$. The dual norm in $\Ve^*$ is $\|\cdot\|_{\tilde B}$.

\end{thm}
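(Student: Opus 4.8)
The plan is to establish the two assertions in order: first that $\Oe_B$ is a compact, convex, symmetric set with $0$ in its interior (hence the closed unit ball of a norm $\|\cdot\|_B$), and then that the polar $\Oe_B^\circ$ equals $\Oe_{\tilde B}$, which is exactly the statement that the dual norm of $\|\cdot\|_B$ is $\|\cdot\|_{\tilde B}$ (the dual norm always having $\Oe_B^\circ$ as its unit ball, since $\Oe_B$ is symmetric).

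For the first part, symmetry of $\Oe_B$ is immediate from the symmetric roles of $q_1,q_2$, and convexity follows because $B$ is convex (it is $\Te\cap S$ with $\Te$ a subspace and $S$ a base, hence convex) and $Q$ is a convex cone. Since $ri(B)=B\cap int(Q)$ is nonempty, $B$ contains an order unit $e$; then $[-e,e]\subseteq\Oe_B$ and $0\in int([-e,e])$ because $e\in int(Q)$, so $0$ is an interior point of $\Oe_B$ and $\Oe_B$ is absorbing. For boundedness, the base $S$ of the proper cone $Q$ is compact, so $B\subseteq S$ is bounded, and by normality of $Q$ the order intervals $[-b,b]$, $b\in B$, are uniformly bounded; closedness of $\Oe_B$ follows by a routine compactness argument using compactness of $B$ and closedness of $Q$. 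Hence $\Oe_B$ is the closed unit ball of the gauge norm $\|x\|_B:=\inf\{t>0:x\in t\Oe_B\}$. The same reasoning applies to $\tilde B$, which is a base section by Lemma \ref{lemma:app_dual}, so $\|\cdot\|_{\tilde B}$ is a genuine norm on $\Ve^*$.

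For the second part, the inclusion $\Oe_{\tilde B}\subseteq\Oe_B^\circ$ is a one-line computation: for $\psi=\psi_1-\psi_2$ with $\psi_i\in Q^*$, $\psi_1+\psi_2=\tilde b\in\tilde B$, and $x=q_1-q_2$ with $q_i\in Q$, $q_1+q_2=b\in B$, one has $\langle x,\psi\rangle\le\langle q_1,\psi_1\rangle+\langle q_2,\psi_2\rangle\le\langle q_1+q_2,\tilde b\rangle=\langle b,\tilde b\rangle=1$. The reverse inclusion is the crux, and I would obtain it via conic duality. Using the gauge characterization together with the identity $\{tb:t\ge0,\,b\in B\}=\mathrm{span}(B)\cap Q$ (which uses that $B$ is a base section), one gets
\[
\|y\|_B=\min\{\langle e^*,c\rangle:\ c\in\mathrm{span}(B),\ c\in Q,\ c-y\in Q,\ c+y\in Q\},
\]
a conic program with strictly feasible point $c=\lambda e$ for $\lambda$ large, so strong duality holds. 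Dualizing and simplifying by means of $(\mathrm{span}(B)\cap Q)^*=\{q^*|_{\mathrm{span}(B)}:q^*\in Q^*\}$ (Krein's extension theorem \cite{naimark1959normed}, cf. Lemma \ref{lemma:app_krein}) and $\tilde B=(e^*+B^\perp)\cap Q^*$ (from the proof of Lemma \ref{lemma:app_dual}) gives
\[
\|y\|_B=\max\{\langle\mu_+-\mu_-,y\rangle:\ \nu,\mu_+,\mu_-\in Q^*,\ \nu+\mu_++\mu_-\in\tilde B\}.
\]
Finally this maximum coincides with $\sup_{\psi\in\Oe_{\tilde B}}\langle\psi,y\rangle$: given a feasible triple $(\nu,\mu_+,\mu_-)$, adding $\nu$ to $\mu_+$ when $\langle\nu,y\rangle\ge0$ and to $\mu_-$ otherwise yields $\chi_1,\chi_2\in Q^*$ with $\chi_1+\chi_2\in\tilde B$ and $\langle\chi_1-\chi_2,y\rangle\ge\langle\mu_+-\mu_-,y\rangle$, while $\nu=0$ recovers the reverse. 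Thus $\|\cdot\|_B$ is the support function of $\Oe_{\tilde B}$, and since $\Oe_B$ and $\Oe_{\tilde B}$ are closed convex sets containing $0$, this forces $\Oe_B=\Oe_{\tilde B}^\circ$, hence $\Oe_B^\circ=\Oe_{\tilde B}$ by the bipolar theorem; consequently the dual norm of $\|\cdot\|_B$ is $\|\cdot\|_{\tilde B}$.

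The main obstacle is exactly the reverse inclusion $\Oe_B^\circ\subseteq\Oe_{\tilde B}$, i.e. the strong-duality step. Naive element-chasing fails here because $\mathrm{span}(B)$ is only a subspace, not a face of $Q$, so from $q_1,q_2\in Q$ with $q_1+q_2\in\mathrm{span}(B)$ one cannot conclude $q_1,q_2\in\mathrm{span}(B)$; the slack variable $\nu\in Q^*$ dual to the constraint $c\in\mathrm{span}(B)\cap Q$ genuinely appears, and the identification with $\Oe_{\tilde B}$ works only thanks to the sign-case trick that reassigns $\nu$ to $\mu_+$ or $\mu_-$. One could instead try to separate the translated set $\{\beta:e^*+\beta\in Q^*,\ e^*+\beta\pm\phi\in Q^*\}$ from $B^\perp$ by a hyperplane, but this approach stalls because two disjoint closed convex sets need not be strictly separated when neither is compact; the conic-duality route is cleaner precisely because Slater's condition furnishes dual attainment automatically.
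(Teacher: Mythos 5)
Your proof is correct, and for the nontrivial inclusion $\Oe_B^\circ\subseteq\Oe_{\tilde B}$ it takes a genuinely different route from the paper. The paper works element by element: given $v^*\in\Oe_B^\circ$, it passes to the doubled space $\Ve\oplus\Ve$ with cone $Q\oplus Q$, shows that $B_2=\{q_1\oplus q_2:\,q_i\in Q,\ q_1+q_2\in B\}$ is again a base section, observes that $(\tilde b-v^*)\oplus(\tilde b+v^*)$ is a positive affine functional on $B_2$, and applies Lemma \ref{lemma:app_krein} there to produce $y^*\in\Te^\perp$ with $\pm v^*\le^*\tilde b+y^*\in\tilde B$, which exhibits $v^*$ as an element of $\Oe_{\tilde B}$. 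You instead compute the gauge of $\Oe_B$ as the value of a conic program over $c\in\mathrm{span}(B)\cap Q$ with $c\pm y\in Q$, verify generalized Slater via $c=\lambda e$, and identify the dual optimal value with the support function of $\Oe_{\tilde B}$ (the reassignment of the multiplier $\nu$ to $\mu_+$ or $\mu_-$ according to the sign of $\langle\nu,y\rangle$ is exactly the right observation, and the facts you need --- that the cone generated by $B$ is $\mathrm{span}(B)\cap Q$, and that $(\mathrm{span}(B)\cap Q)^*=Q^*+B^\perp$ without closure because $\mathrm{span}(B)$ meets $int(Q)$ --- all hold). The two arguments are duality statements of the same depth: the paper's Krein extension step in the doubled space plays precisely the role of your strong-duality/dual-attainment step. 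What your version buys is the explicit identity $\|y\|_B=\sup_{\psi\in\Oe_{\tilde B}}\langle\psi,y\rangle$ as a by-product (essentially Corollary \ref{coro:app_norm}), at the cost of importing strong conic duality as a black box; the paper's version is more self-contained, resting only on its own Lemma \ref{lemma:app_krein}. Your first part (compact, convex, symmetric, absorbing, hence a closed unit ball) matches what the paper delegates to Theorem 1 of \cite{jencova2013base}.
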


\begin{proof} The proof is similar to that of Theorem 1 in \cite{jencova2013base}, the first part is the same.
For the second part, we have to prove that $\Oe_{\tilde B}=\Oe_B^\circ$. 
So let $v^*=q_1^*-q_2^*$, $q_i^*\in Q^*$, $q_1^*+q_2^*\in \tilde B$ and let $-b\le x\le b$ for some $b\in B$, then
\[
\<x,v^*\>=\<x,q_1^*\>-\<x,q_2^*\>\le\<b,q_1^*+q_2^*\>=1,
\]
so that $\Oe_{\tilde B}\subseteq \Oe_B^\circ$. For the opposite inclusion, let $\Ve_2=\Ve\oplus\Ve$, with proper cone  $Q_2=Q\oplus Q$. Suppose
$B=S\cap \Te$ and let $b\in ri(B)$, $\tilde b\in ri(\tilde B)$. Let 
\[
B_2=\{q_1\oplus q_2\in Q_2, \ q_1+q_2\in B\},
\]
then $\tfrac12 (b\oplus b)\in B_2\cap int(Q_2)$ and  $B_2=S_2\cap \Te_2$, where $S_2$ is the base of $Q_2$ corresponding to 
$\tilde b\oplus\tilde b\in int(Q^*)$ and 
$\Te_2=\{x\oplus y, x+y\in \Te\}$. Hence $B_2$ is a base section. It is clear that if $v^*\in \Oe_B^\circ$, then 
$\<w,v^*\oplus -v^*\>\le 1=\<w,\tilde b\oplus\tilde b\>$ for all $w\in B_2$, so that the functional 
$a^*:=(\tilde b-v^*)\oplus(\tilde b+v^*)$ is positive on $B_2$. By Lemma \ref{lemma:app_krein}, there is some element in $Q^*_2$ that coincides with $a^*$ on $B_2$, it follows that there is some $x^*\in B_2^\perp$ such that $a^*+x^*\in Q_2^*$. It is not difficult to show that any element $x^*\in \Te_2^\perp$ has the form $x^*=y^*\oplus y^*$ for some $y^*\in\Te^\perp$. We obtain that $\pm v^*\le^* \tilde b+y^*$, so that sor any $q\in Q$, $ \pm\<q,v^*\>\le\<q,\tilde b+y^*\>$. This implies that we must have 
$\tilde b+y^*\in (\tilde b+\Te^\perp)\cap Q^*=\tilde B$. Hence $v^*\in \Oe_{\tilde B}$.

\end{proof}

We next observe that the norm $\|\cdot\|_B$ is a generalization of base norms and order unit norms. In fact, if $B=S$ is a base of $Q$, then $\|\cdot\|_S$ is the corresponding base norm. On the other hand, if $B=\{b\}$ for some order unit $b$, $\|\cdot\|_B=\|\cdot\|_b$ is the order unit norm.

\begin{coro}\label{coro:app_norm} 
\begin{enumerate}
\item[(i)] If $x\in \Ve$, then
$\|x\|_B=\inf_{B\in ri(B)}\|x\|_b=\sup_{\tilde b\in ri(\tilde B)}\|x\|_{S_{\tilde b}}$,
 where $S_{\tilde b}$ is the base of $Q$ corresponding to $\tilde b$. 
\item[(ii)] If $x\in Q$, then
$\|x\|_B=\sup_{\tilde b\in \tilde B} \<x,\tilde b\>$.
\item[(iii)] If $b,b'\in B$, then $\sup_{q^*\in \Oe_{\tilde B}\cap Q^*}\<b-b',q^*\>=\tfrac12\|b-b'\|_B$.
\end{enumerate}

\end{coro}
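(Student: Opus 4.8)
The plan is to derive all three parts from the duality established in Theorem~\ref{thm:app_norm}, namely $\Oe_B^\circ=\Oe_{\tilde B}$, i.e. $\|x\|_B=\sup_{v^*\in\Oe_{\tilde B}}\langle x,v^*\rangle$, together with two elementary observations. First, the explicit description $\Oe_{\tilde B}=\{q_1^*-q_2^*:q_i^*\in Q^*,\ q_1^*+q_2^*\in\tilde B\}=\bigcup_{\tilde b\in\tilde B}\{v^*:-\tilde b\le^* v^*\le^*\tilde b\}$. Second, that every $b\in B$ pairs to $1$ with every $b^*\in\tilde B$; this is exactly the identification $\tilde B=\{q^*\in Q^*:\langle b,q^*\rangle=1\ \forall b\in B\}$ noted in the proof of Lemma~\ref{lemma:app_dual}. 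Throughout I would also use that bases of proper cones are compact in finite dimension, so that $B$, $\tilde B$ and $\Oe_{\tilde B}\cap Q^*$ are compact and all the suprema below are attained.

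Part (ii) is then immediate: for $x\in Q$ and $v^*=q_1^*-q_2^*\in\Oe_{\tilde B}$ with $\tilde b:=q_1^*+q_2^*\in\tilde B$, positivity forces $\langle x,q_2^*\rangle\ge 0$, so $\langle x,v^*\rangle\le\langle x,\tilde b\rangle$; taking suprema gives $\|x\|_B\le\sup_{\tilde b\in\tilde B}\langle x,\tilde b\rangle$, and the reverse holds since $\tilde b=\tilde b-0\in\Oe_{\tilde B}$ for each $\tilde b\in\tilde B$. For part (iii), set $x:=b-b'$ with $b,b'\in B$, so $\langle x,\tilde b\rangle=0$ for all $\tilde b\in\tilde B$. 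Given $v^*=q_1^*-q_2^*\in\Oe_{\tilde B}$ with $\tilde b:=q_1^*+q_2^*$, the relation $\langle x,\tilde b\rangle=0$ yields $\langle x,v^*\rangle=2\langle x,q_1^*\rangle$; since $q_1^*\in Q^*$ and $q_1^*\le^*\tilde b\in\tilde B$ we get $q_1^*\in\Oe_{\tilde B}\cap Q^*$, hence $\|x\|_B\le 2\sup_{q^*\in\Oe_{\tilde B}\cap Q^*}\langle x,q^*\rangle$. Conversely, any $q^*\in\Oe_{\tilde B}\cap Q^*$ satisfies $q^*\le^*\tilde b$ for some $\tilde b\in\tilde B$, and then $v^*:=q^*-(\tilde b-q^*)\in\Oe_{\tilde B}$ with $\langle x,v^*\rangle=2\langle x,q^*\rangle$, giving $2\langle x,q^*\rangle\le\|x\|_B$. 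This proves (iii).

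Part (i) I would do in two steps. For the infimum formula, $\|x\|_B\le\|x\|_{b_0}$ for any fixed $b_0\in ri(B)$ is clear from $\|x\|_B=\inf\{\lambda>0:\exists b\in B,\ -\lambda b\le x\le\lambda b\}$. For the reverse, take $\lambda>\|x\|_B$, $b\in B$ with $-\lambda b\le x\le\lambda b$, fix $b_1\in ri(B)$, and put $b_\delta:=(1+\delta)^{-1}(b+\delta b_1)$; convexity of the base section keeps $b_\delta\in B$, and $b_1\in\mathrm{int}(Q)$ forces $b_\delta\in\mathrm{int}(Q)$, so $b_\delta\in ri(B)$; from $-\lambda(1+\delta)b_\delta=-\lambda b-\lambda\delta b_1\le x$ and symmetrically, $\|x\|_{b_\delta}\le\lambda(1+\delta)$, and letting $\delta\to0$ then $\lambda\to\|x\|_B$ gives $\inf_{b'\in ri(B)}\|x\|_{b'}\le\|x\|_B$. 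Applying this formula in $(\Ve^*,\tilde B)$ gives $\|v^*\|_{\tilde B}=\inf_{\tilde b\in ri(\tilde B)}\|v^*\|_{\tilde b}$, equivalently $\Oe_{\tilde B}=\overline{\bigcup_{\tilde b\in ri(\tilde B)}\{v^*:-\tilde b\le^* v^*\le^*\tilde b\}}$. Dualizing, and using that the order unit norm $\|\cdot\|_{\tilde b}$ on $\Ve^*$ is dual to the base norm $\|\cdot\|_{S_{\tilde b}}$ on $\Ve$, together with continuity of $\langle x,\cdot\rangle$ to replace the union by its closure, yields $\|x\|_B=\sup_{\tilde b\in ri(\tilde B)}\sup_{-\tilde b\le^* v^*\le^*\tilde b}\langle x,v^*\rangle=\sup_{\tilde b\in ri(\tilde B)}\|x\|_{S_{\tilde b}}$.

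The one genuinely delicate point, and the step I expect to need the most care, is the passage between a base section and its relative interior in part (i): one must check that the convex combinations $b_\delta$ really remain in $B$ (this uses $B=\mathrm{span}(B)\cap S$ and convexity of $S$) and are order units, and that dualizing an infimum of norms produces the supremum of the dual norms — which is clean here only because the first half of (i) already identifies that infimum with the genuine norm $\|\cdot\|_{\tilde B}$. Everything else is a direct unwinding of definitions, and the argument parallels the proof of the analogous statement in \cite{jencova2013base}.
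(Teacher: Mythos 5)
Your argument is correct and follows essentially the same route as the paper: part (iii) is the paper's own computation (your $v^*=2q^*-\tilde b$ is exactly the paper's $p^*$), and parts (i)–(ii), which the paper defers to \cite{jencova2013base}, are filled in by you with the standard perturbation-into-the-relative-interior and duality arguments that that reference uses. No gaps.
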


\begin{proof} (i) and (ii) are proved exactly as \cite{jencova2013base}. 
Let $b,b'\in B$ and suppose $q^*\in \Oe_{\tilde B}\cap Q^*$. Then $0\le^* q^*\le^* \tilde b$ for
 some  $\tilde b\in \tilde B$. Put $p^*=2(q^*-\tfrac12 \tilde b)$, then $p^*\in \mathcal O_{\tilde B}$ and 
$q^*=\tfrac12(p^*-\tilde b)$. Since $\<b-b', \tilde b\>=0$, we obtain
\[
\<b-b',q^*\>=\frac12 \<b-b',p^*\>\le \frac12\|b-b'\|_B.
\]
Conversely, let $p^*\in \mathcal O_{\tilde B}$, then there is some $\tilde b\in \tilde B$ such that
$q^*=\tfrac12(p^*+\tilde b)\in \mathcal O_{\tilde B}\cap Q^*$, so that 
\[
\frac12 \<b-b',p^*\>=\<b-b',q^*\>\le 
\sup_{\tilde q\in \mathcal O^{\tilde B}\cap Q^*}\<b-b',\tilde q\>.
\]
This proves (iii).

\end{proof}

Let $(\Ve_1,Q_1)$ and $(\Ve_2,Q_2)$ be two ordered vector spaces and  let $B_i\subset Q_i$ be base sections. Let $T:\Ve_1\to \Ve_2$ be a linear map. Then $T$ is \emph{positive}  if 
$T(Q_1)\subseteq Q_2$. It is clear that $T$ is positive if and only if its adjoint $T^*$ is positive and that  $T(B_1)\subseteq B_2$ if and only if $T^*(\tilde B_2)\subseteq \tilde B_1$. Let $\|T\|_{B_1,B_2}$ be the norm of $T$ with respect to the norms $\|\cdot\|_{B_i}$ in $\Ve_i$. 

\begin{prop}\label{prop:app_maps} Let $T:\Ve_1\to \Ve_2$ be a positive linear map. Then
$\|T\|_{B_1,B_2}=\sup_{b\in B_1}\|T(b)\|_{B_2}$.

\end{prop}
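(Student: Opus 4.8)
The plan is to prove the two inequalities separately. The inequality ``$\ge$'' is immediate: every $b\in B_1$ lies in $Q_1$, so $-b\le b\le b$ and hence $b\in\Oe_{B_1}$, i.e. $\|b\|_{B_1}\le 1$. Since $\|T\|_{B_1,B_2}=\sup_{x\in\Oe_{B_1}}\|T(x)\|_{B_2}$ (in finite dimensions $\Oe_{B_1}$ is the compact unit ball of $\|\cdot\|_{B_1}$, so no approximation issue arises), this gives $\sup_{b\in B_1}\|T(b)\|_{B_2}\le\|T\|_{B_1,B_2}$.

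For ``$\le$'' I would take an arbitrary $x\in\Oe_{B_1}$ and use the description of the ball to write $x=q_1-q_2$ with $q_1,q_2\in Q_1$ and $q_1+q_2=:b\in B_1$. Positivity of $T$ gives $T(q_1),T(q_2)\in Q_2$ with $T(q_1)+T(q_2)=T(b)$. Thus it suffices to prove the auxiliary monotonicity statement: whenever $r_1,r_2\in Q_2$, one has $\|r_1-r_2\|_{B_2}\le\|r_1+r_2\|_{B_2}$. Applying it with $r_i=T(q_i)$ yields $\|T(x)\|_{B_2}\le\|T(b)\|_{B_2}\le\sup_{b'\in B_1}\|T(b')\|_{B_2}$, and taking the supremum over $x\in\Oe_{B_1}$ finishes the proof.

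The auxiliary statement is where the actual work sits, and I would prove it by duality. By Theorem~\ref{thm:app_norm} and Lemma~\ref{lemma:app_dual}, $\|\cdot\|_{B_2}$ is the norm dual to $\|\cdot\|_{\tilde B_2}$, whose unit ball is $\Oe_{\tilde B_2}$; hence $\|r_1-r_2\|_{B_2}=\sup_{q^*\in\Oe_{\tilde B_2}}\<r_1-r_2,q^*\>$. Fix such a $q^*$ and write $q^*=q_1^*-q_2^*$ with $q_1^*,q_2^*\in Q_2^*$ and $q_1^*+q_2^*=:\tilde b\in\tilde B_2$. Since all pairings of elements of $Q_2$ with elements of $Q_2^*$ are nonnegative, dropping the cross terms $\<r_1,q_2^*\>$ and $\<r_2,q_1^*\>$ and then adding them back in the second step gives
\[
\<r_1-r_2,q^*\>\le \<r_1,q_1^*\>+\<r_2,q_2^*\>\le \<r_1+r_2,\tilde b\>\le \|r_1+r_2\|_{B_2},
\]
where the last inequality is Corollary~\ref{coro:app_norm}(ii) applied to $r_1+r_2\in Q_2$. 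Taking the supremum over $q^*\in\Oe_{\tilde B_2}$ proves the auxiliary statement.

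The main (and essentially only) obstacle is recognizing the right reformulation: the operator norm is controlled over the full order-symmetric ball $\Oe_{B_1}$, but positivity of $T$ lets one split any element of it into a ``positive'' and a ``negative'' part whose sum already lies in $B_1$, reducing the problem to the monotonicity estimate $\|r_1-r_2\|_{B_2}\le\|r_1+r_2\|_{B_2}$ for $r_1,r_2\ge 0$, which is then the short duality computation above. Everything else is bookkeeping with the two equivalent descriptions of $\Oe_B$ given in the Appendix.
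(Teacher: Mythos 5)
Your proof is correct, and its skeleton is the same as the paper's: positivity of $T$ is used to dominate $T(x)$, for $x$ in the unit ball $\Oe_{B_1}$, by $T(b)$ for a single $b\in B_1$, after which only a monotonicity property of $\|\cdot\|_{B_2}$ is needed. The difference is in how that monotonicity is obtained. The paper works with the order-interval description $\Oe_{B_2}=\{y\in\Ve_2:\exists b'\in B_2,\ -b'\le y\le b'\}$: from $-b\le x\le b$ one gets $-T(b)\le T(x)\le T(b)$, and if $\|T(b)\|_{B_2}\le s$ then $-sb''\le T(b)\le sb''$ for some $b''\in B_2$, so transitivity of the order places $T(x)$ in $s\,\Oe_{B_2}$ at once --- no duality enters. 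You instead use the sum decomposition $x=q_1-q_2$ with $q_1+q_2\in B_1$ and reduce to the auxiliary inequality $\|r_1-r_2\|_{B_2}\le\|r_1+r_2\|_{B_2}$ for $r_1,r_2\in Q_2$, which you prove on the dual side via $\Oe_{\tilde B_2}$ and Corollary~\ref{coro:app_norm}(ii). That auxiliary inequality is exactly equivalent to the sandwich monotonicity (set $c=r_1+r_2$, $y=r_1-r_2$), and your dual computation is valid, so nothing is wrong; it is simply a longer route to the same lemma. The duality argument does have the minor virtue of exhibiting the monotonicity as a self-dual property of these norms, but for the proposition itself the primal two-line argument suffices.
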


\begin{proof} Let  the supremum on the RHS be equal to $s$. Let $\|x\|_{B_1}\le 1$, so that there is some $b\in B_1$ such that $-b\le x\le b$.
Then $-T(b)\le T(x)\le T(b)$ and $\|T(b)\|_{B_2}\le s$. It follows that $\|T(x)\|_{B_2}\le s\le \sup_{\|x\|_{B_1}\le 1}\|T(x)\|_{B_2}=\|T\|_{B_1,B_2}$.

\end{proof}

\section*{Acknowledgement}  
This work was supported by the grants  VEGA 2/0125/13 and by Science and Technology Assistance Agency under the contract no. APVV-0178-11.


\end{document}